\theoremstyle{plain}
\newtheorem{theorem}{Theorem}%[section]
\newtheorem{lemma}[theorem]{Lemma}
\newtheorem{proposition}[theorem]{Proposition}
\newtheorem{definition}{Definition}
\theoremstyle{remark}
\newtheorem{remark}[theorem]{Remark}
\title[On the energy transfer]{On the energy transfer towards large values of wavenumbers for solutions of 4-wave kinetic equations}
\author[G. Staffilani]{Gigliola Staffilani
}
\address{Department of Mathematics, Massachusetts Institute of Technology, Cambridge, MA 02139, USA}
\email{gigliola@math.mit.edu} 
\thanks{G.S. is  funded in part by  the NSF grants DMS-2052651, DMS-2306378 and the Simons Foundation through the Simons Collaboration on Wave Turbulence.}
\author[M.-B. Tran]{Minh-Binh Tran}
\address{Department of Mathematics, Texas A\&M University, College Station, TX 77843, USA}
\email{minhbinh@tamu.edu} 
\thanks{M.-B. T is  funded in part by  a   Humboldt Fellowship,   NSF CAREER  DMS-2044626/DMS-2303146, and NSF Grants DMS-2204795, DMS-2305523,  DMS-2306379.}
\begin{document}
\date{\today}

\begin{abstract} 
Inspired by the fundamental work of Escobedo and Velazquez \cite{EscobedoVelazquez:2015:FTB,EscobedoVelazquez:2015:OTT}, we prove that solutions of 4-wave kinetic equations, under very general forms of the dispersion relations, exhibit the transfer of energy towards large values of wavenumbers as time evolves. We also provide a global well-posedness theory for mild solutions of the equation under   general forms of the dispersion relations.
\end{abstract}

\maketitle

 \tableofcontents
 
\section{Introduction}\label{intro} 
Wave turbulence  describes the dynamics   of spectral energy transfer through  probability densities in weakly non-linear   wave systems. Those probability densities are solutions of  several types of wave kinetic equations.
The foundation of this theory started  with   works of  Peierls \cite{Peierls:1993:BRK,Peierls:1960:QTS}, Brout-Prigogine \cite{brout1956statistical},  Zaslavskii-Sagdeev \cite{zaslavskii1967limits}, Hasselmann \cite{hasselmann1962non,hasselmann1974spectral},  Benney-Saffman-Newell \cite{benney1969random,benney1966nonlinear},  Zakharov \cite{zakharov2012kolmogorov}. We refer to the books   \cite{zakharov2012kolmogorov,Nazarenko:2011:WT} for more discussions on the topics.

We consider the 4-wave kinetic equation 
\begin{equation}\label{4wave}
	\begin{aligned}
		\partial_t f\ = \ & 	\mathcal Q \left[ f\right],\ \ \ f(0,k)=f_0(k),\ \ \ k\in\mathbb{R}^3,\ \ \ t\in [0,\infty),\\ 
		\mathcal Q \left[ f\right]  
		\ =\ & \iiint_{\mathbb{R}^{3\times3}}\mathrm{d}k_1\,\mathrm{d}k_2\,\mathrm{d}k_3  \delta(k+k_1-k_2-k_3)\delta(\omega + \omega_1 -\omega_2 - \omega_3)[f_2f_3(f_1+f)-ff_1(f_2+f_3)] ,
	\end{aligned}
\end{equation}
where $f,f_1,f_2,f_3$ stand for $f(k), f(k_1), f(k_2), f(k_3)$, $\omega,\omega_1,\omega_2,\omega_3$ stand for $\omega(k), \omega(k_1), \omega(k_2), \omega(k_3)$.

In the pioneering and ground breaking works of Escobedo-Velazquez \cite{EscobedoVelazquez:2015:FTB,EscobedoVelazquez:2015:OTT}, numerous  fundamental results have been obtained for the case $\omega(k)=|k|^2$.

The current manuscript is the first  in our series of 2 papers that are devoted to the  study of equation \eqref{4wave}, when  the dispersion relation $\omega$ is allowed to take a very general form.  Since the dispersion relation $\omega$ is of very general form instead of the classical one  $\omega(k)=|k|^2$, and hence  significantly harder to treat. In this paper, our first result  is to provide a global well-posedness theory for mild solutions of the equation. Then, we prove that the energy of the solutions go to high wavenumbers as time evolves. To be more precise, we prove that

	\begin{equation}\label{Intro:1}\begin{aligned}
		&\lim_{t\to\infty}\int_{\mathbb{R}^3}	 \mathrm dk f\left(t,k\right)   \omega(k) \chi_{B(O,\mathfrak R)}(k)
		\ = \ 0,\end{aligned}
\end{equation}
for any $\mathfrak R>0$. The proof of \eqref{Intro:1} is based on several new ideas. First, we need to study the collisional region of the collision operator $\mathcal Q$ as time evolves. The study of the geometries of collisional regions  is done via a packing and covering technique.  Next, when $\omega(k)=|k|^2$, the conservation laws
$ k+k_1=k_2+k_3,   |k|^2+|k_1|^2=|k_2|^2+|k_3|^2$
play a very important role. This essentially means that $k$, $k_1,$ $k_2,$ $k_3$ are on the sphere centered at $\frac{k+k_1}{2}$ with radius $\frac{|k-k_1|}{2}$. The  collision operator can be interpreted  as integrals on spheres.  When $\omega$ is allowed to take the general form, the  forms of the dispersion relations make the collision operators become integrals on much more complicated resonance manifolds. We then develop further the parametrization  technique introduced in \cite{ToanBinh}  to tackle the problem. Moreover, for general types of dispersion relations, quantifying the collision region, which is the domain where the collisions happen, is a difficult problem. In this work, we follow the strategies used in the packing and covering problem (see, for instance \cite{bourgain1991convering,erdHos1964amount,fejes1985stable,toth2014regular,rogers1957note,rogers1963covering,rogers1964book}) to study this technical issue. The transfer of energy can be proved via the combination of the parametrization technique, a packing and covering lemma  and the choices of different test functions. 

In the second paper of our series  \cite{StaffilaniTranCascade2}, we will study the question of finite time condensations i.e. the formation of a delta function at the origin in finite time.

We now briefly summarize what is known about the wave kinetic equation \eqref{4wave} and related ones. The local wellposedness theory of \eqref{4wave} has been studied in \cite{germain2017optimal}. The stability and cascades for the Kolmogorov-Zakharov spectrum  and the near equilibrium stability, instability have been  investigated in  \cite{collot2024stability}  and \cite{menegaki20222,escobedo2024instability}. The convergence rates of discrete solutions and the local well-posedness for the MMT model      have been obtained in \cite{dolce2024convergence}  and \cite{germain2023local}. 
Besides the 4-wave kinetic equations,    the 3-wave kinetic equation has also  been extensively studied. We mention     the works \cite{AlonsoGambaBinh,CraciunBinh,EscobedoBinh,GambaSmithBinh,tran2020reaction} on the phonon interactions in anharmonic crystal lattices, the works \cite{nguyen2017quantum,soffer2020energy,walton2022deep,walton2023numerical,walton2024numerical} on capillary waves, the work     \cite{GambaSmithBinh} on stratified  flows in the ocean, the works \cite{cortes2020system,EPV,escobedo2023linearized1,escobedo2023linearized,nguyen2017quantum,soffer2018dynamics,PomeauBinh} on Bose-Einstein Condensates, and the work \cite{rumpf2021wave} on beam waves. Important progress has been obtained in rigorously justifying the derivation of wave kinetic equations     \cite{buckmaster2019onthe,buckmaster2019onset,collot2020derivation,collot2019derivation,deng2019derivation,deng2021propagation,deng2023long,deng2022wave,dymov2019formal,dymov2019formal2,dymov2020zakharov,dymov2021large,germain2024universality,grande2024rigorous,hani2023inhomogeneous,hannani2022wave,LukkarinenSpohn:WNS:2011,staffilani2021wave,ma2022almost}.

{\bf Acknowledgment} We would like to express our gratitude to Prof. J. J.-L. Velazquez, Prof. H. Spohn, Prof. J. Lukkarinen, Prof. M. Escobedo for fruitful discussions on the topics.

\section{The Settings}
 For a vector $k\in\mathbb{R}^3$, suppose that $\omega(k)=\omega(|k|)$ is convex in $|k|$.  We define
\begin{equation}
	\label{Settings1}\mho=\frac{|k|}{\omega'(|k|)}.
\end{equation}  
We suppose that $\mho$ is a function of the variables $\omega$ and $|k|$ from $[0,\infty)$ to $[0,\infty)$. In addition, 
  $\mho$ is  increasing in both $|k|$ and $\omega$
	and  
	\begin{equation}
		\label{Settings2} \mho\le {C}_\mho^1 |k|^\iota, \mbox{ for all } k, \end{equation}   where ${C}_\mho^1\ge0; 1\ge\iota\ge 0$ are constants independent of $k$; and
	\begin{equation}
		\label{Settings3}\begin{aligned}
		&\omega(|k|) \ge C_\omega|k|^\alpha, \forall k\in\mathbb{R}^d,\\
		&\omega(|k|) \le C_\omega'|k|^{\alpha'}, \forall k\in\mathbb{R}^d,|k|<1, 
		\end{aligned}
	\end{equation}  
	for some constants $2\ge\alpha >1,\alpha\ge \alpha'> 1,C_\omega,C_\omega'>0$.
 
\begin{remark}
Below are some physical examples of $\omega$ that satisfie the above assumptions: $\omega(k)=|k|^\alpha$ for $1<\alpha\le 2$. We have $\mho=\frac1\alpha|k|^{2-\alpha}$.\end{remark}
\begin{remark} The case when $\omega(k)=|k|^\alpha$ with $2<\alpha$ is  not covered by our theorem since the solutions exhibit a different behavior. We will investigate this behavior in a different work.
\end{remark}

\begin{definition}
	We say that $f(t,k)=f(t,|k|)\in C^1([0,\infty), L^1(\mathbb{R}^3))$ is a global mild radial solution of \eqref{4wave} with a radial initial condition $f_0(k)=f_0(|k|) \ge0$ if $f(t,k)\ge0$ and
	for all $\phi\in C_c^2([0,\infty))$, we have
	\begin{equation}\label{4wavemild}
		\begin{aligned}
			\int_{\mathbb{R}^3}\mathrm{d}k f(t,k)\phi(|k|)\ = \ & 	\int_{\mathbb{R}^3}\mathrm{d}k f(0,k)\phi(|k|) \  + \ \int_0^t\int_{\mathbb{R}^3}\mathrm{d}k	\mathcal Q \left[ f\right]\phi(|k|),
		\end{aligned}
	\end{equation}
	for all $t\in\mathbb{R}_+$. 
\end{definition}

Our main theorem is stated as  follows.
\begin{theorem}
	\label{Theorem:Main} Suppose that $\omega$ satisfies the above assumptions. 
	 Let $f_0(k)=f_0(|k|)\ge 0$ be an initial condition satisfying 
	\begin{equation}
		\label{Theorem:Main:1} \int_{\mathbb{R}^3}\mathrm{d}k f_0(k) \ = \ \mathscr{M}, \ \ \ \ \ \int_{\mathbb{R}^3}\mathrm{d}k f_0(k) \omega(k) \ = \ \mathscr{E}. 
	\end{equation}
	There exists at least a global mild radial solution $f(t,k)$ of \eqref{4wave} in the sense of \eqref{4wavemild} such that
	\begin{equation}
		\label{Theorem:Main:2} \int_{\mathbb{R}^3}\mathrm{d}k f(t,k) \ = \ \mathscr{M},   \ \ \ \ \ \int_{\mathbb{R}^3}\mathrm{d}k f(t,k) \omega(k) \ = \ \mathscr{E},
	\end{equation}
	for all $t\ge 0$. Suppose further that 
	\begin{itemize}
		\item [(I)] When $\omega(k)=|k|^2$, the support of $f_0$ has a non-empty interior;
		\item [(II)] When $\omega(k)\ne|k|^2$,   the origin belongs to the interior of the support of $f_0$.
	\end{itemize}
	For any $\mathfrak R>0$, we have the following energy cascade phenomenon  
	\begin{equation}\label{Theorem:Main:3}\begin{aligned}
			&\lim_{t\to\infty}\int_{\mathbb{R}^3}	  d\omega f\left(t,k\right)   \omega(k) \chi_{B(O,\mathfrak R)}(k)
			\ = \ 0.\end{aligned}
	\end{equation}

\end{theorem}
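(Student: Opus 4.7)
My plan proceeds in two stages, matching the two distinct claims in Theorem \ref{Theorem:Main}.

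For the existence claim, I would construct approximate solutions by simultaneously mollifying the product of $\delta$-distributions in $\mathcal{Q}$ with a positive approximation of identity and truncating the cubic terms to a compact annulus $\varepsilon \le |k| \le \varepsilon^{-1}$, so that the regularized operator $\mathcal{Q}_\varepsilon$ is bounded and Lipschitz on $L^1_\omega$. A Banach fixed-point argument yields radial solutions $f_\varepsilon \ge 0$ locally in time; the truncated problem inherits the symmetries that formally conserve mass and energy, and these a priori bounds extend the solutions globally. The identity $\|f_\varepsilon(t,\cdot)\|_{L^1}=\mathscr{M}$ together with $\|f_\varepsilon(t,\cdot)\omega\|_{L^1}=\mathscr{E}$, combined with the lower bound $\omega(k)\ge C_\omega|k|^\alpha$ from \eqref{Settings3}, provides tightness and equi-integrability. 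Equicontinuity in $t$ follows from an $L^1$-estimate on $\mathcal{Q}_\varepsilon$ that uses the weak formulation against constants. A Dunford-Pettis/Aubin-Lions type argument produces a limit $f$ along which \eqref{4wavemild} closes, and conservation of mass and energy transfer to $f$ provided one verifies that no energy escapes to infinity in the limit, which again follows from the coercivity of $\omega$.

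For the cascade assertion \eqref{Theorem:Main:3}, I would argue by contradiction. Assume there exist $\mathfrak{R}>0$, $\delta>0$, and $t_n\to\infty$ with $\int_{B(O,\mathfrak{R})} f(t_n,k)\,\omega(k)\,\mathrm{d}k\ge\delta$. Pick a family of radial test functions $\phi$ that are smooth, coincide with $\omega$ on $B(O,\mathfrak{R})$, are supported in $B(O,2\mathfrak{R})$, and are concave as functions of the composite variable $\omega(|k|)$. Inserting $\phi$ in \eqref{4wavemild} and symmetrizing the four momentum variables yields
\begin{equation*}
\int \phi\, f(t,k)\,\mathrm{d}k \;=\; \int \phi\, f_0\,\mathrm{d}k \;+\; \frac{1}{4}\int_0^t\!\mathrm{d}s\iiiint (\phi+\phi_1-\phi_2-\phi_3)\bigl[f_2f_3(f_1+f)-ff_1(f_2+f_3)\bigr]\,\mathrm{d}\sigma,
\end{equation*}
where $\mathrm{d}\sigma$ encodes the momentum and frequency delta functions. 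The concavity of $\phi$ in $\omega$, combined with the resonance relation $\omega+\omega_1=\omega_2+\omega_3$, fixes the sign of $\phi+\phi_1-\phi_2-\phi_3$ on the portion of the manifold where all four wavenumbers lie near $B(O,\mathfrak{R})$. The resulting monotonicity, together with conservation of energy, should yield a contradiction with the contradiction hypothesis once one shows that the gain contribution from the region of interest is bounded below by a positive constant $c(\delta,\mathfrak{R})$, which forces the moment $\int \phi\,f(t,\cdot)\,\mathrm{d}k$ to drift monotonically until the energy on $B(O,\mathfrak{R})$ is depleted.

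The principal difficulty, and where the bulk of the technical work must go, is quantifying this gain term when $\omega$ is not $|k|^2$. In the quadratic case the resonance manifold is a sphere and the integrals reduce to classical spherical averages; for a general $\omega$ one must instead parametrize the manifold $\{k_3=k+k_1-k_2,\ \omega(k)+\omega(k_1)=\omega(k_2)+\omega(k_3)\}$. I would extend the parametrization of \cite{ToanBinh}, using the monotonicity of $\mho$ in both $|k|$ and $\omega$ together with the upper bound $\mho\le C_\mho^1|k|^\iota$ to control the Jacobian factors that replace the constant one has in the quadratic case. A packing-covering argument then allows one to cover the relevant portion of the manifold by a uniformly bounded family of small caps on each of which $f_2f_3-f f_1$ has a definite sign and a definite size, yielding the required coercive lower bound. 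The hypotheses (I) and (II) enter here: they guarantee a ball (around an interior point, respectively around the origin) where $f_0$ is bounded below, and the combination of the parametrization with the geometric covering propagates this non-degeneracy in time, so that the gain term remains strictly positive as long as mass persists in $B(O,\mathfrak{R})$. Closing these estimates against the conservation laws will then force \eqref{Theorem:Main:3}.
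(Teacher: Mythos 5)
Your outline correctly identifies the two structural ingredients (global existence by truncation/compactness, and cascade by monotone test functions plus support propagation), and your existence plan is a reasonable variant of the paper's: the paper passes to the $\omega$ variable, integrates out the delta functions to obtain the explicit kernel in \eqref{Lemma:TestFunction:E4}, truncates to $\mathcal K_n$, proves $\mathfrak K_n$ is uniformly bounded, and closes with a fixed-point argument and Arzel\`a--Ascoli. Your mollification of the product of deltas is avoidable but workable. The cascade argument, however, has a genuine gap.

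The central issue is that you \emph{assume} the gain term is bounded below by some $c(\delta,\mathfrak R)>0$ whenever $\int_{B(O,\mathfrak R)} f\,\omega\ge\delta$, and you treat this as the technical work left to do. But the inequality you would need to prove is not true for the test function you chose, and in any case requires controlling the distribution of mass in wavenumber space, which is precisely what the paper's more elaborate argument supplies. Concretely: your $\phi$ is taken to coincide with $\omega$ on $B(O,\mathfrak R)$ and to be concave in $\omega$. But $\omega\mapsto\omega$ is affine, so $\phi''\equiv 0$ there, and by the explicit computation in \eqref{Lemma:Concave:2} the contribution of collisions among wavenumbers in $B(O,\mathfrak R)$ to $-\int\mathcal Q[f]\phi$ is proportional to $\int\!\int\!\int |\phi''|$, hence can be arbitrarily small even when $\int_{B(O,\mathfrak R)}f\omega\ge\delta$. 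A strictly concave $\phi$ (the paper uses $\phi''<-C_\phi<0$ on the relevant range) fixes this but introduces the next difficulty: the resulting lower bound \eqref{Lemma:Theorem:E8} is of the form $-\mathfrak C\,\delta^4\mho(\delta^2)\,(\int_{(2\delta^2,20\rho/\delta]}f)^2\int_{[0,\delta^2]}f$, i.e.\ it needs mass both very near the origin \emph{and} in an intermediate annulus bounded away from $0$ and from $\infty$. Mass near the origin is obtained from Proposition \ref{Lemma:Support} and is then \emph{propagated forward} via the monotonicity Lemma \ref{Lemma:Monotonicity} with the convex test function $(\delta^2-\omega)_+$. Mass in the intermediate annulus requires the second convex test function $(1-\Theta\omega)_+$ to prevent mass from escaping beyond $\frac{20\rho}{\delta}$ (the bound \eqref{Lemma:Theorem:E16}), and then a dichotomy: either $\int_{[0,2\delta^2]}f|k|\mho>1-\frac{\delta}{20}$ at some time (and hence forever after, by the same convex-test-function propagation), or there is a definite amount of mass in the annulus at all times, forcing $\int\phi f$ to become negative, a contradiction. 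None of this structure is visible in your outline; ``gain bounded below as long as mass persists in $B(O,\mathfrak R)$'' is exactly what must be proved, not an intermediate step.

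Finally, the paper's conclusion is actually a stronger and differently organized statement: it shows that the radial profile concentrates at $\omega=0$ in the sense of measures (\eqref{Lemma:Theorem:E21}--\eqref{Lemma:Theorem:E23}), so that $\int f\,\varphi(\omega)\,|k|\mho\to\varphi(0)$ for every bounded continuous $\varphi$, and then \eqref{Theorem:Main:3} follows by taking $\varphi(\omega)=\omega\chi_{\{|k|\le\mathfrak R\}}$ (approximated). Your plan of working directly with an $\mathfrak R$-dependent cutoff does not obviously yield this concentration, and without concentration the passage from ``$\int f\phi$ is eventually small'' to ``$\int_{B(O,\mathfrak R)}f\omega\to 0$'' for all $\mathfrak R$ does not close, because the monotone limit $L$ of $\int f\phi$ is not a priori zero. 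In short: the monotonicity idea and the role of support propagation are right, but the quantitative engine (the two families of convex test functions, the dichotomy, and the explicit three-region lower bound on the dissipation) is missing, and your chosen test function cannot supply it because it is flat where you need strict concavity.
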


\subsection{A new formulation for the collision operator}
In this section, we introduce a new formulation for the collision operator.
We first compute

\begin{equation*}
	\begin{aligned}
		\mathcal Q \left[ f\right]  
		\ =\ & \iiint_{\mathbb{R}^{3\times3}}\mathrm{d}k_1\,\mathrm{d}k_2\,\mathrm{d}k_3  \delta(k+k_1-k_2-k_3)\delta(\omega + \omega_1 -\omega_2 - \omega_3)[f_2f_3(f_1+f)-ff_1(f_2+f_3)] \\
		\ =\ & \iiint_{\mathbb{R}_+^{3}}\mathrm{d}|k_1|\,\mathrm{d}|k_2|\,\mathrm{d}|k_3|  \delta(\omega + \omega_1 -\omega_2 - \omega_3)[f_2f_3(f_1+f)-ff_1(f_2+f_3)]|k_1|^2|k_2|^2|k_3|^2\\				
		&\times \iiint_{\left(\mathbb{S}^2\right)^3}\mathrm{d}\mathcal O_1\mathrm{d}\mathcal O_2\mathrm{d}\mathcal O_3\left[\frac1{(2\pi)^3}\int_{\mathbb{R}^3}\mathrm{d}\tau e^{i\tau\cdot({k}+{k}_1-{k}_2-{k}_3)}\right]\\
		\end{aligned}
\end{equation*}
\begin{equation*}
	\begin{aligned}
		\ =\ & \iiint_{\mathbb{R}_+^{3}}\mathrm{d}|k_1|\,\mathrm{d}|k_2|\,\mathrm{d}|k_3| \delta(\omega + \omega_1 -\omega_2 - \omega_3)[f_2f_3(f_1+f)-ff_1(f_2+f_3)]|k_1||k_2||k_3|\\	
		&\times \frac{32\pi}{|k|}\int_0^\infty\mathrm{d}|\tau|\sin(|k_1||\tau|)\sin(|k_2||\tau|)\sin(|k_3||\tau|)\sin(|k||\tau|)\frac{1}{|\tau|^2}\\
		\end{aligned}
\end{equation*}	
\begin{equation}\label{Lemma:TestFunction:E1}
	\begin{aligned}	
		\ =\ & \iiint_{\mathbb{R}_+^{3}}\mathrm{d}|k_1|\,\mathrm{d}|k_2|\,\mathrm{d}|k_3| \delta(\omega + \omega_1 -\omega_2 - \omega_3)[f_2f_3(f_1+f)-ff_1(f_2+f_3)]|k_1||k_2||k_3|\\
		&\times \frac{32\pi}{|k|}\int_0^\infty\mathrm{d}x\sin(|k_1|x)\sin(|k_2|x)\sin(|k_3|x)\sin(|k|x)\frac{1}{x^2}.
	\end{aligned}
\end{equation}
Next, we compute the integral in $x$ in \eqref{Lemma:TestFunction:E1}. To this end, we  write  
\begin{equation}\label{Lemma:TestFunction:E2}
	\begin{aligned}
		& \int_0^\infty\mathrm{d}x\sin(|k_1|x)\sin(|k_2|x)\sin(|k_3|x)\sin(|k|x)\frac{1}{x^2}\\
		=\	& \int_0^\infty\mathrm{d}x\frac1{16}\prod_{j=1}^4(e^{i|k_j| x}-e^{-i|k_j| x})\frac{1}{x^2}\\
		\end{aligned}
\end{equation}
\begin{equation*}
	\begin{aligned}
		=\	& \int_0^\infty\mathrm{d}x\frac{1}{x^2}\sum_{s_1,s_2,s_3,s_4=1}^2\frac1{16}\prod_{j=1}^4(-1)^{s_j}e^{i(-1)^{s_j}|k_j|x}\\
		=\	& \int_0^\infty\mathrm{d}x\frac{1}{x^2}\sum_{s_1,s_2,s_3,s_4=1}^2\frac1{16}(-1)^{s_1+s_2+s_3+s_4} \exp\Big\{i\Big(\textstyle\sum_{j=1}^4(-1)^{s_j}|k_j|\Big)x\Big\}\\
		\end{aligned}
\end{equation*}
\begin{equation*}
	\begin{aligned}
		=\	&		\frac\pi{16}\sum_{s_1,s_2,s_3,s_4=1}^2(-1)^{s_1+s_2+s_3+s_4+1}\Big|\sum_{j=1}^4(-1)^{s_j}|k_j|\Big|\\
		=\ &\frac\pi{8}\Big[-||k_1|+|k_2|+|k_3|+|k_4||+||k_1|+|k_2|+|k_3|-|k_4||+||k_1|+|k_2|-|k_3|+|k_4||\\
		&+||k_1|-|k_2|+|k_3|+|k_4||+ 	||k_1|-|k_2|-|k_3|-|k_4||\\
		& -||k_1|-|k_2|+|k_3|-|k_4||-||k_1|+|k_2|-|k_3|-|k_4||-||k_1|-|k_2|-|k_3|+|k_4||\Big]\\
		=\ &  \tfrac\pi4\min\{|k_1|,|k_2|,|k_3|,|k|\}.
	\end{aligned}
\end{equation*}

Combining \eqref{Lemma:TestFunction:E1} and \eqref{Lemma:TestFunction:E2}, we find
\begin{equation}\label{Lemma:TestFunction:E3}
	\begin{aligned}
		\mathcal Q \left[ f\right]  
		\ =\ & 8\pi^2\iiint_{\mathbb{R}_+^{4}}\mathrm{d}|k_1|\,\mathrm{d}|k_2|\,\mathrm{d}|k_3| \delta(\omega + \omega_1 -\omega_2 - \omega_3)[f_2f_3(f_1+f)-ff_1(f_2+f_3)]\\
		&\times \frac{|k_1||k_2||k_3|\min\{|k_1|,|k_2|,|k_3|,|k|\}}{|k|}.
	\end{aligned}
\end{equation}
As we do not need to keep tract of the constant  $8\pi^2$, we replace it by a universal constant $C_Q$
%, which can be changed from line to line
\begin{equation}\label{Lemma:TestFunction:E4}
	\begin{aligned}
		\mathcal Q \left[ f\right]  
		\ =\ & C_Q\iiint_{\mathbb{R}_+^{3}}\mathrm{d}|k_1|\,\mathrm{d}|k_2|\,\mathrm{d}|k_3|   \delta(\omega + \omega_1 -\omega_2 - \omega_3)[f_2f_3(f_1+f)-ff_1(f_2+f_3)]\\
		&\times \frac{|k||k_1||k_2||k_3|\min\{|k_1|,|k_2|,|k_3|,|k|\}}{|k|^2}.
	\end{aligned}
\end{equation}
\section{A packing and covering lemma and the support of the solutions }
\subsection{A packing and covering lemma}
First, we define the so-called collisional region. Note that the geometries of collisional regions  for  3-wave kinetic collision operators were also studied in \cite{rumpf2021wave}.  
\begin{definition}
	\label{Def:CollisionRegion}
	Let $\mathcal{S}$ be a set in $\mathbb{R}^3$. We denote the origin by $O$. We define the set $\mathcal{S}_n$ inductively as follow:
	  $\mathcal{S}_0[S]:=\mathcal{S}\backslash\{O\}$, $\mathcal{S}_{n+1}[\mathcal{S}]:=\{k\in \mathbb{R}^d ~~|~~ k=k_1+k_2-k_3; \omega_k=\omega_{k_1}+\omega_{k_2}-\omega_{k_3};k_1,k_2,k_3\in\mathcal{S}_n[\mathcal S];k_1+k_2\ne 0\}\backslash\{O\}$. The collisional region is defined as 
	  $$\mathrm{Coll}[\mathcal{S}] \ := \ \bigcup_{n=0}^\infty \mathcal{S}_n[\mathcal S].$$
\end{definition} 

\begin{definition} 	\label{Def:Interior} Below, we recall some standard definitions that we will use in the paper 
\begin{itemize}
	\item[(A)] 	Let $\mathcal{S}$ be a set in $\mathbb{R}^3$. The interior $\dot{A}$ of $A$ is defined as
	$$\dot{A} \ := \ \Big\{x\in A ~~| ~~ B(x,r)\subset A \mbox{ for all sufficiently small } r>0\}.$$
	Moreover, $\partial A = \bar{A}\backslash \dot{A}$. 
	\item[(B)] For a given point $x_0\in\mathbb{R}^3$. Let $\sigma$ be a unit vector of $\mathbb{R}^3$ and we suppose that $\sigma$ has one end at $x_0$ and the other end is $x_1$ such that $\sigma=\vec{x_0x_1}$. Let $\rho\ge 0$. We define the  hyperplane
	$$\mathscr{P}[x_0,\rho,\sigma] \ :=\ \Big\{x+\sigma \rho ~~\Big|~~ x\in\mathbb{R}^3, \ \ x\cdot \sigma=0\Big\}.$$
	\item[(C)] We suppose that the hyperplane $\mathscr{P}[x_0,\rho,\sigma]$ intersects the ball $B(x_0,R)=\Big\{x~~|~~|x-x_0|
	\le R\Big\}$ centered at $x_0$ and  radius $R$. We define the cap
	$$\mathrm{Cap}[x_0,R,\rho,\sigma]\ := \ \Big\{x\in S(x_0,R) ~~\Big| ~~ (x-x_0)\cdot \sigma \ge   \rho \Big\},$$ and $S(x_0,R)$ is the sphere $\partial B(x_0,R)$.

		\item[(D)]
	 We define the intersection between the cap  $\mathrm{Cap}[x_0,R,\rho,\sigma]$
and the hyperplane $\mathscr{P}[x_0,\rho,\sigma]$ by $\mathrm{Circle}[x_0,R,\rho,\sigma]$, which is in fact a circle, whose radius is $r[x_0,R,\rho,\sigma]=\sqrt{R^2-\rho^2}$. We denote the center of this circle by $\mathrm{CircleCenter}[x_0,R,\rho,\sigma]$. We call $\mathrm{Circle}[x_0,R,\rho,\sigma]$ the base of the cap. We denote the intersection of the half line starting from $x_0$  going through $\mathrm{CircleCenter}[x_0,R,\rho,\sigma]$ and the sphere $S(x_0,R)$ by $\mathrm{CapCenter}[x_0,R,\rho,\sigma]$. We also say that  $\mathrm{Cap}[x_0,R,\rho,\sigma]$ is the cap of chord $2r[x_0,R,\rho,\sigma]$ of $S(x_0,R)$ centered at $\mathrm{CapCenter}[x_0,R,\rho,\sigma]$.

	\item[(E)] We denote the spherical cone generated by $\mathrm{Cap}[x_0,R,\rho,\sigma]$
	$$\mathrm{Cone}[x_0,R,\rho,\sigma] \ :=\ \Big\{x\in B(x_0,R) ~~ \Big| ~~ (x-x_0)\cdot \sigma \ge  |x-x_0|\rho/R\Big\},$$
	whose volume is denoted by $\mathrm{VCone}[x_0,R,\rho].$
	\item[(F)] Let $C_1,\cdots,C_N$ be the centers of $N$ caps of chord $2r$ of the sphere $S(x_0,R)$. The area of the union of those caps is then $\mathrm{Area}\Big(\cup_{i=1}^NC_i\Big)$ and we denote
	$$\Psi[R,r,x_0,C_1,\cdots,C_N] \ : = \ \frac{\mathrm{Area}\Big(\cup_{i=1}^NC_i\Big)}{\mathrm{Area}(S(x_0,R))},$$
	where $\mathrm{Area}(S(x_0,R))=4\pi R^2$ is the area of $S(x_0,R)$. 
\end{itemize}
\end{definition}

Based on the above definitions, we can prove a ``packaging and covering'' lemma, which is crucial for our next estimates.
\begin{lemma}
	\label{Lemma:CollisionRegionSphere}
	Suppose that $\mathcal{S}$ has a non-empty interior, then $\mathrm{Coll}[\mathcal{S}]=\mathbb{R}^3\backslash\{O\}$ when $\omega(k)=|k|^2$. Suppose that the origin belongs to the interior of $\mathcal{S}$, then $\mathrm{Coll}[\mathcal{S}]=\mathbb{R}^3\backslash\{O\}$ when $\omega(k)\ne|k|^2$.
\end{lemma}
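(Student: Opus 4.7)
The plan is to argue by induction on $n$, producing a basepoint $x_\star$ and an increasing sequence $r_n\uparrow\infty$ such that $B(x_\star,r_n)\setminus\{O\}\subset \mathcal{S}_n$. In Case (1) I take $x_\star$ to be any interior point of $\mathcal{S}$; in Case (2) I take $x_\star=O$ (which is in the interior by hypothesis). The qualitative conclusion $\mathrm{Coll}[\mathcal{S}]=\mathbb{R}^3\setminus\{O\}$ then follows from $\bigcup_n B(x_\star,r_n)=\mathbb{R}^3$.

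For Case (1), with $\omega(k)=|k|^2$, the resonance conditions $k_1+k_2=k+k_3$ and $|k_1|^2+|k_2|^2=|k|^2+|k_3|^2$ force the four vectors to lie on a common sphere $S(s,R)$ with center $s=(k+k_3)/2=(k_1+k_2)/2$, with $(k_1,k_2)$ and $(k,k_3)$ each antipodal. Given a target $k$ with $|k-x_\star|\le\sqrt{2}\,r_n$, I set $k_3:=x_\star$ (with a harmless perturbation if $x_\star=O$), forming the sphere of center $s=(k+x_\star)/2$ and radius $R=|k-x_\star|/2$, and take the antipodal pair $(k_1,k_2)$ on this sphere with $k_1-s\perp k-s$. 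A short computation gives $|k_1-x_\star|=|k_2-x_\star|=|k-x_\star|/\sqrt{2}\le r_n$, so that $k_1,k_2,k_3\in B(x_\star,r_n)\setminus\{O\}\subset\mathcal{S}_n$; the nondegeneracy $k_1+k_2=2s\ne O$ holds after a small generic perturbation. Setting $r_{n+1}=\sqrt{2}\,r_n$ closes the induction.

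For Case (2), with $\omega\ne|k|^2$ and $B(O,r_n)\setminus\{O\}\subset\mathcal{S}_n$, I seek $k_1,k_2,k_3\in B(O,r_n)\setminus\{O\}$ satisfying the collision constraints for a target $k$ with $|k|$ slightly larger than $r_n$, choosing $k_3$ of very small norm. In the limit $k_3\to O$ these reduce to $k_1+k_2=k$ and $\omega(k_1)+\omega(k_2)=\omega(k)$. A symmetric ansatz $k_1=k/2+b$, $k_2=k/2-b$ with $b\perp k$ gives $\omega(k_1)=\omega(k_2)$ and reduces the energy constraint to the scalar equation $\omega(|k_1|)=\omega(|k|)/2$, i.e.\ $|k_1|=|k_2|=\omega^{-1}(\omega(|k|)/2)$. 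Since $\omega$ is strictly increasing, $\omega^{-1}(\omega(|k|)/2)<|k|$; setting $r_{n+1}:=\omega^{-1}(2\omega(r_n))>r_n$, any $|k|\le r_{n+1}$ gives $|k_1|,|k_2|\le r_n$ automatically. Perturbing back to small nonzero $k_3$ by continuity preserves the solution while respecting $k_1+k_2\ne O$ (which holds because $k_1+k_2=k+k_3\ne O$ for $k\ne O$ and $k_3$ small). The fact that $r_n\to\infty$ is seen by contradiction: if $r_n\to r^*<\infty$ then $\omega(r_{n+1})=2\omega(r_n)\to 2\omega(r^*)\ne\omega(r^*)$, contradicting the continuity of $\omega$.

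The main difficulty is Case (2). Without the rigid spherical structure of the $\omega=|k|^2$ resonance manifold, I must parametrize the constraint surface $\{\omega(k_1)+\omega(k-k_1)=\omega(k)\}$ by hand using the convexity and monotonicity of $\omega$, and verify solvability while avoiding the degenerate loci $k_j=O$ and $k_1+k_2=O$. The packing and covering framework introduced in this section (caps, cones and the $\Psi$-functional of Definition \ref{Def:Interior}) provides the geometric bookkeeping that will make this construction quantitative, and the strict convexity encoded in the lower bound $\omega(|k|)\ge C_\omega|k|^\alpha$ with $\alpha>1$ is precisely what guarantees $r_{n+1}>r_n$ at every step, driving the induction to completion.
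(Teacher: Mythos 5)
Your argument is genuinely different from the paper's, and in fact substantially more economical, but it leaves a couple of places where you should either strengthen the hypotheses you are invoking or say explicitly why the degenerate configurations can be avoided.

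For $\omega(k)=|k|^2$, the paper proceeds by the packing-and-covering route: it selects $N$ caps on $S(k_0,R)$ by a probabilistic counting argument, doubles and redoubles their chords so the caps cover the sphere, and shows that the union of the balls $B(\mathscr{O}_i,6r)$ (each being the constraint sphere of diameter a chord) covers a slightly larger ball $B(k_0,\mathscr R(r,R))$; the radius grows only by $O(r)\sim R/1000$ per step. You instead fix $k_3=x_\star$ and, for each target $k$, exhibit the pair $(k_1,k_2)$ explicitly on the sphere of diameter $\overline{k\,x_\star}$, using the identity $|k_1-x_\star|=|k-x_\star|/\sqrt2$; this is a pointwise construction, it does not need the cap-covering machinery of Definition~\ref{Def:Interior}(C)--(F) at all, and it yields a geometric expansion $r_{n+1}=\sqrt2\,r_n$ instead of the paper's much slower growth. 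This is a real simplification of the paper's Case~1. The one genuine hole is the degenerate set: you must rule out $k=-x_\star$ (so that $k_1+k_2\ne O$) and the codimension-one locus $\{k\cdot x_\star=0,\,|k|=|x_\star|\}$ (so that $k_1,k_2\ne O$). Since you want $B(x_\star,r_{n+1})\setminus\{O\}\subset\mathcal S_{n+1}$ for \emph{every} point, ``generic perturbation'' is not enough as stated; you should record that for a fixed target $k$ you may move $k_3$ freely in a small ball around $x_\star$ inside $\mathcal S_n$, which sweeps out a two-parameter family of admissible triples, and that the degeneracy conditions cut out a proper subvariety of that family, so a non-degenerate triple exists for \emph{every} $k\ne O$. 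Once stated this way, the argument closes.

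For $\omega\ne|k|^2$, the paper picks $k_1,k_2$ on $S(O,R)$ at angle $2\pi/3$, writes the constraint as a scalar equation $\digamma(s)=\omega((1+s)\kappa)+\omega(|1-s|\kappa)=2\omega(R)$ with $\kappa=R/2$, and uses the intermediate value theorem to find $s_0\in(1,2)$, giving $|k|=(1+s_0)R/2>R$. You instead send $k_3\to O$, use the perpendicular ansatz $k_1=k/2+b$, $k_2=k/2-b$, $b\perp k$, reduce to the one scalar equation $\omega(|k_1|)=\omega(|k|)/2$, and solve by inverting $\omega$, giving the explicit update $r_{n+1}=\omega^{-1}(2\omega(r_n))$. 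Both are valid; yours is more explicit and avoids the IVT altogether, and it gives a uniform description of the whole new ball rather than a single sphere. Two points you should be careful about. First, the existence of $b$ with $|b|^2=\omega^{-1}(\omega(|k|)/2)^2-|k|^2/4\ge0$ uses $\omega(|k|/2)\le\omega(|k|)/2$, which is convexity with $\omega(0)=0$; but your ``perturb $k_3$ away from $O$ by continuity'' step is an implicit function argument, and the relevant gradient in $k_1$ at the symmetric solution is proportional to $b$, so it requires $b\ne 0$, i.e.\ $\omega(|k|/2)<\omega(|k|)/2$ \emph{strictly}. This is exactly the point at which the hypothesis $\omega\ne|k|^2$ together with $\alpha>1$ must be invoked — the paper's IVT step has the same hidden dependence ($\omega(3R/2)+\omega(R/2)>2\omega(R)$ needs strict convexity on $[R/2,3R/2]$) — and you should say so. Second, when $|k|=r_{n+1}$ you get $|k_1|=|k_2|=r_n$ exactly, so the perturbed $k_1,k_2$ may escape $B(O,r_n)$; take $r_{n+1}$ slightly smaller, or work with open balls, to preserve the induction. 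Finally, the closing sentence of your proposal, gesturing at the packing-and-covering framework as ``what will make this quantitative,'' is misleading: your construction is already pointwise and quantitative and does not use that framework, which is to your argument's credit.
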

\begin{proof} First, we consider the case when $\omega(k)=|k|^2$.
	As $\mathcal{S}_0[\mathcal S]$ has a non-empty interior, there exists a ball $B(k_0,R)$ centered at $k_0$ with radius $R$ such that $B(k_0,R)\subset \mathcal{S}_0$. We then set $\mathcal{S}_0^*= B(k_0,R)$, $\mathcal{S}_{n+1}^*:=\{k\in \mathbb{R}^d ~~|~~ k=k_1+k_2-k_3; |k|^2=|k_1|^2+|k_2|^2-|k_3|^2;k_1,k_2,k_3\in\mathcal{S}_n^*;k_1+k_2\ne 0\}\backslash\{O\}$. The new collision region associated to $S_0^*$ is then 
	$$\mathcal{S}_* \ := \ \bigcup_{n=0}^\infty \mathcal{S}_n^*.$$
	It is then straightforward that $\mathcal{S}_*\subset \mathrm{Coll}[\mathcal{S}].$  
	
	Next we cover the sphere $S(k_0,R)$ with balls using a strategy  which is somewhat standard in the  area of ``packaging and covering'' (see, for instance \cite{bourgain1991convering,erdHos1964amount,fejes1985stable,toth2014regular,rogers1957note,rogers1963covering,rogers1964book}) as follows. Let $N$ be a sufficient large number to be fixed later and   $r\in (0,R/1000]$. Note that the origin is denoted by $O$ and we set  $k_0=\vec{OK}$.  We compute the mean $E\Psi[R,\sqrt{R^2-r^2},k_0,D_1,$ $\cdots,D_N]$ over all possibilities of distributing $N$ distinct point $D_1,$ $D_2,$ $\cdots,$ $D_N$ on the sphere $S(k_0,R)$ such that   $D_1,\cdots,D_N$ are the centers of $N$ caps $E_1=$ $\mathrm{Cap}[k_0,R,$ $\sqrt{R^2-r^2},(\vec{OD_1}-k_0)/|\vec{OD_1}-k_0|],$ $E_2=$ $\mathrm{Cap}[k_0,R,\sqrt{R^2-r^2},$ $(\vec{OD_2}-k_0)/|\vec{OD_2}-k_0|],$ $\cdots,$ $E_N=$ $\mathrm{Cap}[k_0,R, $ $\sqrt{R^2-r^2},$ $(\vec{OD_N}-k_0)/|\vec{OD_N}-k_0|]$ of chord $2r$ of the sphere $S(k_0,R)$. We define $q_i=\frac{\mathrm{Area}(E_i)}{\mathrm{Area}(S(k_0,R))}$. It is then clear that $q_i$ is the probability that a point $D$ on the sphere $S(k_0,R)$ belongs to the cap $E_i$. As $q_i$ is the ratio between the areas of the cap $E_i$ and the sphere $S(k_0,R)$, it is the same as  the ratio between the volumes of the spherical cone generated by $E_i$ and the ball $B(k_0,R)$, which is $q_i=\frac{\mathrm{VCone}[k_0,R,\rho]}{\frac43 \pi R^3}$ with $\rho=\sqrt{R^2-r^2}$. Therefore, the probability that $D$ does not belong to any of the caps $E_1,\cdots,E_N$ is $$(1-q_1)\cdots(1-q_N) \ = \ \Big(1-\frac{\mathrm{VCone}[k_0,R,\rho]}{\frac43 \pi R^3}\Big)^N.$$
	The expectation $\mathbb{E}\big[\Psi[R,\sqrt{R^2-r^2},k_0,D_1,\cdots,D_N]\big]$ over all possibilities of distributing $N$ distinct point $D_1,$ $D_2,$ $\cdots,$ $D_N$ can be computed by complementary
	\begin{equation}
		\label{Lemma:CollisionRegionSphere:E1}\mathbb{E}\big[\Psi[R,\sqrt{R^2-r^2},k_0,D_1,\cdots,D_N]\big] \ = \ 1- \Big(1-\frac{3\mathrm{VCone}[k_0,R,\rho]}{4 \pi R^3}\Big)^N.
	\end{equation}

	We choose $N$ such that
		\begin{equation}
		\label{Lemma:CollisionRegionSphere:E2}\Big(1-\frac{3\mathrm{VCone}[k_0,R,\rho]}{4 \pi R^3}\Big)^N<\frac{3\mathrm{VCone}[k_0,R,\rho]}{40 \pi R^3},
	\end{equation}
yielding a lower bound for   the expectation $\mathbb{E}\big[\Psi[R,\sqrt{R^2-r^2},k_0,D_1,\cdots,D_N]\big]$ over all possibilities of distributing   $D_1,$ $D_2,$ $\cdots,$ $D_N$
	\begin{equation}
	\label{Lemma:CollisionRegionSphere:E1a}\mathbb{E}\big[\Psi[R,\sqrt{R^2-r^2},k_0,D_1,\cdots,D_N]\big] \ >  \ 1- \frac{3\mathrm{VCone}[k_0,R,\rho]}{40 \pi R^3}.
\end{equation}
	Then there exists one collection $D_1,D_2,\cdots,D_N$ such that 
		\begin{equation}
		\label{Lemma:CollisionRegionSphere:E3}\Psi[R,\sqrt{R^2-r^2},k_0,D_1,\cdots,D_N] \ > \ 1- \frac{3\mathrm{VCone}[k_0,R,\rho]}{40 \pi R^3}.
	\end{equation}
	Now, keeping the centers $D_1,D_2,\cdots,D_N$  and replacing the caps $E_1,\cdots, E_N$ by caps $\mathcal E_1=$ $\mathrm{Cap}[k_0,R,$ $\sqrt{R^2-9r^2},(\vec{OD_1}-k_0)/|\vec{OD_1}-k_0|],\cdots, $ $\mathcal E_N=$ $\mathrm{Cap}[k_0,R,\sqrt{R^2-9r^2},(\vec{OD_N}-k_0)/|\vec{OD_N}-k_0|]$ whose centers are still $D_1,D_2,\cdots,D_N$   but with chord $6r$. We will show that the union of the new caps $\mathcal E_1,\cdots, \mathcal E_N$ covers the whole sphere $S(k_0,R)$. Suppose that there exists a point $D_0$ on the sphere $S(0,R)$ that does not belong to any cap in the collection $\mathcal E_1,\cdots, \mathcal E_N$. This means the cap $ E_0=\mathrm{Cap}[k_0,R,\sqrt{R^2-r^2},$ $(\vec{OD_0}-k_0)/|\vec{OD_0}-k_0|]$ with chord $2r$ does not intersect with any of the original caps $E_1,\cdots,E_N$. Therefore, the union $\cup_{i=0}^N E_i$ does not cover the full sphere $S(0,R)$, yielding
		\begin{equation}
		\label{Lemma:CollisionRegionSphere:E4} 1 \ >\ \Psi[R,\sqrt{R^2-r^2},k_0,D_1,\cdots,D_N]\ +\ \Psi[R,\sqrt{R^2-r^2},k_0,D_0], 	\end{equation}
	which, by \eqref{Lemma:CollisionRegionSphere:E3}, yields
		\begin{equation}
		\label{Lemma:CollisionRegionSphere:E5} \frac{3\mathrm{VCone}[k_0,R,\rho]}{40 \pi R^3} \ > \ \Psi[R,\sqrt{R^2-r^2},k_0,D_0] \ = \ \frac{3\mathrm{VCone}[k_0,R,\rho]}{4\pi R^3} , 	\end{equation}
	leading to a contradiction. As a result, the union of the  new caps $\mathcal E_1,\cdots, \mathcal E_N$ cover the whole sphere $S(k_0,R)$.
	
	Now, keeping the centers $D_1,D_2,\cdots,D_N$  and replacing the caps $\mathcal E_1,\cdots, \mathcal E_N$ by caps $\mathscr E_1=$ $\mathrm{Cap}[k_0,R,$ $\sqrt{R^2-36 r^2},(\vec{OD_1}-k_0)/|\vec{OD_1}-k_0|],\cdots, \mathscr E_N=$ $\mathrm{Cap}[k_0,R,$ $\sqrt{R^2-36r^2},(\vec{OD_N}-k_0)/|\vec{OD_N}-k_0|]$ whose centers are still $D_1,D_2,\cdots,D_N$   but with chord $12r$. Those caps have bases $\mathscr B_1=$ $\mathrm{Circle}[k_0,R,\sqrt{R^2-36 r^2},$ $(\vec{OD_1}-k_0)/|\vec{OD_1}-k_0|],$ $\cdots, \mathscr B_N=$ $\mathrm{Circle}[k_0,R,$ $\sqrt{R^2-36r^2},$ $(\vec{OD_N}-k_0)/|\vec{OD_N}-k_0|]$ with center $\mathscr O_1, \cdots,\mathscr O_N$. We consider the union 
		\begin{equation}
		\label{Lemma:CollisionRegionSphere:E6}
\mathscr U \ = \ \bigcup_{i=1}^N {B(\mathscr O_i,6r)} \bigcup B(k_0,R).
\end{equation}
	We will prove that the ball $B(k_0,\mathscr R(r,R))$ lies entirely inside the union $\mathscr U$ if $\mathscr{R}(r,R):=\sqrt{R^2-45r^2} + 3\sqrt 2r>R$. To see this, let us consider the two adjacent  ball $B(\mathscr O_i,6r)$ and $B(\mathscr O_j,6r)$. Suppose that the spheres  $S(\mathscr O_i,6r)$ and $S(\mathscr O_j,6r)$ intersect at a point $M$ outside the ball $B(k_0,R)$. Let $P$ be the midpoint of the segment $\mathscr O_i\mathscr O_j$.  From our construction $\mathscr O_i$ lies  in the ball $B(\mathscr O_j,6r)$ and  $\mathscr O_j$ lies  in the ball $B(\mathscr O_i,6r)$. Therefore the angle between the two vectors $\vec{M\mathscr O_i} $ and $\vec{M\mathscr O_j} $ is smaller than $\frac{\pi}{2}$, which implies that $MP\ge \frac{M\mathscr O_j}{\sqrt 2}$ yielding $MP\ge 3\sqrt 2r$. Since $K\mathscr O_1=K\mathscr O_2=\sqrt{R^2-36r^2}$, we obtain $KP=\sqrt{K\mathscr O_1^2-P\mathscr O_1^2}\ge \sqrt{R^2-36r^2-9r^2} =\sqrt{R^2-45r^2}$. Therefore  $ KM=KP+PM\ge \sqrt{R^2-45r^2}+ 3\sqrt 2r=\mathscr{R}(r,R)$. As a result,  the ball $B(k_0,\mathscr R(r,R))$ lies entirely inside the union $\mathscr U$.
%	 We divide the ball $S(k_0,R)$  into $M$ equal spherical caps $S(k_0,R)=\bigcup_{i=1}^M\mathcal{C}_i$. 
%%		\begin{figure}
%%		\centering
%%		\includegraphics[width=.59\linewidth]{Cones}
%%		\caption{ The division of $S(k_0,R)$ into spherical caps.}
%%		\label{Fig1}
%%	\end{figure}
%We denote by ${C}_i$ the center of the base of each spherical cap $\mathcal{C}_i$, by $r=\rho R$  (for a fixed $0<\rho<1$), be the length of the radius of the base and by $\mathfrak{C}_i$ the associated spherical cone/spherical sector (see Figure \ref{Fig1}). We denote the distance between $\mathscr{C}_i$ and $\mathscr{C}_j$ by $2d_{i,j}=2d_{j,i}=2\rho_{i,j}r$. Since the union of the caps $\mathcal{C}_i$ is the sphere $S(k_0,R)$, for each $i=1,\cdots, M$, the set 
%$$I_{i} \ := \ \Big\{j\in\{1,\cdots, M\}~~~\Big|~~~ 2d_{i,j} \le r\Big\}$$
%is non empty.

Let $A,B$ be two points belonging to the base $\mathscr{B}_i$ such that $A,B$ and the center $\mathscr O_i$ are on the same line. 
We set $\vec{OA}=k_1$ and $\vec{OB}=k_2$.
Let $k$ be a point in the set $\mathcal{S}_1^*$ such that there exists $k_3\in \mathcal{S}_0^*$ satisfying  $k=k_1+k_2-k_3$ and $ |k|^2=|k_1|^2+|k_2|^2-|k_3|^2$. A straightforward computation gives $(k-k_1)\cdot (k-k_2)=0$. Therefore, the sphere $S(\mathscr O_i,r)$ centered at $\mathscr O_i$ with radius $r$ belongs to the set $\mathcal{S}_1^*$, yielding that the whole ball ${B(\mathscr O_i,r)}$ centered at $\mathscr O_i$ with radius $r$  also belong to the set $\mathcal{S}_1^*\cup  \{O\}$. Therefore the union $\mathscr U$ and the ball $B(k_0,\mathscr R(r,R))$ also lie entirely inside $\mathcal{S}_1^*\cup  \{O\}$. 

As a result, since $B(k_0,R)\subset \mathcal{S}_0^*$, we obtain 
the ball $B(k_0,\mathscr R(r,R))$ also lie entirely inside $\mathcal{S}_1^*\cup  \{O\}$.
 Repeating iteratively this argument for $\mathcal{S}_2^*$, $\mathcal{S}_3^*$, $\mathcal{S}_4^*$, etc. we deduce that $\mathcal{S}_*=\mathbb{R}^3\backslash\{O\}$ and hence $\mathrm{Coll}[\mathcal{S}]=\mathbb{R}^3\backslash   \{O\}$. 
\\

Now, we consider the case when $\omega_k$ is not $|k|^2$. The proof can be done in a similar manner as  the case of $\gamma=2$. 	As the origin belongs to the interior of $\mathcal{S}_0\cup\{O\}$, there exists a ball $B(O,R)$  such that $B(O,R)\subset \mathcal{S}_0\cup\{O\}$. We also set $\mathcal{S}_0^o= B(O,R)$, $\mathcal{S}_{n+1}^o:=\{k\in \mathbb{R}^d ~~|~~ k=k_1+k_2-k_3; \omega_k=\omega_{k_1}+\omega_{k_2}-\omega_{k_3};k_1+k_2\ne0;k_1,k_2,k_3\in\mathcal{S}_n^o\}$. The  collision region associated to $S_0^o$ is then 
$$\mathcal{S}_o \ := \ \bigcup_{n=0}^\infty \mathcal{S}_n^o,$$
 and hence $\mathcal{S}_o\subset \mathrm{Coll}[\mathcal{S}]\cup   \{O\}.$  Let $X,Y$ be two points on the sphere $S(O,R)$. 
 We set $\vec{OX}=k_1,$ $\vec{OY}=k_2$.
 Let $k$ be a point in the set $\mathcal{S}_1$ such that there exists $k_3\in \mathcal{S}_0^*$ satisfying  $k=k_1+k_2-k_3$ and $ \omega(|k|)=\omega(|k_1|)+\omega(|k_2|)-\omega(|k_3|)$. We set $k=\vec{OW}$ and $k_3=\vec{OZ}$ and we denote by $U$ the common midpoint of the 2 segments $WZ$ and $XY$.  Let us consider the case when $O,U,Z,W$ are on the same line. In this case, we set $\vec{OW}=(1+s)\vec{OU}$ and $\vec{OZ}=(1-s)\vec{OU}$. We obtain the equation $\omega(|1+s||\vec{OU}|)+\omega(|1-s||\vec{OU}|) =2\omega(R),$ yielding $\omega(|1+s|\kappa)+\omega(|1-s|\kappa) =2\omega(R),$ with $\kappa=|\vec{OU}|$. Let us consider the function $\digamma
 (s)=\omega(|1+s|\kappa)+\omega(|1-s|\kappa)$ with $s\in [1,2]$. Since  $\digamma
 '(s)=(s+1)\omega'(|1+s|\kappa)+(s-1)\omega'(|1-s|\kappa)\ge0$, we find $\omega(2\kappa)\le \omega(|1+s|\kappa)+\omega(|1-s|\kappa)\le \omega(3\kappa)+\omega(\kappa)$ for $s\in [1,2]$. We denote   by $2\alpha$ the angle between $k_1$ and $k_2$, then $\kappa=R\cos\alpha $. We find $\omega(2R\cos\alpha)\le \omega(|1+s|R\cos\alpha)+\omega((s-1)R\cos\alpha)\le \omega(3R\cos\alpha)+\omega(R\cos\alpha)$ for $s\in [1,2]$. Let us consider the case when $\alpha=\alpha_o=\frac{\pi}{3}$, then $\omega(2R\cos\alpha_o)=\omega(R)<2\omega(R)<\omega(3R/2)+\omega(R/2)=\omega(3R\cos\alpha_o)+\omega(R\cos\alpha_o)$. Therefore,  there exists a solution $s_0\in (1,2)$ to the equation $\digamma(s_0)=2\omega(R)$. We can then choose $k=(1+s_0)\vec{OU}$ and hence $(1+s_0)\vec{OU}$ also belongs to the set  $\mathcal{S}_1^o$.
  Therefore  the ball $B(O,(1+s_0)\kappa)$ also lie entirely inside $\mathcal{S}_1^o$.  
  
  As a result, starting with $B(O,R)\subset \mathcal{S}_0$, we can prove $B(O,(1+s_0)\kappa)$ also lie entirely inside $\mathcal{S}_1^o$. 
  Repeating inductively this argument for $\mathcal{S}_2^o$, $\mathcal{S}_3^o$, $\mathcal{S}_4^o$, etc. we deduce that $\mathcal{S}_o=\mathbb{R}^3$  and hence $\mathrm{Coll}[\mathcal{S}]=\mathbb{R}^3\backslash  \{O\}$. 
 \end{proof}

\subsection{The support of the solutions}
Using  the   ``packaging and covering'' lemma, we can now prove the following proposition.

\begin{proposition}
		\label{Lemma:Support} Let $f$ be a mild solution in the sense of \eqref{4wavemild}. Suppose that the support of the initial condition $\mathcal S=\mathrm{supp} f_0$ satisfies the assumption of Lemma \ref{Lemma:CollisionRegionSphere}. Then,
	for any $k\in \mathbb{R}^3,$ there exists a   time $T_0>0$, such that $k$ belongs to the support of $f(T_0)$. 
\end{proposition}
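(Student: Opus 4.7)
By Lemma~\ref{Lemma:CollisionRegionSphere}, $\mathrm{Coll}[\mathrm{supp}\, f_0] = \mathbb R^3 \setminus \{O\}$, so every $k \neq O$ lies in $\mathcal S_n[\mathrm{supp}\, f_0]$ for some finite $n$. The plan is to prove by induction on $n$ that every $k \in \mathcal S_n[\mathrm{supp}\, f_0]$ enters the radial support of $f$ at some $T > 0$; the origin is then reached separately using radial symmetry and the gain contribution of the nearby support.

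The preparatory observation I would need is monotonicity of the radial support: $\mathrm{supp}_r\, f(t_1) \supseteq \mathrm{supp}_r\, f(t_0)$ whenever $t_1 \geq t_0$. Splitting $\mathcal Q = \mathcal G[f] - f\,\mathcal L[f]$ into its non-negative gain and loss parts, the reformulation \eqref{Lemma:TestFunction:E4} bounds $\mathcal L[f](r)$ on any compact interval of radii in terms of the conserved $L^1$ mass times a locally bounded kernel. This yields a pointwise differential inequality $\partial_t f(t,r) \geq -C_K f(t,r)$ on compacta, hence non-shrinking of the support, and already settles the base case $n=0$ with any $T_0 > 0$.

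For the induction step, I would take $k \in \mathcal S_{n+1}$ with a resonance tuple $k = k_1 + k_2 - k_3$, $\omega_k = \omega_1 + \omega_2 - \omega_3$ and $k_i \in \mathcal S_n$. By the inductive hypothesis there is $T_n > 0$ for which every $|k_i|$ belongs to the radial support of $f(T_n)$, and monotonicity together with $L^1$-continuity of $f$ gives open radial neighborhoods $V_i \ni |k_i|$ on which $f(t,\cdot)$ has positive mass throughout some short interval $[T_n, T_n + \varepsilon]$. Invoking \eqref{Lemma:TestFunction:E4}, the radial gain $\mathcal G[f](t,r)$ is an integral of $f_{r_1}f_{r_2}f_{r_3}$ against a positive kernel on the surface $\omega(r) + \omega(r_1) = \omega(r_2) + \omega(r_3)$. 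Since $\omega$ is smooth and strictly monotone on $(0,\infty)$, the map implicitly defined by this constraint from $(r_1,r_2,r_3)$ to $r$ is a submersion near our tuple, so for every $r$ in an open neighborhood $W$ of $|k|$ the constraint is satisfied by a codimension-$1$ subset of $V_1 \times V_2 \times V_3$ of positive surface measure. Therefore $\mathcal G[f](t,r) > 0$ on $W \times [T_n, T_n + \varepsilon]$.

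To conclude, I would test the mild formulation \eqref{4wavemild} between $T_n$ and $T_n + \varepsilon$ against a non-negative $\phi \in C_c^2$ supported in $W$ with $\phi(|k|) > 0$. The gain contribution produces a strictly positive number, while the loss contribution is $O(\varepsilon)$ because $\mathcal L[f]$ is locally bounded and mass is conserved. Choosing $\varepsilon$ small then yields $\int f(T_n + \varepsilon, k)\phi(|k|)\,dk > 0$, forcing $\mathrm{supp}_r\, f(T_n + \varepsilon) \cap W \neq \emptyset$; shrinking $W$ around $|k|$ and repeating gives $|k|$ itself in the support, closing the induction. The main obstacle is precisely this last step, namely transferring the gain-over-loss competition from the integrated mild formulation into a strict enlargement of the support; the key enabler is the local boundedness of the kernel extracted in \eqref{Lemma:TestFunction:E4} together with $L^1$ mass conservation, which is exactly the analytic payoff of the radial reformulation carried out in the previous section.
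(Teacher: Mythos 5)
Your proposal correctly identifies the overall strategy the paper uses: reduce to the chain $\mathcal S_0 \subset \mathcal S_1 \subset \cdots$ from Lemma~\ref{Lemma:CollisionRegionSphere}, run an induction along resonance tuples, and at each step test the mild formulation against a bump localized near the new radius, arguing that the cubic gain contributes a positive quantity that beats the loss. The parametrization of the resonance manifold and the implicit function theorem idea in your step ``the constraint is a submersion'' is also the correct structural move, matching the paper's Step 2. However, there is a genuine gap at the heart of the argument.

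Your claim that \eqref{Lemma:TestFunction:E4} ``bounds $\mathcal L[f](r)$ on any compact interval of radii in terms of the conserved $L^1$ mass times a locally bounded kernel,'' and hence yields a clean Gr\"onwall inequality $\partial_t f \geq -C_K f$, is not correct. After reduction to radial variables, the kernel contains the factor $\displaystyle\frac{\min\{|k|,|k_1|,|k_2|,|k_3|\}}{|k_1||k_2||k_3|}$, which is \emph{not} bounded when two of the wavenumbers approach the origin simultaneously (e.g.\ $|k_1|,|k_3|\to 0$ with $|k|,|k_2|$ fixed). Localizing $|k|$ to a compactum away from $0$ does not prevent this because $|k_1|,|k_3|$ range over all of $(0,\infty)$ inside the loss integral. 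This is exactly why the paper introduces the $\zeta$-truncation $\chi_{\{|k_1|,|k_3|\le\zeta\}\cup\{|k_1|,|k_2|\le\zeta\}}$ and the case analysis on which variable realizes the minimum (Step 1), producing a bound of the form $\frac{\mathfrak C_2}{\zeta}\int f\,\varpi+\mathfrak C_2\,\zeta^{\alpha'-1}$ in \eqref{Lemma:Support:E3}. Crucially, the second term is an \emph{additive} error not proportional to $\int f\,\varpi$, and it cannot be absorbed by Gr\"onwall; it must be beaten by the gain, whose lower bound shrinks geometrically along the chain (the $\zeta_0^{9^n}$ thresholds in Step 3). Your ``loss is $O(\varepsilon)$'' closing step therefore does not survive scrutiny: the correct competition is between a $\zeta$-dependent additive error and a gain that degrades with $n$, and making that work is what forces the quantitative bookkeeping. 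A second, smaller issue: strict monotonicity of $\omega$ alone does not make the resonance constraint a submersion at every tuple; the paper has to verify $(\partial_{x_1^1}F,\partial_{x_1^2}F,\partial_{x_1^3}F)(0,0)\neq 0$ and rules out the degenerate symmetric configuration using convexity and the fact that the chain was constructed with $k_{n-1,\Psi_n+1}\neq k_{n-1,\Psi_n+2}$.
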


\begin{proof}
	
	Let $k_0$ be any point in $\mathbb{R}^3$ and $T>0$, by Lemma \ref{Lemma:CollisionRegionSphere}, there exists a number $M\in\mathbb{N}$ such that $k_0\in \mathcal S_M[\mathrm{supp} f_0]$. We set $k_{M,0}=k_0$ and there exists a set $\{k_{n,0},\cdots, k_{n,M-n}\}\subset \mathcal S_n[\mathrm{supp} f_0]\backslash \mathcal S_{n+1}[\mathrm{supp} f_0] $ with $n=0,\cdots,M$  that satisfies
$$k_{n,\Psi_n} \ = \ k_{n-1,\Psi_n} \ +  \ k_{n-1,\Psi_n+1} \ - \ k_{n-1,\Psi_n+2},$$
$$\omega(k_{n,\Psi_n}) \ = \ \omega(k_{n-1,\Psi_n}) \ +  \ \omega(k_{n-1,\Psi_n+1}) \ - \ \omega(k_{n-1,\Psi_n+2}),$$
$$ \ \ \ k_{n-1,i} \ =\ k_{n,i} \mbox{ with } 0\le i<\Psi_n; \ \ \ k_{n-1,i+2} \ =\ k_{n,i} \mbox{ with }  i>\Psi_n,$$
for $\Psi_n\in\{0,\cdots,M-n\}$, and $k_{n-1,\Psi_n}   +    k_{n-1,\Psi_n+1}\ne0$. In the proof of this lemma, we define $\mathcal B(k,r)$ to be
\begin{equation}\label{Ball}
	\mathcal 	B(k,r) = \Big\{k'\in\mathbb{R}^3 ~~ |~~\big||k'|-|k|\big|\le r \Big\}.
\end{equation}
We also define $\varpi_{\mathcal 	B(k,r)}(k')$ to be the smooth pump function that satisfies $\varpi_{\mathcal 	B(k,r)}(|k'|)=1$ when $|k'|\in \mathcal 	B(k,999r/1000)$
and $\varpi_{\mathcal 	B(k,r)}(|k'|)=0$ when $|k'|\notin \mathcal 	B(k,r)$. Therefore, for a given small sufficiently constant $\rho>0$, to obtain the conclusion of the proposition, we will prove that there exists a time $T_0$ such that 
\begin{equation}\int_{\mathbb{R}^d}\mathrm d k f(T_0,k) \varpi_{\mathcal 	B(k_0,\rho)}(k)\ > \ 0.
\end{equation}

 There exist sufficiently small positive numbers $\rho_{n}>0$ such that $\rho_{M}=\rho$ and  for $k+k_1=k_2+k_3$ and $\omega+\omega_1=\omega_2+\omega_3$, we have  
\begin{equation}\label{varpi}\varpi_{\mathcal B(k_{n,\Psi_n},\rho_n)}(k)\ge \varpi_{\mathcal B(k_{n-1,\Psi_n+2},\rho_{n-1})}(k_1)   \varpi_{\mathcal B(k_{n-1,\Psi_n+1},\rho_{n-1})}(k_2)  \varpi_{\mathcal B(k_{n-1,\Psi_n},\rho_{n-1})}(k_3),\end{equation}
which can be proved as follows. Let $\bar{k}_1\in \mathrm{support} \Big[\varpi_{\mathcal B(k_{n-1,\Psi_n+2},\rho_{n-1})}(k_1) \Big]= {\mathcal B(k_{n-1,\Psi_n+2},\rho_{n-1})}$, $\bar{k}_2\in  \mathrm{support} \Big[\varpi_{\mathcal B(k_{n-1,\Psi_n+1},\rho_{n-1})}(k_2) \Big]= \mathcal B(k_{n-1,\Psi_n+1},\rho_{n-1}) $,  $\bar{k}_3\in  \mathrm{support} \Big[\varpi_{\mathcal B(k_{n-1,\Psi_n},\rho_{n-1})}(k_3) \Big]= \mathcal B(k_{n-1,\Psi_n},\rho_{n-1})$ and  $\bar k+\bar k_1=\bar k_2+\bar k_3$ and $\omega(\bar k)+\omega(\bar k_1)=\omega(\bar k_2)+\omega(\bar k_3)$. For $\rho_{n-1}$ being sufficiently small $|\omega(\bar{k}_1)-\omega(k_{n-1,\Psi_n+2})|\le 3\rho_{n-1}|\omega'(k_{n-1,\Psi_n+2})|$, $|\omega(\bar{k}_2)-\omega(k_{n-1,\Psi_n+1})|\le 3 \rho_{n-1}|\omega'(k_{n-1,\Psi_n+1})|$, $|\omega(\bar{k}_3)-\omega(k_{n-1,\Psi_n})|\le 3\rho_{n-1}|\omega'(k_{n-1,\Psi_n})|$, yielding $|\omega(\bar{k})-\omega(k_{n,\Psi_n})|\le 3\rho_{n-1}[|\omega'(k_{n-1,\Psi_n+2})|+|\omega'(k_{n-1,\Psi_n+1})|+|\omega'(k_{n-1,\Psi_n})|]$. We then find $|\bar{k}-k_{n,\Psi_n}||\omega'(k_{n,\Psi_n})|/3\le 3\rho_{n-1}[|\omega'(k_{n-1,\Psi_n+2})|+|\omega'(k_{n-1,\Psi_n+1})|+|\omega'(k_{n-1,\Psi_n})|]$, yielding \eqref{varpi} when $\rho_n,\rho_{n-1}$ are small.

We also suppose that $3\rho_n<|k_{n,\Psi_n}|$.
Using $\varpi_{\mathcal B(k_{n,\Psi_n},\rho_n)}(k)$ as a test function, we find 
\begin{equation}\label{Lemma:Support:E1}
	\begin{aligned}
		&  \int_{\mathbb{R}^3}\mathrm{d}k f(T,k) \varpi_{B(k_{n,\Psi_n},\rho_n)}(k) - \int_{\mathbb{R}^3}\mathrm{d}k f(0,k) \varpi_{B(k_{n,\Psi_n},\rho_n)}(k)\\
		\ =\ & \int_{0}^{T}\mathrm{d}t\iiiint_{\mathbb{R}^{3\times4}}\mathrm{d}k_1\,\mathrm{d}k_2\,\mathrm{d}k_3\mathrm{d}k \delta(k+k_1-k_2-k_3)\delta(\omega + \omega_1 -\omega_2 - \omega_3)f_1f_2f_3 \\
		&\times \left[ \varpi_{\mathcal B(k_{n,\Psi_n},\rho_n)}(k) \ + \ \varpi_{\mathcal B(k_{n,\Psi_n},\rho_n)}(k_1) \ - \ \varpi_{\mathcal B(k_{n,\Psi_n},\rho_n)}(k_2) \ - \ \varpi_{\mathcal B(k_{n,\Psi_n},\rho_n)}(k_3)
		\right].
	\end{aligned}
\end{equation}

We  will estimate
\begin{equation}\label{Lemma:Support:E2}
	\begin{aligned}
		& \iiiint_{\mathbb{R}^{3\times4}}\mathrm{d}k_1\,\mathrm{d}k_2\,\mathrm{d}k_3\mathrm{d}k \delta(k+k_1-k_2-k_3)\delta(\omega + \omega_1 -\omega_2 - \omega_3)f_2f_3f_1 \\
		&\times \left[\varpi_{\mathcal B(k_{n,\Psi_n},\rho_n)}(k) \ + \ \varpi_{\mathcal B(k_{n,\Psi_n},\rho_n)}(k_1) \ - \ \varpi_{\mathcal B(k_{n,\Psi_n},\rho_n)}(k_2) \ - \ \varpi_{\mathcal B(k_{n,\Psi_n},\rho_n)}(k_3)
		\right].
	\end{aligned}
\end{equation}
To this end, we  divide the rest of the proof into smaller steps.

{\it Step 1: The first a priori estimate.} 

We first estimate, for $0<\zeta<\min\{1,|k_{n,\Psi_n}|-3\rho_n,|k_{n-1,\Psi_{n-1}}|-3\rho_{n-1}\}/1000$ being a sufficiently small constant
\begin{equation}\label{Lemma:Support:E3a}
	\begin{aligned}
		&	\iiiint_{\mathbb{R}^{3\times4}}\mathrm{d}k_1\,\mathrm{d}k_2\,\mathrm{d}k_3\mathrm{d}k \delta(k+k_1-k_2-k_3)\delta(\omega + \omega_1 -\omega_2 - \omega_3)f_2f_3f_1 \\
		&\times \Big[   \varpi_{\mathcal B(k_{n,\Psi_n},\rho_n)}(k_2) \ + \ \varpi_{\mathcal B(k_{n,\Psi_n},\rho_n)}(k_3) \\
		& \ - \ \varpi_{\mathcal B(k_{n,\Psi_n},\rho_n)}(k_1) \ - \ \varpi_{\mathcal B(k_{n,\Psi_n},\rho_n)}(k)\chi_{\{|k_1|,|k_3|\le \zeta\}\cup\{|k_1|,|k_2|\le \zeta\}}	\Big].
	\end{aligned}
\end{equation}
Using the same argument with \eqref{Lemma:TestFunction:E4},
we deduce
\begin{equation}\label{Lemma:Support:E3b}
	\begin{aligned}
		&	\iiiint_{\mathbb{R}^{3\times4}}\mathrm{d}k_1\,\mathrm{d}k_2\,\mathrm{d}k_3\mathrm{d}k \delta(k+k_1-k_2-k_3)\delta(\omega + \omega_1 -\omega_2 - \omega_3)f_2f_3f_1 \\
		&\times \Big[   \varpi_{\mathcal B(k_{n,\Psi_n},\rho_n)}(k_2) \ + \ \varpi_{\mathcal B(k_{n,\Psi_n},\rho_n)}(k_3)\\
		& \ - \ \varpi_{\mathcal B(k_{n,\Psi_n},\rho_n)}(k_1) \ - \ \varpi_{\mathcal B(k_{n,\Psi_n},\rho_n)}(k)\chi_{\{|k_1|,|k_3|\le \zeta\}\cup\{|k_1|,|k_2|\le \zeta\}}	\Big]\\
		\le\	&	 C_Q\iiint_{\mathbb{R}_+^{3}}\mathrm{d}|k_1|\,\mathrm{d}|k_2|\,\mathrm{d}|k_3|{|k_1||k_2||k_3|\min\{|k_1|,|k_2|,|k_3|,|k|\}}\frac{|k|}{\omega'(|k|)}  f_2f_3f_1\\
		&\times \Big[   \varpi_{\mathcal B(k_{n,\Psi_n},\rho_n)}(k_2) \ + \ \varpi_{\mathcal B(k_{n,\Psi_n},\rho_n)}(k_3)\\
		& \ - \ \varpi_{\mathcal B(k_{n,\Psi_n},\rho_n)}(k_1) \ - \ \varpi_{\mathcal B(k_{n,\Psi_n},\rho_n)}(k)\chi_{\{|k_1|,|k_3|\le \zeta\}\cup\{|k_1|,|k_2|\le \zeta\}}	\Big] \\
		\lesssim\	&	 \iiint_{\mathbb{R}_+^{3}}\mathrm{d}|k_1|\,\mathrm{d}|k_2|\,\mathrm{d}|k_3|\frac{\min\{|k|,|k_1|,|k_2|,|k_3|\}}{|k_3||k_1||k_2|}  f_2|k_2|^2f_3|k_3|^2f_1|k_1|^2\frac{|k|}{\omega'(|k|)} \\
		&\times \Big[   \varpi_{\mathcal B(k_{n,\Psi_n},\rho_n)}(k_2) \ + \ \varpi_{\mathcal B(k_{n,\Psi_n},\rho_n)}(k_3)\\
		& \ - \ \varpi_{\mathcal B(k_{n,\Psi_n},\rho_n)}(k_1) \ - \ \varpi_{\mathcal B(k_{n,\Psi_n},\rho_n)}(k)\chi_{\{|k_1|,|k_3|\le \zeta\}\cup\{|k_1|,|k_2|\le \zeta\}}	\Big],
	\end{aligned}
\end{equation}
where $|k|$ is the variable associated to $\omega=\omega_2+\omega_1-\omega_3$. 

Let us consider the following $4$ cases, for $k+k_1=k_2+k_3$ and $\omega + \omega_1 =\omega_2 + \omega_3$:

{\it Case (a):}  If $|k_1|=\min\{|k|,|k_1|,|k_2|,|k_3|\}$,  we consider also $2$ cases. If $|k_3|>\zeta$ and $|k_2|>\zeta$, then $|k|\ge \zeta$ and we bound
\begin{equation}\label{Lemma:Support:E3d}
	\begin{aligned}
		&	\iiiint_{\mathbb{R}^{3\times4}}\mathrm{d}k_1\,\mathrm{d}k_2\,\mathrm{d}k_3\mathrm{d}k \delta(k+k_1-k_2-k_3)\delta(\omega + \omega_1 -\omega_2 - \omega_3)f_2f_3f_1 \\
		&\times   \chi_{\{|k_1|=\min\{|k|,|k_1|,|k_2|,|k_3|\}\}}\chi_{\{|k_3|,|k_2|>\zeta\}}\\
		&\times \Big[   \varpi_{\mathcal B(k_{n,\Psi_n},\rho_n)}(k_2) \ + \ \varpi_{\mathcal B(k_{n,\Psi_n},\rho_n)}(k_3) \ - \ \varpi_{\mathcal B(k_{n,\Psi_n},\rho_n)}(k_1) 
		\\ &	 \ - \ \varpi_{\mathcal B(k_{n,\Psi_n},\rho_n)}(k)\chi_{\{|k_1|,|k_3|\le \zeta\}\cup\{|k_1|,|k_2|\le \zeta\}}\Big]\\
		\lesssim\	&	 \iiint_{\mathbb{R}_+^{3}}\mathrm{d}|k_1|\,\mathrm{d}|k_2|\,\mathrm{d}|k_3|\frac{\min\{|k|,|k_1|,|k_2|,|k_3|\}}{|k_3||k_1||k_2|}  f_2|k_2|^2f_3|k_3|^2f_1|k_1|^2\frac{|k|}{\omega'(|k|)} \chi_{\{|k_3|,|k_2|>\zeta\}}\\
		&\times \Big[   \varpi_{\mathcal B(k_{n,\Psi_n},\rho_n)}(k_2) \ + \ \varpi_{\mathcal B(k_{n,\Psi_n},\rho_n)}(k_3)\Big]\\
		\lesssim\	&	 \iiint_{\mathbb{R}_+^{3}}\mathrm{d}|k_1|\,\mathrm{d}|k_2|\,\mathrm{d}|k_3|\frac{\min\{|k|,|k_1|,|k_2|,|k_3|\}}{|k_3||k_1||k_2|}  f_2|k_2|^2f_3|k_3|^2f_1|k_1|^2|k|^\iota\chi_{\{|k_3|,|k_2|>\zeta\}}\\
		&\times \Big[   \varpi_{\mathcal B(k_{n,\Psi_n},\rho_n)}(k_2) \ + \ \varpi_{\mathcal B(k_{n,\Psi_n},\rho_n)}(k_3)\Big]\\
		\lesssim\	&	 \iiint_{\mathbb{R}_+^{3}}\mathrm{d}|k_1|\,\mathrm{d}|k_2|\,\mathrm{d}|k_3|\frac{\min\{|k|,|k_1|,|k_2|,|k_3|\}}{|k_3||k_1||k_2|}  f_2|k_2|^2f_3|k_3|^2f_1|k_1|^2(|k_1|^\iota+|k_2|^\iota+|k_3|^\iota)\chi_{\{|k_3|,|k_2|>\zeta\}}\\
		&\times \Big[   \varpi_{\mathcal B(k_{n,\Psi_n},\rho_n)}(k_2) \ + \ \varpi_{\mathcal B(k_{n,\Psi_n},\rho_n)}(k_3)\Big]\\
		\lesssim\	&		\frac{\mathfrak{C}_2'}{\zeta}  \int_{\mathbb{R}^3}\mathrm{d}k f(t,k) \chi_{\mathcal B(k_{n,\Psi_n},\rho_n)}(k),
	\end{aligned}
\end{equation}
for some universal constant $	\mathfrak{C}_2' >0$ independent of $\zeta$, where we have used the boundedness of $\int_{\mathbb{R}^3}\mathrm{d}k f\omega$ and $\int_{\mathbb{R}^3}\mathrm{d}k f$.

Next, we will consider the case when $|k_2|\le \zeta$ or $|k_3|\le \zeta$. Since $k_2,k_3$ are symmetric, we only consider the case when
$|k_3|\le \zeta$, then $|k_1|\le \zeta$ and $|k_2|\le |k|$,  we bound
\begin{equation}\label{Lemma:Support:E3e}
	\begin{aligned}
		&	\iiiint_{\mathbb{R}^{3\times4}}\mathrm{d}k_1\,\mathrm{d}k_2\,\mathrm{d}k_3\mathrm{d}k \delta(k+k_1-k_2-k_3)\delta(\omega + \omega_1 -\omega_2 - \omega_3)f_2f_3f_1    \chi_{\mathcal B(k_{n,\Psi_n},\rho_n)}(k_2)  \\
		&\times   \chi_{\{|k_1|=\min\{|k|,|k_1|,|k_2|,|k_3|\}\}}\chi_{\{|k_3|,|k_1|\le \zeta\}}\\
		&\times \Big[   \varpi_{\mathcal B(k_{n,\Psi_n},\rho_n)}(k_2) \ + \ \varpi_{\mathcal B(k_{n,\Psi_n},\rho_n)}(k_3) \ - \ \varpi_{\mathcal B(k_{n,\Psi_n},\rho_n)}(k_1)\\ 	&  \ - \ \varpi_{\mathcal B(k_{n,\Psi_n},\rho_n)}(k)\chi_{\{|k_1|,|k_3|\le \zeta\}\cup\{|k_1|,|k_2|\le \zeta\}}\Big]\\
		\le & \iiint_{\mathbb{R}_+^{3}}\mathrm{d}|k_1|\,\mathrm{d}|k_2|\,\mathrm{d}|k_3|\frac{\min\{|k|,|k_1|,|k_2|,|k_3|\}}{|k_3||k_1||k_2|}  f_2|k_2|^2f_3|k_3|^2f_1|k_1|^2\frac{|k|}{\omega'(|k|)} \\ 
		&\times   \chi_{\{|k_1|=\min\{|k|,|k_1|,|k_2|,|k_3|\}\}}\chi_{\{|k_3|,|k_1|\le \zeta\}}\\
		&\times \left[   \varpi_{\mathcal B(k_{n,\Psi_n},\rho_n)}(k_2) 	 \ - \ \varpi_{\mathcal B(k_{n,\Psi_n},\rho_n)}(k)\right]\\
		\lesssim & \iiint_{\mathbb{R}_+^{3}}\mathrm{d}|k_1|\,\mathrm{d}|k_2|\,\mathrm{d}|k_3|   \chi_{\{|k_1|=\min\{|k|,|k_1|,|k_2|,|k_3|\}\}}\chi_{\{|k_3|,|k_1|\le \zeta\}}  f_2|k_2|^2f_3|k_3|^2f_1|k_1|^2\frac{|k|}{\omega'(|k|)} \\
		&\times\frac{\varpi_{\mathcal B(k_{n,\Psi_n},\rho_n)}(k_2) 	 \ - \ \varpi_{\mathcal B(k_{n,\Psi_n},\rho_n)}(k)}{|k_3||k_2|}\\
		\lesssim & \iiint_{\mathbb{R}_+^{3}}\mathrm{d}|k_1|\,\mathrm{d}|k_2|\,\mathrm{d}|k_3|   \chi_{\{|k_1|=\min\{|k|,|k_1|,|k_2|,|k_3|\}\}}\chi_{\{|k_3|,|k_1|\le \zeta\}}  f_2|k_2|^2f_3|k_3|^2f_1|k_1|^2\\
		&\times\frac{-|k_2| 	 \ + \ |k|} {|k_3|}\frac{\Big|\varpi'_{\mathcal B(k_{n,\Psi_n},\rho_n)}(\xi(|k_2|,|k|))\Big|}{|k_2|}(|k_1|^\iota+|k_2|^\iota+|k_3|^\iota).
	\end{aligned}
\end{equation}
We have $0\le (|k|-|k_2|)\omega'(\sigma)=\omega-\omega_2=-\omega_1+\omega_3\le  C_\omega'|k_3|^{\alpha'}$. Since $\sigma>|k_{n,\Psi_n}|-3\rho_n$, we  have $0\le |k|-|k_2|\le  C_\omega''|k_3|^{\alpha'}\le  \zeta^{\alpha'-1}C_\omega''|k_3|$, yielding $0\le \frac{-|k_2| 	 \ + \ |k|} {|k_3|}\le C_\omega'' \zeta^{\alpha'-1}$. If $|k_2|\le [|k_{n,\Psi_n}|-3\rho_n]/10$, then $|k|\le [|k_{n,\Psi_n}|-3\rho_n]/2$, yielding $ \left[   \varpi_{\mathcal B(k_{n,\Psi_n},\rho_n)}(k_2) 	 \ - \ \varpi_{\mathcal B(k_{n,\Psi_n},\rho_n)}(k)\right]=0$. When $|k_2|>[|k_{n,\Psi_n}|-3\rho_n]/10$, since $\Big|\varpi'_{\mathcal B(k_{n,\Psi_n},\rho_n)}(\xi(|k_2|,|k|))\Big|$ is bounded, then $\frac{\Big|\varpi'_{\mathcal B(k_{n,\Psi_n},\rho_n)}(\xi(|k_2|,|k|))\Big|}{|k_2|}$  is bounded, we deduce

\begin{equation}\label{Lemma:Support:E3ee}
	\begin{aligned}
		&	\iiiint_{\mathbb{R}^{3\times4}}\mathrm{d}k_1\,\mathrm{d}k_2\,\mathrm{d}k_3\mathrm{d}k \delta(k+k_1-k_2-k_3)\delta(\omega + \omega_1 -\omega_2 - \omega_3)f_2f_3f_1    \chi_{\mathcal B(k_{n,\Psi_n},\rho_n)}(k_2)  \\
		&\times   \chi_{\{|k_2|=\min\{|k|,|k_1|,|k_2|,|k_3|\}\}}\chi_{\{|k_3|,|k_1|\le \zeta\}}\\
		&\times \Big[   \varpi_{\mathcal B(k_{n,\Psi_n},\rho_n)}(k_2) \ + \ \varpi_{\mathcal B(k_{n,\Psi_n},\rho_n)}(k_3) \ - \ \varpi_{\mathcal B(k_{n,\Psi_n},\rho_n)}(k_1)\\ & 	 \ - \ \varpi_{\mathcal B(k_{n,\Psi_n},\rho_n)}(k)\chi_{\{|k_1|,|k_3|\le c\}\cup\{|k_1|,|k_2|\le \zeta\}}\Big]\\
		\lesssim\	&		\mathfrak{C}_2' \zeta^{\alpha'-1},
	\end{aligned}
\end{equation}
for some universal constant $	\mathfrak{C}_2' >0$ independent of $\zeta$.

{\it Case (b):} If $|k_2|=\min\{|k|,|k_1|,|k_2|,|k_3|\}$, then on the support of $ \varpi_{\mathcal B(k_{n,\Psi_n},\rho_n)}(k_2) $, we always have $|k_1|,|k_2|,|k_3|>|k_{n,\Psi_n}|-2\rho_n>0$ and we bound
\begin{equation}\label{Lemma:Support:E3c}
	\begin{aligned}
		&	\iiiint_{\mathbb{R}^{3\times4}}\mathrm{d}k_1\,\mathrm{d}k_2\,\mathrm{d}k_3\mathrm{d}k \delta(k+k_1-k_2-k_3)\delta(\omega + \omega_1 -\omega_2 - \omega_3)f_2f_3f_1 \chi_{\{|k_2|=\min\{|k|,|k_1|,|k_2|,|k_3|\}\}}\\
		&\times \left[   \varpi_{\mathcal B(k_{n,\Psi_n},\rho_n)}(k_2) \right]\
		\lesssim\			\mathfrak{C}_2'  \int_{\mathbb{R}^3}\mathrm{d}k f(t,k) \varpi_{\mathcal B(k_{n,\Psi_n},\rho_n)}(k),
	\end{aligned}
\end{equation}
for some universal constant $	\mathfrak{C}_2' >0$. We consider $2$ cases. If $|k_1|>\zeta$, we bound
\begin{equation}\label{Lemma:Support:E3c:1}
	\begin{aligned}
		&	\iiiint_{\mathbb{R}^{3\times4}}\mathrm{d}k_1\,\mathrm{d}k_2\,\mathrm{d}k_3\mathrm{d}k \delta(k+k_1-k_2-k_3)\delta(\omega + \omega_1 -\omega_2 - \omega_3)f_2f_3f_1 \chi_{\{|k_2|=\min\{|k|,|k_1|,|k_2|,|k_3|\}\}}\\
		&\times \left[   \varpi_{\mathcal B(k_{n,\Psi_n},\rho_n)}(k_3)  \right]\chi_{\{|k_1|>\zeta\}}\
		\lesssim\	 		\frac{\mathfrak{C}_2'}{\zeta}  \int_{\mathbb{R}^3}\mathrm{d}k f(t,k) \varpi_{\mathcal B(k_{n,\Psi_n},\rho_n)}(k).
	\end{aligned}
\end{equation}
Now if $|k_1|\le \zeta$, we bound
\begin{equation}\label{Lemma:Support:E3c:2}
	\begin{aligned}
		&	\iiiint_{\mathbb{R}^{3\times4}}\mathrm{d}k_1\,\mathrm{d}k_2\,\mathrm{d}k_3\mathrm{d}k \delta(k+k_1-k_2-k_3)\delta(\omega + \omega_1 -\omega_2 - \omega_3)f_2f_3f_1 \chi_{\{|k_2|=\min\{|k|,|k_1|,|k_2|,|k_3|\}\}}\\
		&\times \left[   \varpi_{\mathcal B(k_{n,\Psi_n},\rho_n)}(k_3) \ - \ \varpi_{\mathcal B(k_{n,\Psi_n},\rho_n)}(k)\chi_{\{|k_1|,|k_3|\le \zeta\}\cup\{|k_1|,|k_2|\le \zeta\}} \right]\chi_{\{|k_1|\le \zeta\}}\\
		\lesssim &		\iiiint_{\mathbb{R}^{3\times4}}\mathrm{d}k_1\,\mathrm{d}k_2\,\mathrm{d}k_3\mathrm{d}k \delta(k+k_1-k_2-k_3)\delta(\omega + \omega_1 -\omega_2 - \omega_3)f_2f_3f_1 \chi_{\{|k_2|=\min\{|k|,|k_1|,|k_2|,|k_3|\}\}}\\
		&\times \chi_{\{|k_1|,|k_2|\le \zeta\}} \left[   \varpi_{\mathcal B(k_{n,\Psi_n},\rho_n)}(k_3) \ - \ \varpi_{\mathcal B(k_{n,\Psi_n},\rho_n)}(k)\right]\\
		\lesssim &		\iiiint_{\mathbb{R}^{3\times4}}\mathrm{d}k_1\,\mathrm{d}k_2\,\mathrm{d}k_3\mathrm{d}k \delta(k+k_1-k_2-k_3)\delta(\omega + \omega_1 -\omega_2 - \omega_3)f_2f_3f_1 \chi_{\{|k_2|=\min\{|k|,|k_1|,|k_2|,|k_3|\}\}}\\
		&\times \chi_{\{|k_1|,|k_2|\le \zeta\}} \left[   |\varpi'_{\mathcal B(k_{n,\Psi_n},\rho_n)}(\xi(|k_3|,|k|))| [|k_3|-|k|]\right].
	\end{aligned}
\end{equation}

We have $0\le (|k_3|-|k|)\omega'(\sigma)=\omega_3-\omega=\omega_1-\omega_2\le  C_\omega'|k_1|^{\alpha'}$. Since $\sigma>|k_{n,\Psi_n}|-3\rho_n$, we  have $0\le |k_3|-|k|\le  C_\omega''|k_1|^{\alpha'}\le  C_\omega''|k_1|\zeta^{\alpha'-1}$. By a similar argument with \eqref{Lemma:Support:E3e}, we bound 

\begin{equation}\label{Lemma:Support:E3c:3}
	\begin{aligned}
		&	\iiiint_{\mathbb{R}^{3\times4}}\mathrm{d}k_1\,\mathrm{d}k_2\,\mathrm{d}k_3\mathrm{d}k \delta(k+k_1-k_2-k_3)\delta(\omega + \omega_1 -\omega_2 - \omega_3)f_2f_3f_1 \chi_{\{|k_2|=\min\{|k|,|k_1|,|k_2|,|k_3|\}\}}\\
		&\times \left[   \varpi_{\mathcal B(k_{n,\Psi_n},\rho_n)}(k_3) \ - \ \varpi_{\mathcal B(k_{n,\Psi_n},\rho_n)}(k)\chi_{\{|k_1|,|k_3|\le \zeta\}\cup\{|k_1|,|k_2|\le \zeta\}} \right]\chi_{\{|k_1|\le \zeta\}}\\
		\lesssim\	&		\mathfrak{C}_2' \zeta^{\alpha'-1},
	\end{aligned}
\end{equation}
for some universal constant $	\mathfrak{C}_2' >0$.

{\it Case (c):}  If $|k_3|=\min\{|k|,|k_1|,|k_2|,|k_3|\}$,  then since $k_2,k_3$ are symmetric, the same argument  in Case (b) can be repeated. Finally, we also arrive at 
\begin{equation}\label{Lemma:Support:E3f}
	\begin{aligned}
		&	\iiiint_{\mathbb{R}^{3\times4}}\mathrm{d}k_1\,\mathrm{d}k_2\,\mathrm{d}k_3\mathrm{d}k \delta(k+k_1-k_2-k_3)\delta(\omega + \omega_1 -\omega_2 - \omega_3)f_2f_3f_1 \chi_{\{|k_3|=\min\{|k|,|k_1|,|k_2|,|k_3|\}\}}\\
		&\times \left[   \varpi_{\mathcal B(k_{n,\Psi_n},\rho_n)}(k_2) \ + \ \varpi_{\mathcal B(k_{n,\Psi_n},\rho_n)}(k_3) \ - \ \varpi_{\mathcal B(k_{n,\Psi_n},\rho_n)}(k_1) 	 \ - \ \varpi_{\mathcal B(k_{n,\Psi_n},\rho_n)}(k)\chi_{\{|k_1|,|k_3|\le  \zeta\}}\right]\\
		\lesssim\	&		\frac{\mathfrak{C}_2' }{\zeta} \int_{\mathbb{R}^3}\mathrm{d}k f(t,k) \varpi_{\mathcal B(k_{n,\Psi_n},\rho_n)}(k)+\mathfrak{C}_2'\zeta^{\alpha'-1},
	\end{aligned}
\end{equation}
for some universal constant $	\mathfrak{C}_2' >0$.

{\it Case (d):}  If $|k|=\min\{|k|,|k_1|,|k_2|,|k_3|\}$,  then since $\omega+\omega_1=\omega_2+\omega_3$, on the support of $\varpi_{\mathcal B(k_{n,\Psi_n},\rho_n)}(k_2)$ we have $\omega\le \omega_3$ and $\omega_1 \ge \omega_2$, yielding $|k_1|\ge |k_2|$. Therefore
\begin{equation}\label{Lemma:Support:E3f}
	\begin{aligned}
		&	\iiiint_{\mathbb{R}^{3\times4}}\mathrm{d}k_1\,\mathrm{d}k_2\,\mathrm{d}k_3\mathrm{d}k \delta(k+k_1-k_2-k_3)\delta(\omega + \omega_1 -\omega_2 - \omega_3)f_2f_3f_1 \chi_{\{|k|=\min\{|k|,|k_1|,|k_2|,|k_3|\}\}}\\
		&\times \left[   \varpi_{\mathcal B(k_{n,\Psi_n},\rho_n)}(k_2)  \ - \ \chi_{\mathcal B(k_{n,\Psi_n},\rho_n)}(k_1) 	 \ - \ \varpi_{\mathcal B(k_{n,\Psi_n},\rho_n)}(k)\chi_{\{|k_1|,|k_3|\le \zeta\}\cup\{|k_1|,|k_2|\le  \zeta\}}\right]\\
		\lesssim\	&		\mathfrak{C}_2'  \int_{\mathbb{R}^3}\mathrm{d}k f(t,k) \varpi_{\mathcal B(k_{n,\Psi_n},\rho_n)}(k),
	\end{aligned}
\end{equation}
for some universal constant $	\mathfrak{C}_2' >0$. On the support of $\varpi_{\mathcal B(k_{n,\Psi_n},\rho_n)}(k_3)$, we have $|k_1|\ge|k_3|$, and
\begin{equation}\label{Lemma:Support:E3f:1}
	\begin{aligned}
		&	\iiiint_{\mathbb{R}^{3\times4}}\mathrm{d}k_1\,\mathrm{d}k_2\,\mathrm{d}k_3\mathrm{d}k \delta(k+k_1-k_2-k_3)\delta(\omega + \omega_1 -\omega_2 - \omega_3)f_2f_3f_1 \chi_{\{|k|=\min\{|k|,|k_1|,|k_2|,|k_3|\}\}}\\
		&\times \left[   \varpi_{\mathcal B(k_{n,\Psi_n},\rho_n)}(k_3)  \right]\
		\lesssim\		\mathfrak{C}_2'  \int_{\mathbb{R}^3}\mathrm{d}k f(t,k) \varpi_{\mathcal B(k_{n,\Psi_n},\rho_n)}(k).
	\end{aligned}
\end{equation}

Combining all the cases, we bound
\begin{equation}\label{Lemma:Support:E3}
	\begin{aligned}
		&		 		 	\iiiint_{\mathbb{R}^{3\times4}}\mathrm{d}k_1\,\mathrm{d}k_2\,\mathrm{d}k_3\mathrm{d}k \delta(k+k_1-k_2-k_3)\delta(\omega + \omega_1 -\omega_2 - \omega_3)f_2f_3f_1 \\
		&\times \Big[   \varpi_{\mathcal B(k_{n,\Psi_n},\rho_n)}(k_2) \ + \ \varpi_{\mathcal B(k_{n,\Psi_n},\rho_n)}(k_3)\\
		&  \ - \ \varpi_{\mathcal B(k_{n,\Psi_n},\rho_n)}(k_1)  \ - \ 
		\varpi_{\mathcal B(k_{n,\Psi_n},\rho_n)}(k)\chi_{\{|k_1|,|k_3|\le \zeta\}\cup\{|k_1|,|k_2|\le \zeta\}}	\Big]\\
		\le \	&	\frac{\mathfrak{C}_2}{\zeta}  \int_{\mathbb{R}^3}\mathrm{d}k f(t,k) \varpi_{\mathcal B(k_{n,\Psi_n},\rho_n)}(k)+\mathfrak{C}_2\zeta^{\alpha'-1},
	\end{aligned}
\end{equation}
for some universal constant $\mathfrak{C}_2>0$ independent of $\zeta$.

{\it Step 2: The second a priori estimate.} 

Next, we will bound
\begin{equation}\label{Lemma:Support:E3:1}
	\begin{aligned}
		&  \iiiint_{\mathbb{R}^{3\times4}}\mathrm{d}k_1\,\mathrm{d}k_2\,\mathrm{d}k_3\mathrm{d}k \delta(k+k_1-k_2-k_3)\delta(\omega + \omega_1 -\omega_2 - \omega_3)f_2f_3f_1 \\
		&\times  \varpi_{\mathcal B(k_{n,\Psi_n},\rho_n)}(k)[1-\chi_{\{|k_1|,|k_3|\le \zeta\}\cup\{|k_1|,|k_2|\le \zeta\}}]\\
		\ge\		& \iiiint_{\mathbb{R}^{3\times4}}\mathrm{d}k_1\,\mathrm{d}k_2\,\mathrm{d}k_3\mathrm{d}k \delta(k+k_1-k_2-k_3)\delta(\omega + \omega_1 -\omega_2 - \omega_3)f_2f_3f_1 \\
		&\times  \left[  \varpi_{\mathcal B(k_{n-1,\Psi_n+2},\rho_{n-1})}(k_1)   \varpi_{\mathcal B(k_{n-1,\Psi_n+1},\rho_{n-1})}(k_2)  \varpi_{\mathcal B(k_{n-1,\Psi_n},\rho_{n-1})}(k_3)
		\right].
	\end{aligned}
\end{equation}
We 
set $x=k_1$ and we define the resonant manifold $\mathscr{S}_{k_2,k_3}$  to be the set of all $x$ that satisfies
\begin{equation}\label{proposition:T2L2:E3}
	\mathscr{G}(x) \ := \ \omega(k_2+k_3-x) + \omega (x) - \omega(k_2)-\omega(k_3)=0.\end{equation}
We can then have a new presentation of the right hand side of \eqref{Lemma:Support:E3:1}
\begin{equation}\begin{aligned}\label{Lemma:Support:E3:2}
		& 
		\iint_{\mathbb{R}^{3\times2}}\mathrm{d}k_3\,\mathrm{d}k_2f_2f_3 \varpi_{\mathcal B(k_{n-1,\Psi_n+1},\rho_{n-1})}(k_2)  \varpi_{\mathcal B(k_{n-1,\Psi_n},\rho_{n-1})}(k_3)\\
		&\ \times\left(\int_{\mathscr{S}_{k_2,k_3}}\mathrm d\mu(x)\frac{\varpi_{\mathcal B(k_{n-1,\Psi_n},\rho_{n-1})}(x)f(x)  }{|\nabla_x \mathscr{G}(x)|} \right).	\end{aligned}
\end{equation}
where $\mu$ is the surface measure on $\mathscr{S}_{k_2,k_3}$.

Setting $k_2+k_3=\gamma$, we compute the gradient of $\mathscr{G}$ 
$$\nabla_x \mathscr{G} = \frac{x - \gamma}{|x-\gamma|} \omega'(|\gamma -x|) + \frac{x}{|x|} \omega'(|x|).$$
We denote $\mathscr V$ to be an arbitrary vector orthogonal to $\gamma$: $\mathscr V \cdot \gamma =0$. We now set  $x=\beta \gamma +\mathscr V, \beta\in\mathbb{R}$. We compute the derivative of $ \mathscr{G}$ in the direction of $\mathscr V$
\begin{equation*} \mathscr V \cdot \nabla_x \mathscr{G} = |\mathscr V|^2 \Big[ \frac{\omega'(|\gamma -x|)}{|\gamma-x|} + \frac{\omega'(|x|) }{|x|}\Big]   > 0.\end{equation*}

The above computation implies that the intersection between the manifold $\mathscr{S}_{k_2,k_2}$ and the  hyperplane $$\mathscr{P}_\beta= \Big\{ \beta\gamma + \mathscr V, \gamma\cdot \mathscr V = 0\Big\}$$ 
can be either empty or the circle centered at $\beta\gamma$. This circle has a radius $\mathscr R_\beta$, with  $\beta \in \mathbb{R}_+$.

The manifold $\mathscr{S}_{k_2,k_3}$ can now be parametrized as  follows. We denote $\gamma^\perp$ to be an arbitrary   vector orthogonal to  $\gamma$. We set  $V_\alpha$ to be a  unit vector in $ \mathscr{P}_0=\{ \gamma\cdot \mathscr V = 0 \}$ and we suppose that the angle between $\gamma^\perp$ and $V_\alpha$ is $\alpha$. The manifold  $\mathscr{S}_{k_2,k_3}$ is now
\begin{equation}
	\Big\{ x_\beta = \beta\gamma + \mathscr R_\beta V_\alpha ~:~ \alpha \in [0,2\pi], ~\beta \in \mathfrak A_{k_2,k_3} \Big\},
\end{equation}
where $\mathfrak A_{k_2,k_3}$ is the set of $\beta$ such that there exists a solution to $\mathscr{G}(x) = 0$. 

The function $\mathscr G$ can now be regarded as  $\mathscr G = \mathscr G( \mathscr R,\beta)$ and $\partial_{\mathscr R} \mathscr G > 0$ for $\mathscr R>0$. By the implicit function theorem, we deduce that the  set of $\mathscr{G}$ can be parameterized as
$$
\{ (\beta, \mathscr R = \mathscr R_\beta), \, \beta \in \mathfrak A_{k_2,k_3}\},
$$
in which $\beta \mapsto \mathscr R_\beta$ is a smooth function on $\mathfrak A_{k_2,k_3}$ vanishing on the boundary. 

Since $\mathscr{G}(x_\beta) =0$ for all $\beta$, we can keep $\alpha$ fixed and compute  
\begin{equation}
	\begin{aligned}
		0 &= \partial_\beta x_\beta \cdot \nabla_x \mathscr{G} =\partial_\beta x_\beta \cdot \left(\frac{x_\beta -\gamma}{|x_\beta -\gamma|}\omega'(|x_\beta -\gamma|)+\frac{x_\beta}{|x_\beta|}\omega'(|x_\beta|)\right)\\
		&=\partial_\beta x_\beta \cdot \left(\frac{x_\beta}{|x_\beta -\gamma|}\omega'(|x_\beta -\gamma|)+\frac{x_\beta}{|x_\beta|}\omega'(|x_\beta|)\right) - \partial_\beta x_\beta \cdot \frac{\gamma}{|x_\beta -\gamma|}\omega'(|x_\beta -\gamma|)\\
		& = \frac12 \partial_\alpha |x_\beta|^2  \Big[ \frac{\omega'(|\gamma -x_\beta|)}{|\gamma-x_\beta|} + \frac{\omega'(|x_\beta|) }{|x_\beta|}\Big]  - |\gamma|^2 \frac{\omega'(|\gamma -x_\beta|)}{|\gamma-x_\beta|},
\end{aligned}\end{equation}
yielding,
\begin{equation}
	\partial_\beta|x_\beta|^2  = 2
	\frac{ \frac{\omega'(|\gamma -x_\beta|)}{|\gamma-x_\beta|}|\gamma|^2 }{ \frac{\omega'(|\gamma -x_\beta|)}{|\gamma-x_\beta|} + \frac{\omega'(|x_\beta|) }{|x_\beta|}}>0,
\end{equation}
which means the change of variables $\beta \to |x_\beta|$ is acceptable.

Since the vector $\partial_\alpha V_\alpha$ is orthogonal to the two vectors $\gamma$ and $ V_\alpha$,  the surface area can be computed
\begin{equation}\begin{aligned}
		\mathrm	d \mu (x) &= |\partial_\beta x \times \partial_\alpha x | \mathrm d\alpha \mathrm d\beta  = \Big |(\gamma+\partial_\beta \mathscr R_\beta V_\alpha) \times   \mathscr R_\beta \partial_\alpha V_\alpha \Big |\mathrm d\alpha \mathrm d\beta
		\\
		&=
		\sqrt{|\gamma|^2  \mathscr R_\beta ^2+ \frac14| \partial_\beta (  \mathscr R_\beta^2)|^2}\mathrm d\alpha\mathrm d\beta .
\end{aligned}\end{equation}
Since $$|x|^2 = \beta^2 |\gamma|^2 + 
\mathscr R_\beta^2,$$
we deduce
\begin{equation}
	\begin{aligned}
		\partial_\beta  \mathscr R_\beta^2  &
		&= 
		2|\gamma|^2 \frac{\beta   \frac{\omega'(|x_\beta|) }{|x_\beta|} +(\beta-1)  \frac{\omega'(|\gamma -z_\beta|)}{|\gamma-x_\beta|}}{\frac{\omega'(|\rho -x_\beta|)}{|\gamma-x_\beta|} + \frac{\omega'(x_\beta) }{|x_\beta|}}.
\end{aligned}\end{equation}
Next, we compute $|\nabla_x\mathscr{G}|$  
\begin{equation*}
	\begin{aligned}
		|\nabla_x \mathscr{G}|^2 \ =  &\left|\frac{x_\beta}{|x_\beta|}\omega'(|x_\beta|)+\frac{x_\beta -\gamma}{|x_\beta -\gamma|}\omega'(|x_\beta -\gamma|)\right|^2\\
		= & \left|\frac{\beta \gamma+V}{|x_\beta|}\omega'(|x_\beta|)+\frac{(\beta-1)\gamma+q}{|x_\beta -\gamma|}\omega'(|x_\beta -\gamma|)\right|^2\\
		\ =  &\  |\gamma|^2\left[\beta   \frac{\omega'(|x_\beta|) }{|z_\beta|} +(\beta-1)  \frac{\omega'(|\gamma -x_\beta|)}{|\gamma-x_\beta|}\right]^2 + \mathscr R_\beta^2 \left[\frac{\omega'(|\gamma -x_\beta|)}{|\gamma-x_\beta|} + \frac{\omega'(|x_\beta|) }{|x_\beta|}\right]^2,
	\end{aligned}
\end{equation*}
which implies
\begin{equation} 
	\begin{aligned}
		|\nabla_x \mathscr{G}|^2  \ =  &\  \frac{\left|\partial_\beta \mathscr R_\beta^2\right|^2}{4|\gamma|^2}\left[\frac{\omega'(|\gamma -x_\beta|)}{|\gamma-x_\beta|} + \frac{\omega'(|x_\beta|) }{|x_\beta|}\right]^2 +\mathscr R_\beta^2 \left[\frac{\omega'(|\gamma -x_\beta|)}{|\gamma-x_\beta|} + \frac{\omega'(|x_\beta|) }{|x_\beta|}\right]^2.
	\end{aligned}
\end{equation}
Therefore, we deduce
\begin{equation}\label{proposition:T2L2:E9a}\begin{aligned}
		\frac{ \mathrm d \mu (x)}{ |\nabla_x \mathscr{G}|} &= \frac{|\gamma|}{\frac{\omega'(|\gamma -x_\beta|)}{|\gamma-x_\beta|} + \frac{\omega'(|x_\beta|) }{|x_\beta|}}\, \mathrm  d\alpha\,\mathrm d\beta.
\end{aligned}\end{equation}
We then introduce $|x | = \sqrt{\beta^2 |\gamma|^2 +\mathscr R_\beta^2}$  and conclude
$$
\frac{\mathrm d \mu (x)}{ |\nabla_x \mathscr{G}|} = \frac{|\gamma-x|}{\omega'(|\gamma -x|) |\gamma|} |x|\,\mathrm d|x| \,\mathrm d\alpha,
$$
which, in combination with \eqref{Lemma:Support:E3:1}, \eqref{Lemma:Support:E3:2}, yields
\begin{equation}\label{Lemma:Support:E3:3}
	\begin{aligned}
		&  \iiiint_{\mathbb{R}^{3\times4}}\mathrm{d}k_1\,\mathrm{d}k_2\,\mathrm{d}k_3\mathrm{d}k \delta(k+k_1-k_2-k_3)\delta(\omega + \omega_1 -\omega_2 - \omega_3)f_2f_3f_1   \varpi_{\mathcal B(k_{n,\Psi_n},\rho_n)}(k)\\
		\ge\		&	 
		\iint_{\mathbb{R}^{3\times2}}\mathrm{d}k_3\,\mathrm{d}k_2f_2f_3\varpi_{\mathcal B(k_{n-1,\Psi_n+1},\rho_{n-1})}(k_2)  \varpi_{\mathcal B(k_{n-1,\Psi_n },\rho_{n-1})}(k_3)\\
		&\ \times\left( \int_0^{2\pi}\mathrm d\alpha\int_{A_{k_2,k_3}}^{B_{k_2,k_3}}\mathrm d|x|\frac{\varpi_{\mathcal B(k_{n-1,\Psi_n+2},\rho_{n-1})}(x)f(|x|) |k_2+k_3 -x| |x|}{\omega'(|k_2+k_3 -x|) |k_2+k_3| } \right),
	\end{aligned}
\end{equation}
where ${A_{k_2,k_3}},{B_{k_2,k_3}}$ are the boundaries for the integration in $\mathrm d|x|$. By our assumption on $\omega$, we can then bound
\begin{equation}\label{Lemma:Support:E3:5}
	\begin{aligned}
		&  \iiiint_{\mathbb{R}^{3\times4}}\mathrm{d}k_1\,\mathrm{d}k_2\,\mathrm{d}k_3\mathrm{d}k \delta(k+k_1-k_2-k_3)\delta(\omega + \omega_1 -\omega_2 - \omega_3)f_2f_3f_1   \varpi_{\mathcal B(k_{n,\Psi_n},\rho_n)}(k)\\
		\gtrsim \		&	 
		\iint_{\mathbb{R}^{3\times2}}\mathrm{d}k_3\,\mathrm{d}k_2f_2f_3 \varpi_{\mathcal B(k_{n-1,\Psi_n+1},\rho_{n-1})}(k_2)  \varpi_{\mathcal B(k_{n-1,\Psi_n },\rho_{n-1})}(k_3)\\
		&\ \times\left( \int_0^{2\pi}\mathrm d\alpha\int_{A_{k_2,k_3}}^{\mathcal B_{k_2,k_3}}\mathrm d|x|\frac{\varpi_{\mathcal B(k_{n-1,\Psi_n+2},\rho_{n-1})}(x)f(|x|) |x| }{ |k_2+k_3| } \right).
	\end{aligned}
\end{equation}
According to our construction, 
\begin{equation}
	\label{Lemma:Support:E3:6}
	\omega(k_{n-1,\Psi_n+1})\ +\ \omega(k_{n-1,\Psi_n}) \ = \ \omega(k_{n-1,\Psi_n+1}+k_{n-1,\Psi_n}-k_{n-1,\Psi_n+2}) \ + \ \omega(k_{n-1,\Psi_n+2}),
\end{equation}
we construct a new function
\begin{equation}
	\label{Lemma:Support:E3:6}\begin{aligned}
		F(x_1,x_2)\  :=\ &	\omega(k_{n-1,\Psi_n+1}+x_2)\ +\ \omega(k_{n-1,\Psi_n}-x_2)\\
		& \ - \ \omega(k_{n-1,\Psi_n+2}+x_1) \ - \ \omega(k_{n-1,\Psi_n+1}+k_{n-1,\Psi_n}-k_{n-1,\Psi_n+2}-x_1),\end{aligned}
\end{equation}
then $F(0,0)=0.$ Suppose that $x_1=(x_1^1,x_1^2,x_1^3)$ and $x_2=(x_2^1,x_2^2,x_2^3)$
we compute the partial derivative in $x_1^i$, with $i=1,2,3$, 
\begin{equation}
	\label{Lemma:Support:E3:7}\begin{aligned}
		&	\partial_{x_1^i}	F(x_1,x_2)\  =\ \ - \ \frac{\omega'(|k_{n-1,\Psi_n+2}+x_1|)}{|k_{n-1,\Psi_n+2}+x_1|}(k_{n-1,\Psi_n+2}^i+x_1^i)\\
		& \ + \ \frac{\omega'(|k_{n-1,\Psi_n+1}+k_{n-1,\Psi_n}-k_{n-1,\Psi_n+2}-x_1|)}{|k_{n-1,\Psi_n+1}+k_{n-1,\Psi_n}-k_{n-1,\Psi_n+2}-x_1|}(k_{n-1,\Psi_n+1}^i+k^i_{n-1,\Psi_n}-k_{n-1,\Psi_n+2}^i-x_1^i),\end{aligned}
\end{equation}
in which $k_{n-1,\Psi_n+1}=(k_{n-1,\Psi_n+1}^1,$ $k_{n-1,\Psi_n+1}^2,k_{n-1,\Psi_n+1}^3)$, $k_{n-1,\Psi_n+2}=(k_{n-1,\Psi_n+2}^1,k_{n-1,\Psi_n+2}^2,$ $k_{n-1,\Psi_n+2}^3)$, $k_{n-1,\Psi_n}=(k_{n-1,\Psi_n}^1,k_{n-1,\Psi_n}^2,$ $k_{n-1,\Psi_n}^3)$.

Replacing $(x_1,x_2)=(0,0)$ in the above computation gives
\begin{equation}
	\label{Lemma:Support:E3:9}\begin{aligned}
		&	\partial_{x_1^i}	F(0,0)\  =\ \ - \ \frac{\omega'(|k_{n-1,\Psi_n+2}|)}{|k_{n-1,\Psi_n+2}|}(k_{n-1,\Psi_n+2}^i)\\
		& \ + \ \frac{\omega'(|k_{n-1,\Psi_n+1}+k_{n-1,\Psi_n}-k_{n-1,\Psi_n+2}|)}{|k_{n-1,\Psi_n+1}+k_{n-1,\Psi_n}-k_{n-1,\Psi_n+2}|}(k_{n-1,\Psi_n+1}^i+k^i_{n-1,\Psi_n}-k_{n-1,\Psi_n+2}^i).\end{aligned}
\end{equation}
We will show that  $(	\partial_{x_1^1}	F(0,0),\partial_{x_1^2}	F(0,0),\partial_{x_1^3}	F(0,0))
$ is not zero. Suppose the contrary, we have
\begin{equation}
	\label{Lemma:Support:E3:9a}\begin{aligned}
		&	  \frac{\omega'(|k_{n-1,\Psi_n+2}|)}{|k_{n-1,\Psi_n+2}|}(k_{n-1,\Psi_n+2} )\\
\ = \		&  \frac{\omega'(|k_{n-1,\Psi_n+1}+k_{n-1,\Psi_n}-k_{n-1,\Psi_n+2}|)}{|k_{n-1,\Psi_n+1}+k_{n-1,\Psi_n}-k_{n-1,\Psi_n+2}|}(k_{n-1,\Psi_n+1}+k_{n-1,\Psi_n}-k_{n-1,\Psi_n+2} ),\end{aligned}
\end{equation}
which implies $\omega'(|k_{n-1,\Psi_n+2}|)=\omega'(|k_{n-1,\Psi_n+1}+k_{n-1,\Psi_n}-k_{n-1,\Psi_n+2}|)$, yielding $|k_{n-1,\Psi_n+2}|=|k_{n-1,\Psi_n+1}+k_{n-1,\Psi_n}-k_{n-1,\Psi_n+2}|$. Hence, from \eqref{Lemma:Support:E3:9a}, we obtain $k_{n-1,\Psi_n+2}=k_{n-1,\Psi_n+1}+k_{n-1,\Psi_n}-k_{n-1,\Psi_n+2}$, yielding
$$\omega(|k_{n-1,\Psi_n+1}|) \ - \ \omega(|k_{n-1,\Psi_n+2}|) \ = \ \omega(|k_{n-1,\Psi_n+2}|) \ - \ \omega(|2k_{n-1,\Psi_n+2}-k_{n-1,\Psi_n+1}|),$$
which leads to
$$\int_{|k_{n-1,\Psi_n+2}|}^{|k_{n-1,\Psi_n+1}|}\mathrm{d}|\xi_1|\omega'(|\xi_1|)   \ = \ \int_{|2k_{n-1,\Psi_n+2}-k_{n-1,\Psi_n+1}|}^{|k_{n-1,\Psi_n+2}|}\mathrm{d}|\xi_2|\omega'(|\xi_2|).$$
Since $ |k_{n-1,\Psi_n+1}|  \ - \  |k_{n-1,\Psi_n+2}|  \ \ge  \  |k_{n-1,\Psi_n+2}|  \ - \  |2k_{n-1,\Psi_n+2}-k_{n-1,\Psi_n+1}| ,$ 
and by the convexity of $\omega(|k|)$, we deduce $k_{n-1,\Psi_n+1}=k_{n-1,\Psi_n}=k_{n-1,\Psi_n+2}$. That contradicts our construction. 

Since the vector $(	\partial_{x_1^1}	F(0,0),\partial_{x_1^2}	F(0,0),\partial_{x_1^3}	F(0,0))
$ is not zero, there exists $j\in\{1,2,3\}$ such that $	\partial_{x_1^j}	F(0,0)\ne 0$. By the implicit function theorem, there exits an interval $(-\delta,\delta)$ and a function $g:(-\delta,\delta)\to \mathbb{R}$ such that for any $\epsilon\in (-\delta,\delta)$, if we define the two new vectors $X_1=(X_1^1,X_1^2,X_1^3)$, $X_2=(X_2^1,X_2^2,X_2^3)$ as follows $X_1^j(\epsilon)=\epsilon$ and $X_1^i(\epsilon)=0$ for $i\in \{1,2,3\}\backslash\{j\}$, $X_2^j(\epsilon)=g(\epsilon)$, $X_2^i(\epsilon)=0$ for $i\in \{1,2,3\}\backslash\{j\}$, then $F(X_1,X_2)=0.$ Since we can adjust $\rho$ such that $\rho_{n-1}<\delta$, we bound \eqref{Lemma:Support:E3:5} as
\begin{equation}\label{Lemma:Support:E3:10}
	\begin{aligned}
		&  \iiiint_{\mathbb{R}^{3\times4}}\mathrm{d}k_1\,\mathrm{d}k_2\,\mathrm{d}k_3\mathrm{d}k \delta(k+k_1-k_2-k_3)\delta(\omega + \omega_1 -\omega_2 - \omega_3)f_2f_3f_1   \varpi_{\mathcal B(k_{n,\Psi_n},\rho_n)}(k)\\
		\gtrsim \		&	 
		\iint_{\mathbb{R}^{3\times2}}\mathrm{d}k_3\,\mathrm{d}k_2f_2f_3 \chi_{\mathcal B(k_{n-1,\Psi_n+1},\rho_{n-1})}(k_2)  \varpi_{\mathcal B(k_{n-1,\Psi_n+2},\rho_{n-1})}(k_3)\\
		&\ \times\left( \int_0^{2\pi}\mathrm d\alpha\int_{|k_{n-1,\Psi_n+2}|-v}^{|k_{n-1,\Psi_n+2}| +v }\mathrm d|x|\frac{f(|x|)  |x|}{ |k_2+k_3| } \right),
	\end{aligned}
\end{equation}
for some constant  $ 0<v<\delta$.
Therefore,
\begin{equation}\label{Lemma:Support:E4}
	\begin{aligned}
		& \iiiint_{\mathbb{R}^{3\times4}}\mathrm{d}k_1\,\mathrm{d}k_2\,\mathrm{d}k_3\mathrm{d}k \delta(k+k_1-k_2-k_3)\delta(\omega + \omega_1 -\omega_2 - \omega_3)f_2f_3(f+f_1) \\
		&\times  \left[ \varpi_{\mathcal B(k_{n-1,\Psi_n},\rho_{n-1})}(k_1)   \varpi_{\mathcal B(k_{n-1,\Psi_n+1},\rho_{n-1})}(k_2)  \varpi_{\mathcal B(k_{n-1,\Psi_n+2},\rho_{n-1})}(k_3)
		\right]\\
		\ \ge \  & \mathfrak{C}_1 \left[\int_{\mathbb{R}^{3}}\mathrm{d}k f(t,k) \varpi_{\mathcal B(k_{n-1,\Psi_n},\rho_{n-1})}(k) \right] \left[\int_{\mathbb{R}^{3}}\mathrm{d}k f(t,k) \varpi_{\mathcal B(k_{n-1,\Psi_n+1},\rho_{n-1})}(k)\right]\\
		&\times \left[\int_{\mathbb{R}^{3}}\mathrm{d}k f(t,k) \varpi_{\mathcal B(k_{n-1,\Psi_n+2},\rho_{n-1})}(k)\right],
	\end{aligned}
\end{equation}
for some universal constant $\mathfrak{C}_1>0$.

	{\it Step 3: An induction argument.}

Let $t_0=\zeta_0\tau$, where  $0<\tau<1$ and $\zeta_0$ is a small parameter that will be specified later. 
We follow the inductive argument in $n$ introduced in  \cite{ToanBinh}. Let us emphasize that the induction only runs for a finite number of steps, where $n$ goes from $0$ to $M$. For $n=0$, using \eqref{Lemma:Support:E3}, we have
\begin{equation}\label{Lemma:Support:E6}
	\begin{aligned}
		 \partial_t\int_{\mathbb{R}^3}\mathrm{d}k f(t,k) \varpi_{\mathcal B(k_{0,i},\rho_0)}(k) 		 \ge \  &  -\frac{\mathfrak{C}_2}{\zeta} \int_{\mathbb{R}^3}\mathrm{d}k f(t,k) \varpi_{\mathcal B(k_{0,i},\rho_0)}(k) - \mathfrak{C}_2\zeta^{\alpha'-1},
	\end{aligned}
\end{equation}
with $i=0,\cdots, M$. 

 There exists $c_0>0$ such that $$\int_{\mathbb{R}^3}\mathrm{d}k f(0,k) \varpi_{\mathcal B(k_{0,i},\rho_0)}(k)>c_0$$ by our choice   $k_{0,i}\in \mathrm{supp}f_0$, with $i=0,\cdots, M$, we deduce  
 $$\int_{\mathbb{R}^3}\mathrm{d}k f(t,k) \varpi_{\mathcal B(k_{0,i},\rho_0)}(k){e^{\frac{\mathfrak{C}_2}{\zeta}{t}}}\ge \int_{\mathbb{R}^3}\mathrm{d}k f(0,k) \varpi_{\mathcal B(k_{0,i},\rho_0)}(k)-\zeta^{\alpha'}{e^{\frac{\mathfrak{C}_2}{\zeta}{t}}}.$$
 When $\zeta=\zeta_0$, which is sufficiently small, we bound

 $$\int_{\mathbb{R}^3}\mathrm{d}k f(t,k) \varpi_{\mathcal B(k_{0,i},\rho_0)}(k)>c_0'\zeta_0:=c_0e^{-m_0{\mathfrak{C}_2}{\tau}}\zeta_0$$ for all $i=0,\cdots, M$ and $t_0\le t\le m_0t_0$, $m_0>2$ is a positive number.

Suppose  $$\int_{\mathbb{R}^3}\mathrm{d}k f(t,k) \varpi_{\mathcal B(k_{n-1,i},\rho_{n-1})}(k)>c_{n-1}'\zeta_0^{9^{n-1}}>0$$ for all $i=0,\cdots, M-n+1$, $t_{n-1}\le t\le m_0t_0$  and $c_{n-1}'$ is independent of $\zeta$. We will prove that  there exists $t_n>t_{n-1}$  \begin{equation}\label{Lemma:Support:E6:1}\int_{\mathbb{R}^3}\mathrm{d}k f(t,k) \varpi_{\mathcal B(k_{n,\Psi_{n}},\rho_{n})}(k)>c_{n}''\zeta_0^{9^{n}}>0,\end{equation}
and $c_n''$ is independent of $\zeta$, for
$t_n\le t\le m_0t_0$.
Note that once the above inequality is proved, we deduce $$\int_{\mathbb{R}^3}\mathrm{d}k f(t_n,k) \varpi_{\mathcal B(k_{n,i},\rho_{n})}(k)>c_{n}'\zeta_0^{9^{n}}>0$$ for all $i=0,\cdots, M-n$, $t_n\le t\le m_0 t_0$.

	Putting together \eqref{Lemma:Support:E1}-\eqref{Lemma:Support:E2}-\eqref{Lemma:Support:E4}-\eqref{Lemma:Support:E3}, we find, for $t_{n-1}\le t\le m_0 t_0$
		\begin{equation}\label{Lemma:Support:E5}
			\begin{aligned}
				&\partial_t  \int_{\mathbb{R}^3}\mathrm{d}k f(t,k) \varpi_{\mathcal B(k_{n,\Psi_n},\rho_n)}(k)\\
			\ \ge \  &\mathfrak{C}_1 \left[\int_{\mathbb{R}^{3}}\mathrm{d}k f(t,k) \varpi_{\mathcal B(k_{n-1,\Psi_n},\rho_{n-1})}(k) \right] \left[\int_{\mathbb{R}^{3}}\mathrm{d}k f(t,k) \varpi_{\mathcal B(k_{n-1,\Psi_n+1},\rho_{n-1})}(k)\right]\\
			&\times \left[\int_{\mathbb{R}^{3}}\mathrm{d}k f(t,k) \varpi_{\mathcal B(k_{n-1,\Psi_n+2},\rho_{n-1})}(k)\right] -\frac{\mathfrak{C}_2}{\zeta}\int_{\mathbb{R}^3}\mathrm{d}k f(t,k) \varpi_{\mathcal B(k_{n,\Psi_n},\rho_n)}(k)- \mathfrak{C}_2\zeta^{\alpha'-1}.
			\\
			\ \ge \  &\mathfrak{C}_1 (c'_{n-1})^3\zeta_0^{3.9^{n-1}}-\frac{\mathfrak{C}_2}{\zeta}\int_{\mathbb{R}^3}\mathrm{d}k f(t,k) \varpi_{\mathcal B(k_{n,\Psi_n},\rho_n)}(k)- \mathfrak{C}_2\zeta^{\alpha'-1},
			\end{aligned}
		\end{equation}
	for $0<\zeta<1$.	
		Since in the above equations $\mathfrak{C}_1,\mathfrak{C}_2,c'_{n-1}$ are indepedent of $\zeta$, choosing $$\zeta^{\alpha'-1}=\frac{1}{2}\mathfrak{C}_1 (c'_{n-1})^3\zeta_0^{3.9^{n-1}}/\mathfrak{C}_2,$$ where $\zeta_0$ is sufficiently small, we have $$\mathfrak{C}_1 (c'_{n-1})^3\zeta_0^{3.9^{n-1}}-\mathfrak{C}_2\zeta^{\alpha'-1}=\frac{1}{2}\mathfrak{C}_1 (c'_{n-1})^3\zeta_0^{3.9^{n-1}},$$ we bound
		\begin{equation}\label{Lemma:Support:E5:1}
			\begin{aligned}
				\partial_t  \int_{\mathbb{R}^3}\mathrm{d}k f(t,k) \varpi_{\mathcal B(k_{n,\Psi_n},\rho_n)}(k)\			\ \ge \  &\frac{\mathfrak{C}_1 }{2}(c'_{n-1})^3\zeta_0^{3.9^{n-1}}-\frac{\mathfrak{C}_2}{\zeta}\int_{\mathbb{R}^3}\mathrm{d}k f(t,k) \varpi_{\mathcal B(k_{n,\Psi_n},\rho_n)}(k),
			\end{aligned}
		\end{equation}
		which implies 
				\begin{equation}\label{Lemma:Support:E5:2}
			\begin{aligned}
				 \int_{\mathbb{R}^3}\mathrm{d}k f(t,k) \varpi_{\mathcal B(k_{n,\Psi_n},\rho_n)}(k)			\ \ge \  &\frac{\mathfrak{C}_1 }{2\mathfrak{C}_2}(c'_{n-1})^3\zeta_0^{3.9^{n-1}}\Big[\frac{1}{2\mathfrak{C}_2}\mathfrak{C}_1 (c'_{n-1})^3\zeta_0^{3.9^{n-1}}\Big]^{\frac{1}{\alpha'-1}}[1-e^{-\frac{\mathfrak C_2}{\zeta}(t-t_{n-1})}]\\\ \ge \  &\frac{\mathfrak{C}_1 }{2\mathfrak{C}_2}(c'_{n-1})^3\zeta_0^{3.9^{n-1}}\Big[\frac{1}{2\mathfrak{C}_2}\mathfrak{C}_1 (c'_{n-1})^3\zeta_0^{3.9^{n-1}}\Big]^{\frac{1}{\alpha'-1}}[1-e^{-\frac{\mathfrak C_2}{\zeta}(t_n-t_{n-1})}]\\
				 \ \ge \  &\frac{\mathfrak{C}_1 }{2\mathfrak{C}_2}(c'_{n-1})^3\zeta_0^{3.9^{n-1}}\Big[\frac{1}{2\mathfrak{C}_2}\mathfrak{C}_1 (c'_{n-1})^3\zeta_0^{3.9^{n-1}}\Big]^{\frac{1}{\alpha'-1}}[1-e^{-t_0}]\\
				 \ > \  &\frac{\mathfrak{C}_1 }{2\mathfrak{C}_2}(c'_{n-1})^3\Big[\frac{1}{2\mathfrak{C}_2}\mathfrak{C}_1 (c'_{n-1})^3\Big]^{\frac{1}{\alpha'-1}}\zeta_0^{9^{n}}[1-e^{-t_0}],
			\end{aligned}
		\end{equation}
		for $t_n=t_{n-1}(1+\zeta/\mathfrak{C}_2)$ , $t_{n}\le t\le m_0 t_0$.

	Therefore \eqref{Lemma:Support:E6:1} is proved.
 By induction, there exists $m_0t_0>t_M>0$ such that  
	$$\int_{\mathbb{R}^3}\mathrm{d}k f(t,k) \varpi_{\mathcal B(k_{0},\rho)}(k)>c_{M}'\zeta_0^{9^{M}},$$ 
for $t_{M}\le t\le m_0 t_0$. Note that since the number of inductive steps is finite, $\zeta_0$ can be chosen such that $t_0<t_1<\cdots<t_{M}< m_0 t_0$.
	
 As a result, $k_0$ belongs to the support of $f(T_0,k)$, for $T_0=t_M$. By Lemma \ref{Lemma:CollisionRegionSphere}, we conclude our proof.
\end{proof}

\section{Global well-posedness}

\begin{proposition}
	\label{Propo:GlobalWellPosed} Let $f_0(k)=f_0(|k|)\ge 0$ be an initial condition that satisfies
	\begin{equation}
\label{Propo:GlobalWellPosed:1} \int_{\mathbb{R}^3}\mathrm{d}k f_0(k) \ = \ \mathscr{M}.
	\end{equation}
There exists a global mild radial solution $f(t,k)$ of \eqref{4wave} in the sense of \eqref{4wavemild} such that
\begin{equation}
	\label{Propo:GlobalWellPosed:2} \int_{\mathbb{R}^3}\mathrm{d}k f(t,k) \ = \ \mathscr{M},
\end{equation}
for all $t\ge 0$. 
\end{proposition}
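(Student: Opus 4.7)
\bigskip

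\noindent\textbf{Proof proposal.} The plan is to construct the global mild solution as the limit of an approximating sequence obtained by cutting off the collision operator at small and large wavenumbers. Working throughout in the class of radial functions and using the representation \eqref{Lemma:TestFunction:E4}, I would introduce, for each integer $N\ge 1$, a smooth cutoff $\chi_N:[0,\infty)\to[0,1]$ supported in $[1/N,N]$ and define
\begin{equation*}
\mathcal Q_N[f]\ :=\ C_Q\iiint_{\mathbb{R}_+^{3}}\mathrm d|k_1|\,\mathrm d|k_2|\,\mathrm d|k_3|\,\delta(\omega+\omega_1-\omega_2-\omega_3)\,[f_2f_3(f_1+f)-ff_1(f_2+f_3)]\,\mathcal K_N,
\end{equation*}
where $\mathcal K_N$ is the original kernel of \eqref{Lemma:TestFunction:E4} multiplied by $\chi_N(|k|)\chi_N(|k_1|)\chi_N(|k_2|)\chi_N(|k_3|)$. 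On the Banach space of radial $L^1(\mathbb{R}^3)$ functions, $\mathcal Q_N$ is bounded and locally Lipschitz (the cutoffs eliminate both the $1/|k|$ singularity at the origin and the growth at infinity, and the surface measure on the resonant manifold is controlled on the support of $\mathcal K_N$ by the parametrization computed around \eqref{proposition:T2L2:E9a}).

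The first step would be to solve, via Banach fixed point for the Volterra equation $f=f_0+\int_0^t\mathcal Q_N[f(s)]\,ds$, a local-in-time mild radial solution $f_N\in C([0,T_N];L^1(\mathbb{R}^3))$. Positivity of $f_N$ is preserved by writing $\mathcal Q_N=\mathcal Q_N^{\mathrm{gain}}-f\,L_N[f]$ and applying Duhamel combined with a Gronwall argument on the loss semigroup, exactly as in the standard theory for Boltzmann-type equations. The key structural point is that the cutoff $\chi_N$ is symmetric in $(k,k_1,k_2,k_3)$, so the $(k,k_1)\leftrightarrow(k_2,k_3)$ symmetry of $\mathcal Q_N$ is preserved. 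Testing against $\phi\equiv 1$ then yields
\begin{equation*}
\frac{d}{dt}\int_{\mathbb{R}^3}f_N(t,k)\,\mathrm dk\ =\ 0,
\end{equation*}
so mass is conserved by $f_N$. This a priori $L^1$ bound extends $f_N$ to a global solution for each $N$.

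The second step would be to pass to the limit $N\to\infty$. The family $\{f_N\}$ is uniformly bounded in $L^\infty_t L^1_k$ by mass conservation. Equi-integrability in $k$ on bounded subsets of $\mathbb{R}^3\setminus\{O\}$ can be obtained by testing against smooth bump functions of the form $\varpi_{\mathcal B(\cdot,\cdot)}$ introduced in the proof of Proposition~\ref{Lemma:Support}, using the a priori estimate already established there (inequality \eqref{Lemma:Support:E3} and its variants) to bound the growth of localized mass. Equicontinuity in time on $L^1$ against compactly supported test functions follows from the uniform bound $\|\mathcal Q_N[f_N]\phi\|_{L^1}\lesssim \|\phi\|_{C^0}$ that these same estimates give when $\mathrm{supp}\,\phi\subset\mathbb{R}^3\setminus\{O\}$. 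An Arzelà-Ascoli / Dunford-Pettis argument then extracts a subsequence converging weakly in $L^1$ for each $t$ and strongly in $C([0,T];L^1_{\mathrm{loc}}(\mathbb{R}^3\setminus\{O\}))$ to some nonnegative radial $f$.

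The main obstacle, as usual for Boltzmann-type kinetic equations, is passing to the limit inside the cubic nonlinearity $f_2f_3(f_1+f)-ff_1(f_2+f_3)$ constrained to the resonant surface. Here I would exploit the explicit parametrization \eqref{proposition:T2L2:E9a} of the manifold $\mathscr S_{k_2,k_3}$, which rewrites the collision integral as an iterated integral in $(|k_2|,k_3,\alpha,|x|)$ with an explicit and (away from the origin) smooth Jacobian. For any $\phi\in C_c^2([0,\infty))$, the resulting integrand is supported in a region where $|k|,|k_1|,|k_2|,|k_3|$ are all bounded above (by $\mathrm{supp}\,\phi$ and mass conservation on tails) and can be restricted away from $0$ by splitting and estimating the small-$|k_i|$ contribution using the $\mathcal O(\zeta^{\alpha'-1})$-type bound of Step~1 in the proof of Proposition~\ref{Lemma:Support}. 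On this good region the strong $L^1_{\mathrm{loc}}$ convergence of $f_N$ suffices to pass to the limit termwise in each of the cubic products, giving the weak formulation \eqref{4wavemild} for the limit $f$. Mass conservation \eqref{Propo:GlobalWellPosed:2} is inherited by taking $\phi\equiv 1$ and using lower semicontinuity together with the truncation-free upper bound $\int f(t)\,dk\le\mathscr M$, which combined with an identical lower estimate gives equality.
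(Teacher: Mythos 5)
Your high-level plan (cut off the operator, solve the cutoff problem by fixed point, pass to the limit) matches the paper's architecture, and your Step~1 and the local-in-time/global-in-time argument with mass conservation are essentially the paper's Step~3. But the paper's limit passage rests on one key estimate that your proposal does not contain, and your substitute for it is not adequate.

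The paper's Step~2 proves that, after passing to the $\omega$ variable and setting $g=\mho f|k|$, the \emph{test-function-weighted} kernel
\[
\mathfrak K_n \ = \ \mathcal K_n\,\bigl[-\phi(\omega)-\phi(\omega_1)+\phi(\omega_2)+\phi(\omega+\omega_1-\omega_2)\bigr]
\]
is continuous and uniformly bounded for every $n\in(0,\infty]$, including $n=\infty$, whenever $\phi\in C_c^2$. This is not a consequence of any decay of $\mathcal K_n$ alone: $\mathcal K_\infty$ has nonintegrable singularities at $|k|,|k_1|,|k_2|=0$. The boundedness relies entirely on the telescoping structure of $-\phi(\omega)-\phi(\omega_1)+\phi(\omega_2)+\phi(\omega+\omega_1-\omega_2)$: writing it as an iterated integral of $\phi''$ gains exactly the two factors of smallness needed to cancel the $1/(|k||k_1||k_2|)$ singularity in the kernel. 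With $\mathfrak K_n$ uniformly bounded, the paper gets a Lipschitz-in-time bound $\bigl|\int g^m(t_2)\phi - \int g^m(t_1)\phi\bigr|\le C|t_2-t_1|$ with $C$ independent of $m$, and Arzel\`a--Ascoli then produces a limit $g^\infty$ directly in the weak formulation. No equi-integrability, no strong $L^1_{\mathrm{loc}}$ convergence, no entropy-type bound is needed or claimed.

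Your proposal never establishes this estimate and instead claims equi-integrability on bounded sets and strong convergence in $C([0,T];L^1_{\mathrm{loc}}(\mathbb{R}^3\setminus\{O\}))$, neither of which is justified by the tools you invoke. Uniform $L^\infty_tL^1_k$ gives tightness but not equi-integrability; you would need a higher moment, an entropy, or a dual-norm estimate to preclude concentration on thin annuli, and you supply none. Your appeal to the bound \eqref{Lemma:Support:E3} is also not a substitute: that inequality is proved for the specific bump functions $\varpi_{\mathcal B(\cdot,\cdot)}$ used in Proposition~\ref{Lemma:Support}, depends on a free parameter $\zeta$, and still involves $\int f\varpi$ on the right-hand side; it is a localized differential inequality, not a uniform kernel bound. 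Likewise, the parametrization around \eqref{proposition:T2L2:E9a} rewrites the collision integral but does not tame the $|k|\to 0$ singularity by itself. If you add the paper's Step~2 cancellation argument (second-order Taylor expansion of the tested quadruple, cases on which of $|k|,|k_1|,|k_2|$ is small, and use of $\eqref{Settings3}$ to control $\omega/|k|$) to your proposal, the limit passage becomes a straightforward weak-$*$ argument on moments and the strong-convergence claims can be dropped entirely.

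A smaller structural remark: the paper's cutoff $\mathcal K_n$ in \eqref{Lemma:GlobalExist:E2} truncates only $|k|,|k_1|,|k_2|$ (not $|k_3|$, which is determined by the resonance), whereas you propose truncating all four; the paper's choice matters for preserving the $(k,k_1)\leftrightarrow(k_2,k_3)$ symmetry used to get mass conservation \eqref{Lemma:GlobalExist:E3b} at the truncated level, and is worth matching.
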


\begin{proof} We divide the proof into several steps.

	{\it Step 1: Change of variables.}

Performing the change of variables $|k|\to\omega$, $|k_1|\to\omega_1$, $|k_2|\to\omega_2$, $|k_3|\to\omega_3$, we convert the weak form of \eqref{Lemma:TestFunction:E4} into
\begin{equation}\label{Lemma:TestFunction:E5}
	\begin{aligned}
		\int_{\mathbb{R}_+}\mathrm{d}\omega \mathcal Q \left[ f\right]  \phi(\omega)|k| \mho
		\ =\ & C_Q\iiiint_{\mathbb{R}_+^{4}}\mathrm{d}\omega_1\,\mathrm{d}\omega_2\,\mathrm{d}\omega_3\mathrm{d}\omega \delta(\omega + \omega_1 -\omega_2 - \omega_3)[f_2f_3(f_1+f)-ff_1(f_2+f_3)]\\
		&\times {\mho\mho_1\mho_2\mho_3\min\{|k_1|,|k_2|,|k_3|,|k|\}}\phi,
	\end{aligned}
\end{equation}
for any test function $\phi(\omega)\in C(\mathbb{R}_+)$, in which $$ \mho=\mho(\omega)=\frac{|k|}{\omega'(|k|)} ,  \ \  \mho_1=\mho(\omega_1)=\frac{|k_1|}{\omega'(|k_1|)},   \ \ \mho_2=\mho(\omega_2)=\frac{|k_2|}{\omega'(|k_2|)}, \ \ \mho_3=\mho(\omega_3)=\frac{|k_3|}{\omega'(|k_3|)}.$$ 
Note that now $|k|$ is a function of $\omega$.
We set $$\Xi(\omega,\omega_1,\omega_2,\omega_3)={\mho\mho_1\mho_2\mho_3\min\{|k_1|,|k_2|,|k_3|,|k|\}}$$ deduce from \eqref{Lemma:TestFunction:E5} that 
\begin{equation}\label{Lemma:TestFunction:E6}
	\begin{aligned}
		\int_{\mathbb{R}_+}\mathrm{d}\omega \mathcal Q \left[ f\right]  \phi(\omega)|k| \mho
		\ =\ & C_Q\iiiint_{\mathbb{R}_+^{4}}\mathrm{d}\omega_1\,\mathrm{d}\omega_2\,\mathrm{d}\omega_3\mathrm{d}\omega \delta(\omega + \omega_1 -\omega_2 - \omega_3)f_1f_2f\Xi(\omega,\omega_1,\omega_2,\omega_3)\\
		&\times [-\phi(\omega)-\phi(\omega_1)+\phi(\omega_2)+\phi(\omega_3)]\\
		\ =\ & C_Q\iiint_{\mathbb{R}_+^{3}}\mathrm{d}\omega_1\,\mathrm{d}\omega_2\,\mathrm{d}\omega f_1f_2f\Xi(\omega,\omega_1,\omega_2,\omega+\omega_1-\omega_2)\mathbf{1}_{\omega+\omega_1\ge \omega_2}\\
		&\times [-\phi(\omega)-\phi(\omega_1)+\phi(\omega_2)+\phi(\omega+\omega_1-\omega_2)].
	\end{aligned}
\end{equation}

Setting $g(\omega)=\mho {f(\omega)}{|k|}$, $g_1=g(\omega_1)$, $g_2=g(\omega_2)$, we get

\begin{equation}\label{Lemma:GlobalExist:E1}
	\begin{aligned}
		\int_{\mathbb{R}_+}\mathrm{d}\omega \mathcal Q \left[ f\right]  \phi(\omega)|k| \mho
		\ =\ & C_Q\iiint_{\mathbb{R}_+^{3}}\mathrm{d}\omega_1\,\mathrm{d}\omega_2\,\mathrm{d}\omega g_1g_2g\mho_3\frac{\min\{|k_1|,|k_2|,|k_3|,|k|\}}{|k||k_1||k_2|}\\
		&\times [-\phi(\omega)-\phi(\omega_1)+\phi(\omega_2)+\phi(\omega+\omega_1-\omega_2)]\mathbf{1}_{\omega+\omega_1\ge \omega_2}.
	\end{aligned}
\end{equation}

{\it Step 2: Boundedness of the kernel.}
We define the cut off kernel, for $n\in(0,\infty]$
\begin{equation}\label{Lemma:GlobalExist:E2}
	\begin{aligned}
		\mathcal K_n
		\ =\ & \mho_3 \frac{\min\{|k_1|,|k_2|,|k_3|,|k|,n\}\chi_{\Big[\frac1n,n\Big)}(|k|)\chi_{\Big[\frac1n,n\Big)}(|k_1|)\chi_{\Big[\frac1n,n\Big)}(|k_2|)}{|k||k_1||k_2|\mathbf{1}_{\omega+\omega_1\ge \omega_2}},
	\end{aligned}
\end{equation}
and consider the approximated equation 
\begin{equation}\label{Lemma:GlobalExist:E3}
	\begin{aligned}
		\int_{\mathbb{R}_+}\mathrm{d}\omega \partial_t g^n  \phi 
		\ =\ & C_Q\iiint_{\mathbb{R}_+^{3}}\mathrm{d}\omega_1\,\mathrm{d}\omega_2\,\mathrm{d}\omega g_1^ng_2^ng^n\mathcal K_n  [-\phi(\omega)-\phi(\omega_1)+\phi(\omega_2)+\phi(\omega+\omega_1-\omega_2)],
	\end{aligned}
\end{equation}
with the initial condition $g^n(0,\omega)=g(0,\omega)=\mho {f_0(\omega)}{|k|}$. It is straightforward that
\begin{equation}\label{Lemma:GlobalExist:E3a}
	\begin{aligned}
		\int_{\mathbb{R}_+}\mathrm{d}\omega \partial_t g^n   
		\ =\ & 0,
	\end{aligned}
\end{equation}
yielding 
\begin{equation}\label{Lemma:GlobalExist:E3b}
	\begin{aligned}
		\int_{\mathbb{R}_+}\mathrm{d}\omega   g^n(t)  
		\ =\ & 	\int_{\mathbb{R}_+}\mathrm{d}\omega   g^n(0) \ =\  \int_{\mathbb{R}_+}\mathrm{d}\omega   \mho {f_0(\omega)}{|k|} \ = \ \mathscr M.
	\end{aligned}
\end{equation}
Next, we will show that for $\phi\in C_c^2(\mathbb{R}_+)$, the quantity 
$$\mathfrak{K}_n=\mathcal K_n  [-\phi(\omega)-\phi(\omega_1)+\phi(\omega_2)+\phi(\omega+\omega_1-\omega_2)]$$
 is continuous and uniformly bounded for all $n\in(0,\infty]$. It should be clear that we only need to derive a uniform bound for $n=\infty$. To this end, we only need to prove that $\mathfrak{K}_\infty$ is uniformly bounded near the singularities $|k|=0$, $|k_1|=0$, $|k_2|=0$.

Let us first consider the case when only one of the quantities $|k|,|k_1|,|k_2|$ is close to $0$ while the other two are bounded from below by a constant  $c_o>0$. We suppose without loss of generality that $|k|$ is close to $0$ and $|k_1|,|k_2|\ge c_o$, then $$\mathcal K_\infty\ \le\ \frac{1}{|k_1| |k_2|}\mho_3 \mathbf{1}_{\omega_1\ge \omega_2}\ \lesssim\  \frac{1}{c_o^2}  \mathbf{1}_{\omega_1\ge \omega_2},$$  
which is bounded. 

Next, we consider the case when two of the quantities $|k|,|k_1|,|k_2|$ are close to $0$ while the other one is  bounded from below by a constant  $c_o>0$. Since our domain of integration is restricted to $\omega+\omega_1\ge \omega_2$, the case when both $|k|,|k_1|$ are close to  $0$ while $|k_2|$ is bounded from below by $c_o>0$ does not happen. Since $k,k_1$ are symmetric, we only consider the case when $|k_2|,|k|$ are close to $0$ while $|k_1|$ is bounded from below by $c_o$. We write
\begin{equation}\begin{aligned}
		\mathfrak{K}_\infty\ =\ & \mathcal K_\infty  [-\phi(\omega)-\phi(\omega_1)+\phi(\omega_2)+\phi(\omega+\omega_1-\omega_2)]\\
		\ =\ &\ \mathcal K_\infty  \Big[-\int_0^{\omega-\omega_2}\mathrm{d}\xi\phi'(\omega_2+\xi) \ +\ \int_0^{\omega-\omega_2}\mathrm{d}\xi \phi'( \omega_1+\xi)\Big]
		\\
		\ =\ &\ \mathcal K_\infty  \Big[-\int_0^{\omega-\omega_2}\mathrm{d}\xi[\phi'(\omega_2+\xi)-\phi'(0)] \ -\ \int_0^{\omega-\omega_2}\mathrm{d}\xi  \phi'(0)\\
		&  \ +\ \int_0^{\omega-\omega_2}\mathrm{d}\xi [\phi'( \omega_1+\xi)-\phi'( \omega_1)] \ + \ \int_0^{\omega-\omega_2}\mathrm{d}\xi \phi'( \omega_1) \Big]\\
		\ =\ & \mathcal K_\infty  \Big[-\int_0^{\omega-\omega_2}\mathrm{d}\xi\int_0^{\omega_2+\xi}\mathrm{d}\xi'\phi''(\xi') \ -\  \phi'(0)(\omega-\omega_2)\\
		&  \ +\ \int_0^{\omega-\omega_2}\mathrm{d}\xi\int_0^{\xi}\mathrm{d}\xi' \phi''( \omega_1+\xi') \ + \   \phi'( \omega_1)(\omega-\omega_2) \Big],
	\end{aligned}
\end{equation}
which can be bounded as

\begin{equation}\label{Lemma:GlobalExist:E4}
	\begin{aligned}
		|\mathfrak{K}_\infty|\ \leq\ &
		\mathcal K_\infty  \left(  \left\vert
		\phi^{\prime}\left(  \omega_{1}\right)  \left(  \omega-\omega
		_{2}\right)  -\phi^{\prime}\left(  0\right)  \left(  \omega-\omega
		_{2}\right)  \right\vert +C\left[  \left(  \omega\right)  ^{2}+\left(
		\omega_{2}\right)  ^{2}\right]  \right)\\
		\ \lesssim\ &
		\frac{\min\{|k_1|,|k_2|,|k_3|,|k|\}}{|k||k_1||k_2|}\mathbf{1}_{\omega+\omega_1\ge \omega_2} \left(  \left\vert
		\phi^{\prime}\left(  \omega_{1}\right)  \left(  \omega-\omega
		_{2}\right)\right\vert  +\left\vert\phi^{\prime}\left(  0\right)  \left(  \omega-\omega
		_{2}\right)  \right\vert +C\left[  \left(  \omega\right)  ^{2}+\left(
		\omega_{2}\right)  ^{2}\right]  \right)\\
		\ \lesssim\ &
		\frac{\min\{|k_1|,|k_2|,|k_3|,|k|\}}{|k||k_1||k_2|}\mathbf{1}_{\omega+\omega_1\ge \omega_2} \left(  \left\vert
		\phi^{\prime}\left(  \omega_{1}\right)  \left(  \omega-\omega
		_{2}\right)  \right\vert +C\left[  \left(  \omega\right)  ^{2}+\left(
		\omega_{2}\right)  ^{2}\right]  \right)\\
		\ \lesssim\ &
		\frac{\min\{|k_1|,|k_2|,|k_3|,|k|\}}{|k||k_1||k_2|}\mathbf{1}_{\omega+\omega_1\ge \omega_2} \left( 	|\phi^{\prime}|\left(  \omega_{1}\right) | \omega-\omega
		_{2}| +C\left[  \left(  \omega\right)  ^{2}+\left(
		\omega_{2}\right)  ^{2}\right]  \right),
	\end{aligned}
\end{equation}
for some constant $C>0$ depending only on $\phi$, $\phi'$ and $\phi''$. We can further bound \eqref{Lemma:GlobalExist:E4} as
\begin{equation}\label{Lemma:GlobalExist:E5}
	\begin{aligned}
		|\mathfrak{K}_\infty|
		\ \lesssim\  &
		\frac{\min\{|k_1|,|k_2|,|k_3|,|k|\}}{|k||k_1||k_2|}\mathbf{1}_{\omega+\omega_1\ge \omega_2} \left(     | \omega-\omega
		_{2}|	|\phi^{\prime}\left(  \omega_{1}\right)|  +   \left(  \omega\right)  ^{2}+\left(
		\omega_{2}\right)  ^{2} \right).
	\end{aligned}
\end{equation}
We can bound $\min\{|k_1|,|k_2|,|k_3|,|k|\}$ by $\min\{|k|,|k_2|\}$, that implies
\begin{equation}\label{Lemma:GlobalExist:E5}
	\begin{aligned}
		|\mathfrak{K}_\infty|
		\ \lesssim\  &
		\frac{1}{\max\{|k|,|k_2|\}|k_1|}\mathbf{1}_{\omega+\omega_1\ge \omega_2} \left(      | \omega-\omega
		_{2}|	|\phi^{\prime}\left(  \omega_{1}\right)| \ +\ \left(\max\{|k|,|k_2|\}\right)  ^{2\alpha'} \right).
	\end{aligned}
\end{equation}
We  estimate the second quantity on the right hand side of \eqref{Lemma:GlobalExist:E5} as  \begin{equation}\label{Lemma:GlobalExist:E6}	\frac{\left(\max\{|k|,|k_2|\}\right)  ^{2\alpha'}}{\max\{|k|,|k_2|\}|k_1|}\ \le\ \frac{\max\{|k|,|k_2|\}}{c_o}\ \le\ \frac{1}{c_o}.
\end{equation}
We now bound the first quantity on the  right hand side of \eqref{Lemma:GlobalExist:E5}
\begin{equation}\label{Lemma:GlobalExist:E7}
	\begin{aligned}
		\frac{1}{\max\{|k|,|k_2|\}|k_1|}      | \omega-\omega
		_{2}|	|\phi^{\prime}\left(  \omega_{1}\right)| \    \ \le \ & 	\frac{\max\{\omega,\omega_2\}}{\max\{|k|,|k_2|\}}    \frac{ \omega_{1}}{|k_1|} \Big|\frac{	\phi^{\prime}\left(  \omega_{1}\right) }{\omega_{1}} \Big|\\
		\ \le \ & \frac{\omega\big(\max\{|k|,|k_2|\}\big)}{\max\{|k|,|k_2|\}}    \frac{ \omega_{1}}{|k_1|}   \Big|\frac{	\phi^{\prime}\left(  \omega_{1}\right) }{\omega_{1}} \Big|.
	\end{aligned}
\end{equation}
By \eqref{Settings3}, we deduce that $\frac{\omega}{|k|}$ is bounded when $0\le |k|\le 1$ and when $\omega\in\mathrm{supp}(\phi)$. Moreover, since $ \frac{\lim_{\omega_1\to 0}	\phi^{\prime}\left(  \omega_{1}\right) }{\lim_{\omega_1\to 0}\omega_{1}}=\phi''(0)$,  the quantity $ \frac{ 	\phi^{\prime}\left(  \omega_{1}\right) }{ \omega_{1}}$ is also uniformly bounded. Therefore we can bound the first quantity on the  right hand side of \eqref{Lemma:GlobalExist:E5} by a constant, that in combination with \eqref{Lemma:GlobalExist:E6} and \eqref{Lemma:GlobalExist:E7} yields the boundedness of $	|\mathfrak{K}_\infty|$.

Finally we consider the case when all $|k|,|k_1|,|k_2|$ are close to $0$. We compute
\begin{equation}\begin{aligned}
		\mathfrak{K}_\infty
		\ =\ &\ \mathcal K_\infty  \Big[-\int_0^{\omega-\omega_2}\mathrm{d}\xi\int_0^{\omega_2-\omega_1}\mathrm{d}\xi'\phi''(\omega_1+\xi+\xi')\Big],
	\end{aligned}
\end{equation}
which can be bounded as

\begin{equation}\label{Lemma:GlobalExist:E8}
	\begin{aligned}
		|\mathfrak{K}_\infty|\ \leq\ &\
		\mathcal K_\infty  |\omega-\omega_2||\omega_2-\omega_1|\|\phi''\|_{L^\infty} 
		\ \lesssim  \
		\frac{\min\{|k_1|,|k_2|,|k|\}}{|k||k_1||k_2|}\mathbf{1}_{\omega+\omega_1\ge \omega_2} |\omega-\omega_2||\omega_2-\omega_1|\|\phi''\|_{L^\infty}.
	\end{aligned}
\end{equation}
We now consider $3$ cases.

{\it Case 1:  $|k_2|=\min\{|k_1|,|k_2|,|k|\}$.} We bound \eqref{Lemma:GlobalExist:E8} as
\begin{equation}\label{Lemma:GlobalExist:E9}
	\begin{aligned}
		|\mathfrak{K}_\infty|\ \leq\ &\
		\mathcal K_\infty  |\omega-\omega_2||\omega_2-\omega_1|\|\phi''\|_{L^\infty}\\
		\ \lesssim &\
		\frac{1}{|k||k_1|}\mathbf{1}_{\omega+\omega_1\ge \omega_2} |\omega-\omega_2||\omega_2-\omega_1|\|\phi''\|_{L^\infty}\\
		\ \lesssim &\
		\frac{\omega-\omega_2}{|k|}	\frac{\omega_1-\omega_2}{|k_1|}  \|\phi''\|_{L^\infty}\ \lesssim \ 	\frac{\omega}{|k|}	\frac{\omega_1}{|k_1|}  \|\phi''\|_{L^\infty},
	\end{aligned}
\end{equation}
where we  used the fact that since  $|k_2|=\min\{|k_1|,|k_2|,|k|\}$ then  $\omega-\omega_2, \omega_1-\omega_2\geq 0$. That  implies the boundedness of $	|\mathfrak{K}_\infty|$.

{\it Case 2:  $|k_1|=\min\{|k_1|,|k_2|,|k|\}$.} We bound \eqref{Lemma:GlobalExist:E8} as
\begin{equation}\label{Lemma:GlobalExist:E10}
	\begin{aligned}
		|\mathfrak{K}_\infty|
		\ \lesssim &\
		\frac{1}{|k||k_2|}\mathbf{1}_{\omega+\omega_1\ge \omega_2} |\omega-\omega_2||\omega_2-\omega_1|\|\phi''\|_{L^\infty}\\
		\ \lesssim &\
		\frac{|\omega-\omega_2|}{|k|}	\frac{\omega_2-\omega_1}{|k_2|}  \|\phi''\|_{L^\infty}\mathbf{1}_{\omega+\omega_1\ge \omega_2} \ \lesssim \ 	\frac{|\omega-\omega_2|}{|k|}	\frac{\omega_2}{|k_2|}  \|\phi''\|_{L^\infty}\mathbf{1}_{\omega+\omega_1\ge \omega_2}.
	\end{aligned}
\end{equation}
 If $\omega\ge \omega_2$, we bound $\frac{|\omega-\omega_2|}{|k|}\le \frac{\omega}{|k|}$, that implies the boundedness of $	|\mathfrak{K}_\infty|$. If $\omega< \omega_2$, we bound $\frac{|\omega-\omega_2|}{|k|}= \frac{\omega_2-\omega}{|k|}\le \frac{\omega_1}{|k|}\le \frac{\omega}{|k|}$, that also implies the boundedness of $	|\mathfrak{K}_\infty|$.

{\it Case 3:  $|k|=\min\{|k_1|,|k_2|,|k|\}$.} We bound \eqref{Lemma:GlobalExist:E8} as
\begin{equation}\label{Lemma:GlobalExist:E11}
	\begin{aligned}
		|\mathfrak{K}_\infty|
		\ \lesssim &\
		\frac{1}{|k_1||k_2|}\mathbf{1}_{\omega+\omega_1\ge \omega_2} |\omega-\omega_2||\omega_2-\omega_1|\|\phi''\|_{L^\infty}\\
		\ \lesssim &\
		\frac{-\omega+\omega_2}{|k_2|}	\frac{|\omega_2-\omega_1|}{|k_1|}  \|\phi''\|_{L^\infty}\mathbf{1}_{\omega+\omega_1\ge \omega_2} \ \lesssim \ 	\frac{|\omega_2-\omega_1|}{|k_1|}	\frac{\omega_2}{|k_2|}  \|\phi''\|_{L^\infty}\mathbf{1}_{\omega+\omega_1\ge \omega_2}.
	\end{aligned}
\end{equation}
If $\omega_1\ge \omega_2$, we bound $\frac{|\omega_2-\omega_1|}{|k_1|}\le \frac{\omega_1}{|k_1|}$, that implies the boundedness of $	|\mathfrak{K}_\infty|$. If $\omega_1< \omega_2$, we bound $\frac{|\omega_2-\omega_1|}{|k_1|}= \frac{\omega_2-\omega_1}{|k_1|}\le \frac{\omega}{|k_1|}\le \frac{\omega_1}{|k_1|}$, that also implies the boundedness of $	|\mathfrak{K}_\infty|$.

{\it Step 3: Global existence for \eqref{Lemma:GlobalExist:E3}, with $n<\infty$.} 

We define the operator $\mathscr{T}[g] $ in the weak form as follow

\begin{equation}\label{Lemma:GlobalExist:E12}
	\begin{aligned}
		\left\langle \mathscr{T}[g],  \phi \right\rangle\ = \ &	\int_{\mathbb{R}_+}\mathrm{d}\omega \mathscr{T}[g]  \phi \\
		\ =\ & C_Q\iiint_{\mathbb{R}_+^{3}}\mathrm{d}\omega_1\,\mathrm{d}\omega_2\,\mathrm{d}\omega g_1g_2g\mathcal K_n  [-\phi(\omega)-\phi(\omega_1)+\phi(\omega_2)+\phi(\omega+\omega_1-\omega_2)].
	\end{aligned}
\end{equation}
The operator $\mathscr{T}$ is actually  bounded since we can estimate
\begin{equation}\label{Lemma:GlobalExist:E13}
	\begin{aligned}
		\sup_{\|\phi\|_{L^\infty}=1}	\Big|\left\langle \mathscr{T}[g],  \phi \right\rangle\Big|\ \le \ & \sup_{\|\phi\|_{L^\infty}=1} C_Q	\Big|\iiint_{\mathbb{R}_+^{3}}\mathrm{d}\omega_1\,\mathrm{d}\omega_2\,\mathrm{d}\omega g_1g_2g\mathcal K_n  \\
		&\times [-\phi(\omega)-\phi(\omega_1)+\phi(\omega_2)+\phi(\omega+\omega_1-\omega_2)]	\Big|\\
		\lesssim \ & \Big|\iiint_{\mathbb{R}_+^{3}}\mathrm{d}\omega_1\,\mathrm{d}\omega_2\,\mathrm{d}\omega |g_1g_2g|	\Big| \lesssim \   \Big|\int_{\mathbb{R}_+}\mathrm{d}\omega |g|	\Big|^3.
	\end{aligned}
\end{equation}
Next, we show that $\mathscr{T}$ is Lipschitz in the set  $S_{\mathscr{T}}:=\Big\{g ~~| ~~ g\ge0 , \|g\|_{L^1}\le \mathscr{M}\Big\}$. To this end, for $g,g^o\in S_{\mathscr{T}}$, we bound

\begin{equation}\label{Lemma:GlobalExist:E14}
	\begin{aligned}
		\sup_{\|\phi\|_{L^\infty}=1}	\Big|\left\langle \mathscr{T}[g]-\mathscr{T}[g^o],  \phi \right\rangle\Big|\ \le \ & \sup_{\|\phi\|_{L^\infty}=1} C_Q	\Big|\iiint_{\mathbb{R}_+^{3}}\mathrm{d}\omega_1\,\mathrm{d}\omega_2\,\mathrm{d}\omega \big[g_1g_2g-g^o_1g^o_2g^o\big]\mathcal K_n  \\
		&\times [-\phi(\omega)-\phi(\omega_1)+\phi(\omega_2)+\phi(\omega+\omega_1-\omega_2)]	\Big|\\
		\lesssim \ & \Big|\iiint_{\mathbb{R}_+^{3}}\mathrm{d}\omega_1\,\mathrm{d}\omega_2\,\mathrm{d}\omega \big[g_1g_2g-g^o_1g^o_2g^o\big]	\Big| \lesssim \  \mathscr{M}^2 \|g-g^o\|_{L^1}.
	\end{aligned}
\end{equation}

The local existence of a solution   $g^n\in C^1([0,T],L^1(\mathbb{R}_+))$ to \eqref{Lemma:GlobalExist:E3} then follows by a standard fixed point argument for a small time  $T>0$ depending on $ \mathscr{M}$. By the conservation \eqref{Lemma:GlobalExist:E3b}, the argument can then be iterated from $[0,T]$ to $[T,2T]$, $[2T,3T]$, giving a global existence of solutions in  $g^n\in C^1([0,\infty),L^1(\mathbb{R}_+))$.

{\it Step 4: Global existence for \eqref{Lemma:GlobalExist:E3}, with $n=\infty$.} 

To obtain a global existence result for \eqref{Lemma:GlobalExist:E3}, with $n=\infty$, we will consider the sequences of solutions obtained in Step 3 $\{g^m\}\subset C^1([0,\infty),L^1(\mathbb{R}_+))$. We will now estimate the difference
$$\int_{\left[  0,\infty\right)  }\mathrm d\omega g^m\left(  t_{2},\omega\right)
\phi\left(  \omega\right) -\int_{\left[  0,\infty\right)
}\mathrm d\omega g^m\left(  t_{1},\omega\right)  \phi\left(  \omega\right) $$
for any $\phi\in C^{2}_c\left(  \left[  0,\infty\right)  \right)$, 
$t_{1},t_{2}\in\left[  0,\infty\right)  .$ Using the uniform boundedness of $|\mathfrak{K}_n|$, we obtain:%
\[
\left\vert \int_{\left[  0,\infty\right)  }\mathrm d\omega g^m\left(  t_{2}%
,\omega\right)  \phi\left(  \omega\right) -\int_{\left[
	0,\infty\right)  }\mathrm d\omega g^m\left(  t_{1},\omega\right)  \phi\left(
\omega\right)  d\omega\right\vert \leq C\left\vert t_{2}-t_{1}\right\vert
\]
where $C>0$ is independent of $m.$ By  Arzela-Ascoli's theorem, there exists a  subsequence $\left\{  i_{m}\right\}  $ such that  $$g^{i_{m}}\rightharpoonup g^\infty\in C^1([0,\infty),L^1(\mathbb{R}_+)).$$
The function $g^\infty$ is a mild solution of	 \eqref{Lemma:GlobalExist:E3}, with $n=\infty$.
\end{proof}
\section{Transfer of energy towards large values of wavenumbers}

\begin{lemma}
\label{Lemma:Monotonicity} Assume  $\phi\in C_c^2\left([0,\infty)\right) $ and  is a convex function. Let $\beth	\in C\left( \mathbb{R}^3\right) $ and  $\beth	(k)=\phi(\omega_k)$. Then the following identity holds  
\begin{equation}	\label{Lemma:Monotonicity:1} 
	\begin{aligned}
		\int_{{\mathbb{R}}^{3}}\mathrm dk\mathcal Q \left[ f\right](k)  \beth	(k)\
		\ge 0.	\end{aligned}
\end{equation}

\end{lemma}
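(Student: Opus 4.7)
My plan is to combine the standard symmetrization of the 4-wave collision operator with the integral representation of convex functions, reducing the lemma to a one-parameter family of inequalities for kink test functions $(\omega-r)_+$.

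First, using the isotropic reduction of $\mathcal Q$ from Section 2 (cf.\ \eqref{Lemma:TestFunction:E4} and \eqref{Lemma:TestFunction:E6}) together with the radial symmetry of $f$, I would recast $\int_{\mathbb R^3}\mathcal Q[f]\beth\,dk$ as a fourfold integral against a kernel $\Xi(\omega,\omega_1,\omega_2,\omega_3)$ that is symmetric under every permutation of its four arguments, constrained to the resonance manifold $\omega+\omega_1=\omega_2+\omega_3$. Averaging over the three involutions $\omega\leftrightarrow\omega_1$, $\omega_2\leftrightarrow\omega_3$ (each fixing $A:=f_2f_3(f+f_1)-ff_1(f_2+f_3)$) and $(\omega,\omega_1)\leftrightarrow(\omega_2,\omega_3)$ (which sends $A\mapsto -A$) yields the standard weak form
\begin{equation*}
\int_{\mathbb R^3}\mathcal Q[f]\beth\,dk=\frac{C_Q}{4}\iiiint\delta(\omega+\omega_1-\omega_2-\omega_3)\,\Xi\,A\,\bigl[\phi(\omega)+\phi(\omega_1)-\phi(\omega_2)-\phi(\omega_3)\bigr]\,d\omega\,d\omega_1\,d\omega_2\,d\omega_3.
\end{equation*}

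Second, since $\phi\in C_c^2$ is convex I would invoke the representation
$\phi(\omega)=\phi(0)+\phi'(0)\omega+\int_0^\infty(\omega-r)_+\,\phi''(r)\,dr$ with $\phi''\ge 0$, combined with the mass and energy conservation identities $\int\mathcal Q[f]\,dk=0$ and $\int\mathcal Q[f]\,\omega\,dk=0$ (which follow from the symmetrized formula above by taking $\phi\equiv 1$ and $\phi(\omega)=\omega$ respectively). Together these reduce the lemma to the kink statement
\begin{equation*}
\int_{\mathbb R^3}\mathcal Q[f](\omega-r)_+\,dk\ \ge\ 0\qquad\text{for every }r\ge 0.
\end{equation*}

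Third, for fixed $r$, substituting $\psi_r(x)=(x-r)_+$ into the symmetrized identity and using the algebraic reduction $(x-r)_+=\tfrac12(x-r)+\tfrac12|x-r|$ together with $\omega+\omega_1=\omega_2+\omega_3$, the bracket collapses to $\tfrac12\bigl[|\omega-r|+|\omega_1-r|-|\omega_2-r|-|\omega_3-r|\bigr]$. I would then split the integration domain according to how many of the four energies lie above $r$ and handle each case by pairing configurations under $(\omega,\omega_1)\leftrightarrow(\omega_2,\omega_3)$, so that the simultaneous sign change of the bracket and of $A$ produces contributions of the same sign.

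I anticipate this final step to be the main difficulty: $A$ itself does not have a definite sign on the resonance manifold, so positivity cannot come pointwise but only after integration. The key technical point will be to exploit the factorization $A=ff_1f_2f_3\bigl[\tfrac1f+\tfrac1{f_1}-\tfrac1{f_2}-\tfrac1{f_3}\bigr]$ together with the positivity of the symmetric weight $\Xi\,ff_1f_2f_3$, and to recast the reduced integrand, after a change of variables tailored to the resonance manifold, in a manifestly non-negative form whose positivity is driven by the convexity of $\psi_r$.
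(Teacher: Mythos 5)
There is a genuine gap: your whole reduction is anchored to the fully symmetrized form $\tfrac{C_Q}{4}\iiiint\delta\,\Xi\,A\,[\phi+\phi_1-\phi_2-\phi_3]$ with the quartic quantity $A=f_2f_3(f+f_1)-ff_1(f_2+f_3)$, and you then try to coax positivity out of the indefinite-sign factor $A$. You correctly diagnose that $A$ has no pointwise sign and that the factorization $A=ff_1f_2f_3[\tfrac1f+\tfrac1{f_1}-\tfrac1{f_2}-\tfrac1{f_3}]$ does not correlate in sign with a generic $[\phi+\phi_1-\phi_2-\phi_3]$ --- but the proposal stops there, with only a hope that ``a change of variables tailored to the resonance manifold'' will make the integrand non-negative. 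That is precisely the step that fails: for $\phi$ unrelated to $\log f$ there is no cancellation mechanism of that type, and no amount of pairing under $(\omega,\omega_1)\leftrightarrow(\omega_2,\omega_3)$ fixes the sign, since that involution leaves $\Xi A[\phi+\phi_1-\phi_2-\phi_3]$ invariant rather than producing opposite-sign contributions.

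The paper's proof avoids this obstruction by \emph{not} working with $A$. Starting from \eqref{Lemma:TestFunction:E5}, one expands the quartic bracket and relabels each of the four monomials so that the weak form becomes the \emph{cubic} identity \eqref{Lemma:TestFunction:E6}: $\int\mathcal Q[f]\phi\,|k|\mho=C_Q\iiint ff_1f_2\,\Xi\,[-\phi(\omega)-\phi(\omega_1)+\phi(\omega_2)+\phi(\omega+\omega_1-\omega_2)]\mathbf 1_{\omega+\omega_1\ge\omega_2}$. Here the weight $ff_1f_2\,\Xi\ge 0$ has a definite sign, and the sign question is pushed entirely onto the test-function bracket. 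One then symmetrizes over the $3$ roles of $\{\omega,\omega_1,\omega_2\}$ (the $\omega_{Max}/\omega_{Mid}/\omega_{Min}$ decomposition in \eqref{Lemma:TestFunction:E7}), uses convexity to show the ``$\omega_{Max}$-on-the-output-side'' bracket is $\ge 0$ \eqref{Lemma:TestFunction:E9}, and handles the two remaining brackets together: writing $\mho_1'=\mho(\omega_{Max}+\omega_{Min}-\omega_{Mid})\le\mho_2'=\mho(\omega_{Max}+\omega_{Mid}-\omega_{Min})$ (the monotonicity assumption on $\mho$ from the Settings is essential here), one has $\mathrm{bracket}_1\mho_1'+\mathrm{bracket}_2\mho_2'=\mho_1'(\mathrm{bracket}_1+\mathrm{bracket}_2)+\mathrm{bracket}_2(\mho_2'-\mho_1')\ge 0$ because $\mathrm{bracket}_1+\mathrm{bracket}_2\ge 0$ (Jensen at $\omega_{Max}$) and $\mathrm{bracket}_2\ge 0$ (majorization), even though $\mathrm{bracket}_1$ alone is $\le 0$.

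So the two ideas you are missing are: (i) the cubic reformulation that replaces $A$ by the signed weight $ff_1f_2$, which is what makes pointwise sign analysis possible at all, and (ii) the monotonicity of $\mho$ in $\omega$, which compensates for the fact that $\mathrm{bracket}_1\le 0$ in the two ``incoming $\omega_{Max}$'' configurations and is built into the hypotheses of the theorem. Your kink reduction $(\omega-r)_+$ via mass/energy conservation is valid but is neither used nor needed in the paper; with the cubic form and the $\mho$ monotonicity the argument closes for all convex $\phi$ directly, and without them the kink case is exactly as hard as the general case.
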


\begin{proof}

We   define

\begin{align*}
	\omega_{Max}\left( \omega,\omega_{1},\omega_{2}\right) & =\max\left\{\omega,\omega_{1},\omega_{2}\right\} , \ 
	\omega_{Min}\left( \omega,\omega_{1},\omega_{2}\right)   =\min\left\{ \omega,\omega_{1},\omega_{2}\right\} , \\
	\omega_{Mid}\left( \omega,\omega_{1},\omega_{2}\right) & =\omega_{j}\in\left\{ \omega,\omega_{1},\omega_{2}\right\} \backslash\{\omega_{Max},\omega_{Min}\},
\end{align*}
with $j\in\left\{ 1,2,3\right\} .$ Moreover, we suppose that $|k_{Max}|$ is associated to $\omega_{Max}$, $|k_{Min}|$ is associated to $\omega_{Min}$, $|k_{Mid}|$ is associated to $\omega_{Mid}$.

By a symmetry argument, we rewrite \eqref{Lemma:TestFunction:E6} as 
\begin{equation}\label{Lemma:TestFunction:E7}
	\begin{aligned}
		&	\int_{\mathbb{R}_+}\mathrm{d}\omega \mathcal Q \left[ f\right]  \phi(\omega)|k| \mho\\
		\ =\ & C_Q\iiint_{\mathbb{R}_+^{3}}\mathrm{d}\omega_1\,\mathrm{d}\omega_2\,\mathrm{d}\omega f_1f_2f\mathbf{1}_{\omega+\omega_1-\omega_2\ge 0}\\
		&\times 	\frac{1}{3}\Big\{ [-\phi(\omega_{Max})-\phi(\omega_{Min})+\phi(\omega_{Mid})+\phi(\omega_{Max}+\omega_{Min}-\omega_{Mid})]\\
		&\ \ \ \ \times\Xi(\omega_{Max},\omega_{Min},\omega_{Mid},\omega_{Max}+\omega_{Min}-\omega_{Mid})  \\
		&+[-\phi(\omega_{Max})-\phi(\omega_{Mid})+\phi(\omega_{Min})+\phi(\omega_{Max}+\omega_{Mid}-\omega_{Min})]\\
		&\ \ \ \ \times\Xi(\omega_{Max},\omega_{Mid},\omega_{Min},\omega_{Max}+\omega_{Mid}-\omega_{Min})  \\
		&+[-\phi(\omega_{Min})-\phi(\omega_{Mid})+\phi(\omega_{Max})+\phi(\omega_{Min}+\omega_{Mid}-\omega_{Max})]\\
		&\ \ \ \ \times\Xi(\omega_{Min},\omega_{Mid},\omega_{Max},\omega_{Min}+\omega_{Mid}-\omega_{Max})\Big\}\\
		\ =\ & C_Q\iiint_{\mathbb{R}_+^{3}}\mathrm{d}\omega_1\,\mathrm{d}\omega_2\,\mathrm{d}\omega f_1f_2f\mathbf{1}_{\omega+\omega_1-\omega_2\ge 0}\\
		&\times 	\frac{1}{3}\Big\{ [-\phi(\omega_{Max})-\phi(\omega_{Min})+\phi(\omega_{Mid})+\phi(\omega_{Max}+\omega_{Min}-\omega_{Mid})]\\
		&\ \ \ \ \times\mho(\omega_{Max})\mho(\omega_{Min})\mho(\omega_{Mid})\mho(\omega_{Max}+\omega_{Min}-\omega_{Mid})|k_{Min}|  \\
		&+[-\phi(\omega_{Max})-\phi(\omega_{Mid})+\phi(\omega_{Min})+\phi(\omega_{Max}+\omega_{Mid}-\omega_{Min})]\\
		&\ \ \ \ \times\mho(\omega_{Max})\mho(\omega_{Min})\mho(\omega_{Mid})\mho(\omega_{Max}+\omega_{Mid}-\omega_{Min})|k_{Min}|  \\
		&+[-\phi(\omega_{Min})-\phi(\omega_{Mid})+\phi(\omega_{Max})+\phi(\omega_{Min}+\omega_{Mid}-\omega_{Max})]\\
		&\ \ \ \ \times\Xi(\omega_{Min},\omega_{Mid},\omega_{Max},\omega_{Min}+\omega_{Mid}-\omega_{Max})\mathbf{1}_{\omega_{Min}+\omega_{Mid}-\omega_{Max}\ge 0}\Big\}.
	\end{aligned}
\end{equation}
		Next, we will show that \begin{equation}\label{Lemma:TestFunction:E9}-\phi(\omega_{Min})-\phi(\omega_{Mid})+\phi(\omega_{Max})+\phi(\omega_{Min}+\omega_{Mid}-\omega_{Max})\ge 0,\end{equation}
	which can be seen by a simple proof from the convexity of $\phi$. We recall this standard proof below.
	Define the function $$\varphi\left( x\right) =\phi\left( 
	\frac{\omega_{Min}+\omega_{Mid}}{2}+x\right) +\phi\left( \frac{\omega_{Min}+\omega_{Mid}}{2}-x\right) $$ in which  $x\geq0,$ and $\frac{\omega_{Min}+\omega_{Mid}}{2}-x>0.$ We first consider the case when  $\phi\in C^{2}(\mathbb{R}_+)$, and obtain 
	$
	\varphi^{\prime}\left( 0\right)   = 0,  
	\varphi^{\prime\prime}\left( x\right)   \geq0.$
	We then deduce 
	$\varphi^{\prime}\left( x\right) \geq0$ when $x\geq0,$ yielding $-\phi(\omega_{Min})-\phi(\omega_{Mid})+\phi(\omega_{Max})+\phi(\omega_{Min}+\omega_{Mid}-\omega_{Max})\ge 0.$  The case when  $\phi$ is in $C(\mathbb{R}_+)$  can be proved by  performing an approximation argument. We then deduce
		\begin{equation}\label{Lemma:TestFunction:E7a}
		\begin{aligned}
			&	\int_{\mathbb{R}_+}\mathrm{d}\omega \mathcal Q \left[ f\right]  \phi(\omega)|k| \mho\\
			\ \ge \ & C_Q\iiint_{\mathbb{R}_+^{3}}\mathrm{d}\omega_1\,\mathrm{d}\omega_2\,\mathrm{d}\omega f_1f_2f\mathbf{1}_{\omega+\omega_1-\omega_2\ge 0}\\
			&\times 	\frac{1}{3}\Big\{ [-\phi(\omega_{Max})-\phi(\omega_{Min})+\phi(\omega_{Mid})+\phi(\omega_{Max}+\omega_{Min}-\omega_{Mid})]\\
			&\ \ \ \ \times\mho(\omega_{Max})\mho(\omega_{Min})\mho(\omega_{Mid})\mho(\omega_{Max}+\omega_{Min}-\omega_{Mid})|k_{Min}|  \\
			&+[-\phi(\omega_{Max})-\phi(\omega_{Mid})+\phi(\omega_{Min})+\phi(\omega_{Max}+\omega_{Mid}-\omega_{Min})]\\
			&\ \ \ \ \times\mho(\omega_{Max})\mho(\omega_{Min})\mho(\omega_{Mid})\mho(\omega_{Max}+\omega_{Mid}-\omega_{Min})|k_{Min}|\Big\}.
		\end{aligned}
	\end{equation}
	
	By our assumption,    $\phi$ is  convex, then 
	\begin{equation}
	\phi(\omega_{Max}+\omega_{Min}-\omega_{Mid}) \ +\ \phi(\omega_{Max}+\omega_{Mid}-\omega_{Min}) 	 \geq 2 \phi(\omega_{Max}),  
	\end{equation}
	
	yielding
		\begin{equation}\begin{aligned}
	&	-\phi(\omega_{Max})-\phi(\omega_{Min})+\phi(\omega_{Mid})+\phi(\omega_{Max}+\omega_{Min}-\omega_{Mid})\\
		\geq \ & -[-\phi(\omega_{Max})-\phi(\omega_{Mid})+\phi(\omega_{Min})+\phi(\omega_{Max}+\omega_{Mid}-\omega_{Min})].  \end{aligned}
		\end{equation}
	Multiplying the above inequality with $\mho(\omega_{Max}+\omega_{Min}-\omega_{Mid})$, we find
	\begin{equation}\begin{aligned}
			&	[-\phi(\omega_{Max})-\phi(\omega_{Min})+\phi(\omega_{Mid})+\phi(\omega_{Max}+\omega_{Min}-\omega_{Mid})]\mho(\omega_{Max}+\omega_{Min}-\omega_{Mid})\\
			\geq \ & -[-\phi(\omega_{Max})-\phi(\omega_{Mid})+\phi(\omega_{Min})+\phi(\omega_{Max}+\omega_{Mid}-\omega_{Min})]\mho(\omega_{Max}+\omega_{Min}-\omega_{Mid}),  \end{aligned}
	\end{equation}
leading to 
\begin{equation}\begin{aligned}
		&	[-\phi(\omega_{Max})-\phi(\omega_{Min})+\phi(\omega_{Mid})+\phi(\omega_{Max}+\omega_{Min}-\omega_{Mid})]\mho(\omega_{Max}+\omega_{Min}-\omega_{Mid})
		\\
		& +  [-\phi(\omega_{Max})-\phi(\omega_{Mid})+\phi(\omega_{Min})+\phi(\omega_{Max}+\omega_{Mid}-\omega_{Min})]\mho(\omega_{Max}+\omega_{Mid}-\omega_{Min})\\
		\geq \ & [-\phi(\omega_{Max})-\phi(\omega_{Mid})+\phi(\omega_{Min})+\phi(\omega_{Max}+\omega_{Mid}-\omega_{Min})]\\
		&\times[\mho(\omega_{Max}+\omega_{Mid}-\omega_{Min})-\mho(\omega_{Max}+\omega_{Min}-\omega_{Mid})].  \end{aligned}
\end{equation}
The same proof of \eqref{Lemma:TestFunction:E9} gives $-\phi(\omega_{Max})-\phi(\omega_{Mid})+\phi(\omega_{Min})+\phi(\omega_{Max}+\omega_{Mid}-\omega_{Min})\ge 0$. As thus, we obtain
\begin{equation}\begin{aligned}
		&	[-\phi(\omega_{Max})-\phi(\omega_{Min})+\phi(\omega_{Mid})+\phi(\omega_{Max}+\omega_{Min}-\omega_{Mid})]\mho(\omega_{Max}+\omega_{Min}-\omega_{Mid})
		\\
		& +  [-\phi(\omega_{Max})-\phi(\omega_{Mid})+\phi(\omega_{Min})+\phi(\omega_{Max}+\omega_{Mid}-\omega_{Min})]\mho(\omega_{Max}+\omega_{Mid}-\omega_{Min})\ 
		\geq \ 0,  \end{aligned}
\end{equation}
which, in combination with \eqref{Lemma:TestFunction:E7a}, implies
\begin{equation}\label{Lemma:TestFunction:E8}
	\begin{aligned}
			\int_{\mathbb{R}_+}\mathrm{d}\omega \mathcal Q \left[ f\right]  \phi(\omega)|k| \mho
		\ \ge \ & 0.
	\end{aligned}
\end{equation}

\end{proof}

\begin{proof}

	[Proof of the Theorem \ref{Theorem:Main}] 
	The global existence follows from Proposition \ref{Propo:GlobalWellPosed}.  The energy conservation \eqref{Theorem:Main:2} is straightforward. We will now prove the energy cascade \eqref{Theorem:Main:3}. Without loss of generality, we assume that the mass is always $1$ $$\int_{\mathbb{R}^3}\mathrm{d}k f(t,k)=\int_{\mathbb{R}_+}\mathrm{d}\omega f(t,\omega)|k|\mho=1.$$ 
	First, from \eqref{Lemma:TestFunction:E7}, we deduce
		\begin{equation}\label{Lemma:Theorem:E1}
		\begin{aligned}
			& 	\int_{\mathbb{R}_+}\mathrm{d}\omega\partial_t f  \phi(\omega)|k| \mho\\
			\ =\ & C_Q\iiint_{\mathbb{R}_+^{3}}\mathrm{d}\omega_1\,\mathrm{d}\omega_2\,\mathrm{d}\omega f_1f_2f\Xi(\omega,\omega_1,\omega_2,\omega+\omega_1-\omega_2)\mathbf{1}_{\omega+\omega_1-\omega_2\ge 0}\\
			&\times 	\frac{1}{3}\Big\{ [-\phi(\omega_{Max})-\phi(\omega_{Min})+\phi(\omega_{Mid})+\phi(\omega_{Max}+\omega_{Min}-\omega_{Mid})]\\
			&\ \ \ \ \times\mho(\omega_{Max})\mho(\omega_{Min})\mho(\omega_{Mid})\mho(\omega_{Max}+\omega_{Min}-\omega_{Mid})|k_{Min}|  \\
			&+[-\phi(\omega_{Max})-\phi(\omega_{Mid})+\phi(\omega_{Min})+\phi(\omega_{Max}+\omega_{Mid}-\omega_{Min})]\\
			&\ \ \ \ \times\mho(\omega_{Max})\mho(\omega_{Min})\mho(\omega_{Mid})\mho(\omega_{Max}+\omega_{Mid}-\omega_{Min})|k_{Min}|  \\
			&+[-\phi(\omega_{Min})-\phi(\omega_{Mid})+\phi(\omega_{Max})+\phi(\omega_{Min}+\omega_{Mid}-\omega_{Max})]\\
			&\ \ \ \ \times\Xi(\omega_{Min},\omega_{Mid},\omega_{Max},\omega_{Min}+\omega_{Mid}-\omega_{Max})\mathbf{1}_{\omega_{Min}+\omega_{Mid}-\omega_{Max}\ge 0}\Big\},
		\end{aligned}
	\end{equation}
	in which  $\phi\in C^{2}\left(  \left[  0,\infty\right)  \right)  $ is a test
	function satisfying the following properties
		\begin{equation}\label{Lemma:Theorem:E0}\text{for any }\omega\in\left[  0,\infty\right):\,\,\,\phi\left(  \omega\right)  \geq0,\ \ \ \ \phi^{\prime\prime}\left(  \omega\right)
		<-C_\phi<0.	\end{equation}
We first observe that 
	\begin{equation}\label{Lemma:Concave:1}
		\begin{aligned}
			&[-\phi(\omega_{Min})-\phi(\omega_{Mid})+\phi(\omega_{Max})+\phi(\omega_{Min}+\omega_{Mid}-\omega_{Max})]\\
			&\ \ \ \ \times\Xi(\omega_{Min},\omega_{Mid},\omega_{Max},\omega_{Min}+\omega_{Mid}-\omega_{Max})\Big\}\\	=\ &\int_{0}^{\omega_{Max}-\omega_{Min}}\mathrm{d}t_1\int_{0}^{\omega_{Max}-\omega_{Mid}}\mathrm{d}t_2\phi''(t_1+t_2+\omega_{Min})\\
			&\ \ \ \ \times\Xi(\omega_{Min},\omega_{Mid},\omega_{Max},\omega_{Min}+\omega_{Mid}-\omega_{Max})\ \le 0.
		\end{aligned}
	\end{equation}

Note that in our definition of mild solutions, $\phi$ is chosen to be in $C_c^2([0,\infty))$ but we can choose $\phi$ as above by using an approximated sequence.

	Now, we have
	\begin{equation}\label{Lemma:Concave:2}
		\begin{aligned}
			& 	  [-\phi(\omega_{Max})-\phi(\omega_{Min})+\phi(\omega_{Mid})+\phi(\omega_{Max}+\omega_{Min}-\omega_{Mid})]\mho(\omega_{Max}+\omega_{Min}-\omega_{Mid})  \\
			&+[-\phi(\omega_{Max})-\phi(\omega_{Mid})+\phi(\omega_{Min})+\phi(\omega_{Max}+\omega_{Mid}-\omega_{Min})]\mho(\omega_{Max}+\omega_{Mid}-\omega_{Min})\\
			=\		& 	  -\int_{0}^{\omega_{Mid}-\omega_{Min}}\mathrm{d}t\int_{0}^{\omega_{Max}-\omega_{Mid}}\mathrm{d}t_0\mho(\omega_{Max}+\omega_{Min}-\omega_{Mid}) \phi''(\omega_{Min}+t+t_0) \\
			&+\int_{0}^{\omega_{Mid}-\omega_{Min}}\mathrm{d}t\int_{0}^{\omega_{Max}-\omega_{Min}}\mathrm{d}t_0\mho(\omega_{Max}-\omega_{Min}+\omega_{Mid}) \phi''(\omega_{Min}+t+t_0)\\
			=\		& 	  \int_{0}^{\omega_{Mid}-\omega_{Min}}\mathrm{d}t\int_{0}^{\omega_{Max}-\omega_{Mid}}\mathrm{d}t_0 \phi''(\omega_{Min}+t+t_0) \\
			&\times[\mho(\omega_{Max}-\omega_{Min}+\omega_{Mid})-\mho(\omega_{Max}+\omega_{Min}-\omega_{Mid})]\\
			&+\int_{0}^{\omega_{Mid}-\omega_{Min}}\mathrm{d}t\int_{0}^{\omega_{Mid}-\omega_{Min}}\mathrm{d}t_0\mho(\omega_{Max}-\omega_{Min}+\omega_{Mid}) \phi''(\omega_{Min}+t+t_0)\\
			\le\ &	\int_{0}^{\omega_{Mid}-\omega_{Min}}\mathrm{d}t\int_{0}^{\omega_{Mid}-\omega_{Min}}\mathrm{d}t_0\mho(\omega_{Max}-\omega_{Min}+\omega_{Mid}) \phi''(\omega_{Min}+t+t_0)\\	\le\ & -[\omega_{Mid}-\omega_{Min}]^2C_\phi {C}_\mho^2\mho(\omega_{Max}-\omega_{Min}+\omega_{Mid}) .
		\end{aligned}
	\end{equation}

	Let $\delta$ be a small constant and $\rho$ be a large constant. We suppose $\omega_{Min}\in\left[  0,\delta^{2}\right]  $ and $\omega_{Mid}%
	,\omega_{Max}\in\left(  2\delta^{2},\frac{20\rho}{\delta}\right]  .$ Then by \eqref{Lemma:Concave:1} and \eqref{Lemma:Concave:2} 
		\begin{equation}\label{Lemma:Theorem:E7}
		\begin{aligned}
			& 	\frac{1}{3}\Big\{ [-\phi(\omega_{Max})-\phi(\omega_{Min})+\phi(\omega_{Mid})+\phi(\omega_{Max}+\omega_{Min}-\omega_{Mid})]\\
			&\ \ \ \ \times\mho(\omega_{Max})\mho(\omega_{Min})\mho(\omega_{Mid})\mho(\omega_{Max}+\omega_{Min}-\omega_{Mid})|k_{Min}|  \\
			&+[-\phi(\omega_{Max})-\phi(\omega_{Mid})+\phi(\omega_{Min})+\phi(\omega_{Max}+\omega_{Mid}-\omega_{Min})]\\
			&\ \ \ \ \times\mho(\omega_{Max})\mho(\omega_{Min})\mho(\omega_{Mid})\mho(\omega_{Max}+\omega_{Mid}-\omega_{Min})|k_{Min}|  \\
			&+[-\phi(\omega_{Min})-\phi(\omega_{Mid})+\phi(\omega_{Max})+\phi(\omega_{Min}+\omega_{Mid}-\omega_{Max})]\\
			&\ \ \ \ \times\Xi(\omega_{Min},\omega_{Mid},\omega_{Max
			},\omega_{Min}+\omega_{Mid}-\omega_{Max})\mathbf{1}_{\omega_{Min}+\omega_{Mid}-\omega_{Max}\ge 0}\Big\}\\
		\le\ 	& -	\mathfrak{C}_2\delta^{4}\mho(\delta^2) ,
		\end{aligned}
	\end{equation}
	where $\mathfrak{C}_2$  is a universal constant, yielding
	
	\begin{equation}\label{Lemma:Theorem:E8}
	\int_{\mathbb{R}_+}\mathrm{d}\omega\partial_t f (t,\omega) \phi(\omega)|k| \mho  \leq-\mathfrak{C}_2\delta^{4}\mho(\delta^2) \left(  \int_{\left(
		2\delta^{2},\frac{20\rho}{\delta}\right]  }\mathrm{d}\omega f(t,\omega) 
	\right)  ^{2}\int_{\left[  0,\delta^{2}\right]  }\mathrm{d}\omega f(t,\omega).
	\end{equation}
	By Proposition \ref{Lemma:Support}, there exists a time $t'$ such that 
		\begin{equation}\label{Lemma:Theorem:E9}
		\int_{\left[  0,\delta^{2}/2\right]  }\mathrm{d}\omega f(t',\omega)|k|\mho  \geq
		\mathfrak{C}_3\ \ , 
	\end{equation}
where $\mathfrak{C}_3>0$ is a constant. Using the  test function $
\beth	(k)=\phi(\omega_k) =\left(   \delta^2  - \omega_k\right)  _{+}%
$ in Lemma \ref{Lemma:Monotonicity}, we find
\begin{equation}\label{Lemma:Theorem:E12:b}
	\begin{aligned}
&	\delta^2\int_{0}^{\delta^2}\mathrm d\omega f\left(  t,\omega\right)   
	 |k| \mho \ \geq \	\int_{0}^{\infty}\mathrm d\omega f\left(  t,\omega\right)  \left(   \delta^2 - \omega\right)
	_{+}|k| \mho \\
\ \geq\ 	 &\int_{0}^{\infty}\mathrm d\omega f\left( t',\omega\right)   \left( \delta^2 -\omega\right)  _{+} |k| \mho\ \geq\ \int_{0}^{\delta^2/2}\mathrm d\omega f\left( t',\omega\right)   \left( \delta^2 -\omega\right)  _{+} |k| \mho\\ \geq\ & \frac{\delta^2}{2} \int_{0}^{\delta^2/2}\mathrm d\omega f\left( t',\omega\right)  |k| \mho,\end{aligned}
\end{equation} leading to
	\begin{equation}\label{Lemma:Theorem:E10}
	\int_{\left[  0,\delta^{2}\right]  }\mathrm{d}\omega f(t,\omega)\  \geq \	\mathfrak{C}_4\int_{\left[  0,\delta^{2}/2\right]  }\mathrm{d}\omega f(t',\omega)|k|\mho \ \geq\
	\mathfrak{C}_5\ \ ,\ \ t\geq t',
\end{equation}
for some constants $\mathfrak{C}_4,\mathfrak{C}_5>0$.

 Inequality \eqref{Lemma:Theorem:E10}, in combination with \eqref{Lemma:Theorem:E8}, implies
	\begin{equation}\label{Lemma:Theorem:E11}
	\int_{\mathbb{R}_+}\mathrm{d}\omega\partial_t f(t,\omega)  \phi(\omega)|k| \mho  \leq-\mathfrak{C}_6\delta^{4}\mho(\delta^2) \left(  \int_{\left(
		2\delta^{2},\frac{20\rho}{\delta}\right]  }\mathrm{d}\omega f(t,\omega) 
	\right)  ^{2}\ \ ,\ \ t\geq t',
\end{equation}
for some constant  $\mathfrak{C}_6>0$.

Next, we use the  test function $
\beth	(k)=\phi(\omega_k) =\left(  1-\Theta\omega_k\right)  _{+}%
$ in Lemma \ref{Lemma:Monotonicity}
	where $\Theta>0$ will be specified later. It then follows from inequality \eqref{Lemma:Monotonicity:1} that
	\begin{equation}\label{Lemma:Theorem:E12}
		\int_{0}^{\infty}\mathrm d\omega f\left(  t,\omega\right)  \left(  1-\Theta\omega\right)
		_{+}|k| \mho  \geq\int_{0}^{\infty}\mathrm d\omega f\left( 0,\omega\right)   \left(
		1-\Theta\omega\right)  _{+} |k| \mho.
	\end{equation}
	Let us choose $\Theta \rho\leq1$  
	\begin{equation}\label{Lemma:Theorem:E13}
	\int_{0}^{\frac{1}{\Theta}}\mathrm d\omega f\left(  t,\omega\right) |k| \mho    \geq \left(  1-\Theta \rho \right)\int_{0}%
	^{\rho}\mathrm d\omega f\left(0,\omega\right) |k| \mho.
	\end{equation}
We assume that $\rho$ is large enough such that 
	\begin{equation}\label{Lemma:Theorem:E14}
	 \int_{0}%
	^{\rho}\mathrm d\omega f\left(0,\omega\right) |k| \mho\ \ge  \ 1-\frac{\delta}{20},
\end{equation}
	which, in combination with \eqref{Lemma:Theorem:E13}, implies
	\begin{equation}\label{Lemma:Theorem:E15}
	\int_{0}^{\frac{1}{\Theta}}\mathrm d\omega f\left(  t,\omega\right) |k| \mho    \geq \left(  1-\Theta \rho \right)\left( 1-\frac{\delta}{20} \right)\ \ge \ 1-\Theta \rho -\frac{\delta}{20}.
	\end{equation}
By setting $\Theta=\frac{\delta}{20\rho}$, we obtain
	\begin{equation}\label{Lemma:Theorem:E16}
	\int_{0}^{\frac{20\rho}{\delta}}\mathrm d\omega f\left(  t,\omega\right) |k| \mho\  \ge  \ 1 -\frac{\delta}{10}.
\end{equation}
	Next, we will prove that there exists $t''>t'$ such that 
	\begin{equation}\label{Lemma:Theorem:E17}
		\int^{2\delta^2}_0\mathrm d\omega f\left(  t'',\omega\right) |k| \mho  \ \ge\   1 -\frac{\delta}{20}.
	\end{equation}
	Suppose the contrary that 
	\begin{equation}\label{Lemma:Theorem:E18}
		\int^{2\delta^2}_0\mathrm d\omega f\left(  t,\omega\right) |k| \mho  \ \le\   1 -\frac{\delta}{20}\ \ ,\ \ \forall t> t',
	\end{equation}
	which, by \eqref{Lemma:Theorem:E16}, yields
		\begin{equation}\label{Lemma:Theorem:E19}
		\int_{2\delta^2}^{\frac{20\rho}{\delta}}\mathrm d\omega f\left(  t,\omega\right) |k| \mho\  \ge  \ \frac{\delta}{20}\ \ ,\ \ \forall t> t'.
	\end{equation}
	Combining \eqref{Lemma:Theorem:E11} and \eqref{Lemma:Theorem:E19}, we find
	\begin{equation}\label{Lemma:Theorem:E20}
	\int_{\mathbb{R}_+}\mathrm{d}\omega\partial_t f(t,\omega)  \phi(\omega)|k| \mho  \leq-\mathfrak{C}_7\delta^{6}\mho(\delta^2) , \ \ \forall t> t',
\end{equation}
yielding a contradiction. As a result,  \eqref{Lemma:Theorem:E17} is proved, by the contradiction argument.

	Applying again   \eqref{Lemma:Theorem:E13}, we obtain 
 
	$$\int_{\left[  0,\frac{20\delta^2}{\delta}\right)
	} \mathrm d\omega f\left(t,\omega\right) |k| \mho \ \geq\  \left(  1-\frac{\delta}{20\delta^2} 2\delta^2 \right)\int_{0}%
^{2\delta^2}\mathrm d\omega f\left(t'',\omega\right) |k| \mho $$ $$\geq\left(1 -\frac{\delta}{20}\right)\left(1 -\frac{\delta}{10}\right)\geq 1-\frac{3\delta}{20},$$
	for any $t>t''.$ This implies
	\begin{equation}\label{Lemma:Theorem:E21}
		\int_{ [{20\delta },\infty)  
		} d\omega f\left(t,\omega\right) |k| \mho \ \leq\  \frac{3\delta}{20},
	\end{equation}
	for any $t>t''.$ 
	
Next, let $\varphi$ be any bounded test function in $C([0,\infty])$, we bound for $t>t''$ 

	\begin{equation}\label{Lemma:Theorem:E22}\begin{aligned}
&	\left\vert \int_ {[0,\infty) }\mathrm d\omega f\left(t,\omega\right) |k| \mho  \varphi\left(  \omega\right)
	-\varphi\left(  0\right)  \right\vert\\
	\le &  \left\vert \int_ {[20\delta,\infty) } \mathrm d\omega f\left(t,\omega\right) |k| \mho  \varphi\left(  \omega\right)
	-\int_ {[20\delta,\infty) }\mathrm d\omega f\left(t,\omega\right) |k| \mho\varphi\left(  0\right)  \right\vert\\
	& \ + \  \left\vert \int_ {[0,20\delta) }\mathrm d\omega f\left(t,\omega\right) |k| \mho  \varphi\left(  \omega\right)
	-\int_ {[0,20\delta)}\mathrm d\omega f\left(t,\omega\right) |k| \mho\varphi\left(  0\right)  \right\vert\\
\le 	&   \sup_{\omega\in\left[
		0,20\delta \right]  }\left\vert \varphi\left(  \omega\right)
	-\varphi\left(  0\right)  \right\vert +\frac{3\delta}{20}\sup_{\omega
	}\left\vert \varphi\left(  \omega\right)
-\varphi\left(  0\right)  \right\vert\\
\le 	&   \sup_{\omega\in\left[
	0,20\delta \right]  }\left\vert \varphi\left(  \omega\right)
-\varphi\left(  0\right)  \right\vert +\frac{3\delta}{10}\sup_{\omega
}\left\vert \varphi\left(  \omega\right)\right\vert,\end{aligned}
	\end{equation}
	which implies
		\begin{equation}\label{Lemma:Theorem:E23}\begin{aligned}
			&\lim_{t\to\infty}	\left\vert \int_ {[0,\infty) } \mathrm d\omega f\left(t,\omega\right) |k| \mho  \varphi\left(  \omega\right)
			-\varphi\left(  0\right)  \right\vert \ = \ 0.\end{aligned}
	\end{equation}
	Inequality \eqref{Lemma:Theorem:E22} also implies that for any $\mathfrak R>0$, 
		\begin{equation}\label{Lemma:Theorem:E24}\begin{aligned}
			&\lim_{t\to\infty}\int_{0}^\infty	\mathrm  d\omega f\left(t,\omega\right) |k| \mho  \ \omega \chi_{\{|k|\in[0,\mathfrak R]\}}
			  \ = \ 0,\end{aligned}
	\end{equation}
yielding
	\begin{equation}\label{Lemma:Theorem:E25}\begin{aligned}
		&\lim_{t\to\infty}\int_{\mathbb{R}^3}	\mathrm  dk f\left(t,k\right)   \omega(k) \chi_{B(O,\mathfrak R)}(k)
		\ = \ 0,\end{aligned}
\end{equation}
which is the cascade of energy towards infinity.

\end{proof}

\bibliographystyle{plain}

\bibliography{WaveTurbulence}

\end{document}